\DeclareRobustCommand{\SkipTocEntry}[5]{} %To allow for omitting sections from the table of contents, see http://www.ams.org/faq?faq_id=238
\crefname{lemma}{lemma}{lemmata}
\Crefname{lemma}{Lemma}{Lemmata}
\theoremstyle{plain}                          
\newtheorem{theorem}{Theorem}[section]
\newtheorem{lemma}[theorem]{Lemma}
\theoremstyle{definition}
\newtheorem{definition}[theorem]{Definition}
\newtheorem{prop-defin}[theorem]{Proposition-definition} 
\newtheorem{example}[theorem]{Example}
\newtheorem*{exercise}{Exercise}
\newtheorem*{question}{Question}
\newtheorem*{claim}{Claim}
\theoremstyle{remark}
\newtheorem{remark}[theorem]{Remark}
\let\frontmatter\relax
\def\mainmatter{\def\baselinestretch{1.1}\normalfont \setlength{\parskip}{0.5em}}
\numberwithin{equation}{section}
\def\l@subsection{\@tocline{2}{0pt}{2.5pc}{5pc}{}}
\renewcommand{\tocsection}[3]{%
  \indentlabel{\@ifnotempty{#2}{\bfseries\ignorespaces#1 #2.~}}\bfseries#3}
\renewcommand{\tocsubsection}[3]{%
 \indentlabel{\@ifnotempty{#2}{\ignorespaces#1 #2.~}}#3}
\renewcommand{\section}{\@startsection
{section}%                   % the name
{1}%                         % the level
{\z@}%                       % the indent / 0mm
{-1\baselineskip}%            % the before skip / -3.5ex \@plus -1ex \@minus -.2ex
{0.8\baselineskip}%          % the after skip / 2.3ex \@plus .2ex
{\centering\scshape\large}} % the style
\renewcommand{\subsection}{\@startsection
{subsection}%                   % the name
{2}%                         % the level
{\z@}%                       % the indent / 0mm
{-1.2\baselineskip}%            % the before skip / -3.5ex \@plus -1ex \@minus -.2ex
{0.5\baselineskip}%          % the after skip / 2.3ex \@plus .2ex
{\normalfont \bf \normalsize}} % the style
\renewcommand{\subsubsection}{\@startsection
{subsubsection}%                   % the name
{3}%                         % the level
{\z@}%                       % the indent / 0mm
{-1\baselineskip}%            % the before skip / -3.5ex \@plus -1ex \@minus -.2ex
{0.5\baselineskip}%          % the after skip / 2.3ex \@plus .2ex
{\normalfont \it \normalsize}} % the style
\DeclareMathOperator*{\Res}{Res}
\newcommand{\eprint}[1]{\href{http://arxiv.org/abs/#1}{\texttt{arXiv\string:\allowbreak#1}}}
\newcommand{\I}{\mathcal{I}}
\newcommand{\Ra}{\text{Ram}}
\newcommand{\DA}{\mathcal{D}_A}
\newcommand{\DAh}{\mathcal{D}_A^{\hbar}}
\newcommand{\DAhC}{\widehat{\mathcal{D}}_A^{\hbar}}
\newcommand{\bea}{\begin{eqnarray}}
\newcommand{\eea}{\end{eqnarray}}
\newcommand{\beq}{\begin{equation}}
\newcommand{\eeq}{\end{equation}}
\begin{document}
\frontmatter

\title{Les Houches lecture notes on topological recursion}

\author{Vincent Bouchard}
\address{Department of Mathematical \& Statistical Sciences,
	University of Alberta, 632 CAB\\
	Edmonton, Alberta, Canada T6G 2G1}
\email{vincent.bouchard@ualberta.ca}

\begin{abstract}
You may have seen the words ``topological recursion'' mentioned in papers on matrix models, Hurwitz theory, Gromov-Witten theory, topological string theory, knot theory, topological field theory, JT gravity, cohomological field theory, free probability theory, gauge theories, to name a few. The goal of these lecture notes is certainly not to explain all these applications of the topological recursion framework. Rather, the intention is to provide a down-to-earth (and hopefully accessible) introduction to topological recursion itself, so that when you see these words mentioned, you can understand what it is all about. These lecture notes accompanied a series of lectures at the Les Houches school ``Quantum Geometry (Mathematical Methods for Gravity, Gauge Theories and Non-Perturbative Physics)'' in Summer 2024. 
\end{abstract}

\maketitle
\mainmatter

\section{Introduction}

Topological recursion was originally obtained by Eynard and Orantin in \cite{EO07,EO08}, inspired by previous work with Chekhov \cite{CE06}, as a recursive method for computing correlators of Hermitian matrix models. However, what made \cite{EO07} such an important paper is the authors' deep insight that the topological recursion framework may be much more general than previously anticipated. The authors proposed that, given any spectral curve, topological recursion can be used to construct a sequence of correlators that may have interesting applications, regardless of whether there is a matrix model behind the scenes or not. In essence, Eynard and Orantin proposed in \cite{EO07} a framework for constructing collections of objects attached to spectral curves, and asked the mathematical physics community: are these objects useful?

It turns out that topological recursion now has applications in a wide range of areas in mathematics and physics. You may have seen the words ``topological recursion'' mentioned in papers on matrix models, Hurwitz theory, Gromov-Witten theory, topological string theory, knot theory, topological field theory, JT gravity, cohomological field theory, free probability theory, gauge theories, to name a few. The goal of these lecture notes is certainly not to explain all these applications of the topological recursion framework. Rather, the intention is to provide an introduction to topological recursion itself, so that when you see these words mentioned, you can understand what it is all about.

These notes accompanied a series of three 90 minute lectures at the Les Houches school in Quantum Geometry in the summer of 2024. As such, they are rather limited in scope. My intention is to provide a down-to-earth (and hopefully accessible) introduction to the topological recursion framework. Given the extensive literature on the subject, and the fact that this research area often involves lengthy technical calculations, I hope that having an introductory set of lecture notes on topological recursion itself may be of use for newcomers to the subject.

Instead of introducing the framework as it was first formulated by Eynard and Orantin, I decided to start from the point of view of Airy structures. Airy structures are particular classes of differential constraints that naturally generalize the Virasoro constraints appearing in Witten's conjecture. Airy structures were first proposed by Kontsevich and Soibelman in 2017 as an algebraic reformulation of topological recursion \cite{KS17} (see also \cite{ABCO17}) . I will however introduce Airy structures following the approach in \cite{BCJ22}. I find that Airy structures provide a clean, clear formulation of the type of recursion relations that we are interested in, and, as such, a good starting point to dive into the theory of topological recursion. The theory of Airy structures (or Airy ideals, as I like to call them) is explored in Section \ref{s:airy}.

I will then introduce the original topological recursion framework, as a residue formulation on spectral curves, in Section \ref{s:TR}. I tried to be fairly general, but at the same time remain accessible and pedagogical. For instance, I focused mostly on spectral curves with simple ramification points, which is the case originally covered in \cite{EO07}, to avoid the complicated and technical combinatorics that obscure the general case (see \cite{BE13,BE17,BBCCN18,BKS20,BBCKS23}).

In Section \ref{s:loop} I connect the two frameworks via loop equations. Loop equations are in fact the origin of topological recursion. In matrix models, loop equations are also known as Schwinger-Dyson equations, and topological recursion was obtained as the relevant solution of loop equations that calculates the correlators of the matrix model. More abstractly, it can be shown that topological recursion is the unique solution to the loop equations that satisfies a particular property known as the projection property. This is explained in Section \ref{s:loop}. 

The connection with Airy structures goes as follows. After imposing the projection property on correlators, loop equations can be reformulated as differential constraints for an associated partition function. Those differential constraints can be shown to generate an Airy ideal. As a result, topological recursion can be thought of as a particular case of the differential constraints provided by Airy structures. In fact, the picture is very nice. Topological recursion is a method for constructing correlators on spectral curves; it does so via residue calculations locally at the ramification points of the spectral curve. Through the connection via loop equations, we see that topological recursion can be recast as attaching a collection of Virasoro constraints (or more generally $\mathcal{W}$-constraints) at each ramification point of the spectral curve. Neat!

Finally, I explore in Section \ref{s:connections} a few relations between topological recursion and the rest of the world. What makes topological recursion so interesting in particular applications is that it provides a direct connection  between various topics, such as integrability, enumerative geometry, differential constraints, cohomological field theory, quantum curves, WKB analysis and resurgence. These established connections may lead to new results in a particular context of application.

Given the length of this series of lectures, it is impossible to cover all connections. Instead, I decided in Section \ref{s:connections} to explore a little bit two of them, namely enumerative geometry and quantum curves. I already discuss the intimate connection between enumerative geometry on the moduli space of curves and topological recursion in Section \ref{s:TR}, but in Section \ref{s:connections} I explore further connections with enumerative geometry, such as Hurwitz theory and Gromov-Witten theory. In particular, I explain how the expansion of the same correlators at different points on the spectral curve may connect distinct enumerative geometric interpretations and yield interesting results such as ELSV-type formulae for Hurwitz numbers and localization formulae for Gromov-Witten invariants. I also briefly explore in Section \ref{s:connections} the connection between topological recursion and quantum curves. In particular, resurgence and Borel resummation of WKB solutions to quantum curves obtained via topological recursion is a very active area of research, which I briefly mention.

\subsection{Word of caution}

Given the limited amount of time in this lecture series, these lecture notes are limited in scope. One could easily write a whole book on topological recursion and applications (and perhaps one should do that!), but this is not the goal of these lecture notes. To remain fairly concise, especially given my tendency to be rather verbose, I had to make choices. 

I decided to stay focused on the topological recursion framework itself. As such, I only briefly mention applications of topological recursion, and omit entirely many topics of interest. For instance, I do not discuss at all knot invariants, cohomological field theory, free probability, matrix models, JT gravity, gauge theory and Whittaker vectors, to name a few. I only briefly mention the very important connection with integrability. I also do not discuss many generalizations of topological recursion and Airy structures, such as non-commutative topological recursion, $Q$-deformed topological recursion, geometric recursion, super Airy structures and super topological recursion, etc.

This may have been a poor decision, since after all what makes topological recursion interesting is its applications and the connections that it provides. But to understand the applications of topological recursion, one should first understand what topological recursion is all about in the first place. I hope that these lecture notes may be little bit helpful in this direction.

I also apologize in advance for the many papers that I did not cite in this review. The literature surrounding topological recursion is now rather extensive, and thus it is impossible to do justice to all the interesting papers out there. I am also aware that my references throughout the text may be skewed towards my own papers; this is not because I think they are particularly interesting, it's just that these are the ones  I know the best! :-)

\subsubsection*{Acknowledgments}

I would like to thank the Les Houches School of Physics for a fantastic learning environment and in particular the organizers of the summer school ``Quantum Geometry (Mathematical Methods for Gravity, Gauge  Theories and Non-Perturbative Physics)'' for the invitation to give this series of lectures. I would also like to thank Rapha\"el Belliard, Swagata Datta, Bertrand Eynard, Alessandro Giacchetto and Reinier Kramer for discussions and comments on some aspects of these lectures and notes. Finally, I would like to acknowledge support from the  National Science and Engineering Research Council of Canada.

\section{Airy ideals (Airy structures)}

\label{s:airy}

\subsection{Introduction: Witten's conjecture}

Witten's conjecture \cite{Wi91}, which was proved by Kontsevich in \cite{Ko91}, is perhaps one of the most celebrated result in physical mathematics. Witten was interested in studying two-dimensional quantum gravity. His deep insight was that two theories describing the same physics should actually be the same; more precisely, they should have the same partition function.

One of the two models of two-dimensional quantum gravity he was interested in relates the partition function to a fascinating problem in enumerative geometry, namely intersection theory over the moduli space of curves. More precisely, using the conventions that will be relevant later on, the partition function $Z$ of the theory can be expanded as
\begin{equation}\label{eq:ZZ}
Z = \exp\left( \sum_{\substack{g \in \mathbb{N}, n \in \mathbb{N}^* \\ 2g-2+n > 0}} \frac{\hbar^{2g-2+n}}{n!}  \sum_{k_1, \ldots, k_n \in \mathbb{N}}  F_{g,n}[2k_1+1, \ldots,2k_n+1]  x_{2k_1+1} \cdots x_{2k_n+1}\right),
\end{equation}
where the variables $x_1, x_3, x_5, \ldots$ are just formal variables for us. The coefficients $F_{g,n}[2k_1+1, \ldots,2k_n+1]$ have a natural intepretation as integrals of particular cohomology classes (so-called ``$\psi$-classes'') over the compactified moduli space of curves:
\begin{equation}
F_{g,n}[2k_1+1, \ldots,2k_n+1] = \prod_{i=1}^n (2 k_i+1)!! \int_{\overline{\mathcal{M}}_{g,n}} \psi_1^{k_1} \cdots \psi_n^{k_n},
\end{equation}
with the double factorial defined by
\begin{equation}
(2k+1) !! = (2k+1) (2k-1) (2k-3) \cdots (3)(1).
\end{equation}
I will not define these objects here; I refer the reader to the lecture series on moduli spaces of Riemann surfaces at this school \cite{GL24}.

The second model of two-dimensional quantum gravity relates the partition function to the so-called KdV integrable system, which is an infinite system of partial differential equations in the variables $x_a$. A priori, the KdV integrable system appears to have nothing to do with the physics at hand; the KdV equation, which is the first equation in the hierarchy, is used as a mathematical model of waves on shallow water surfaces! But this particular model of quantum gravity states that the partition function $Z$ of the theory should be a particulat tau-function for the KdV integrable system. What this means is that $\partial_{x_1}^2 \log Z$ should be a solution of the KdV integrable system, and not just any solution; it should be the unique solution that satisfies an initial condition known as the string equation.

Witten's insight is that both theories should be describing the same partition function! Therefore, the $Z$ of \eqref{eq:ZZ}, which can be understood as a generating series for intersection numbers on $\overline{\mathcal{M}}_{g,n}$, should be a tau-function for the KdV integrable system. What a remarkable and unexpected statement, which was ultimately proved by Kontsevich using matrix models (another subject of a lecture series at this school).

There is a third way to rephrase Witten's conjecture, which was proposed by Dijkgraaf-Verlinde-Verlinde \cite{DVV91}. Instead of writing down the KdV integrable system for which $Z$ is a tau-function of, one can instead write directly an infinite set of differential constraints for $Z$, which take the form $L_k Z = 0$, $k \in \mathbb{Z}_{\geq -1}$, for some differential operators $L_k$ in the variables $x_a$ (we will see the explicit form of these operators later). Moreover, these operators are not random; they form a representation of a subalgebra of the Virasoro algebra:
\begin{equation}
[L_m, L_n] = \hbar^2 (m-n) L_{m+n}, \qquad m,n \geq -1.
\end{equation}
What is great with these differential constraints is that they uniquely fix $Z$. Furthermore, by applying the operators on $Z$, one obtains a very explicit recursive formula for the coefficients $F_{g,n}$, which are (up to rescaling) intersection numbers on $\overline{\mathcal{M}}_{g,n}$.

These differential constraints and the resulting recursion relations are what Airy structures generalize. We move away from the particular geometric problem studied by Witten, and ask a much more general question. Given a general partition function, which we can think of as a generating series for some coefficients $F_{g,n}$ (which may have a particular interpretation in a given geometric or physical problem), when is it the case that there exists differential constraints that uniquely fix the partition function recursively?

\subsection{Formulating differential constraints more generally}

Our main of object of study will be what we will call a ``partition function'':
\begin{equation}\label{eq:pf}
Z = \exp \left(\sum_{k=1}^\infty \hslash^{k} q^{(k+2)}(x_A)\right),
\end{equation}
where the $ q^{(k+2)} (x_A) $ are polynomials of degree $ \leq k+2 $ in a set of variables $ x_A $ (normalized such that $q^{(k+2)}(0) = 0$). We can decompose the polynomials $q^{(k+2)}(x_A)$ into homogeneous polynomials $F_{g,n}(x_A)$ of degree $n$:\footnote{In these notes, we use $\mathbb{N}$ to denote the set of non-negative integers, and $\mathbb{N}^*$ to denote the set of positive integers.}
\begin{equation}
q^{(k+2)}(x_A) = \sum_{\substack{g \in \frac{1}{2} \mathbb{N},  n \in \mathbb{N}^* \\ 2g-2+n=k}} F_{g,n}(x_A), 
\end{equation}
so that the partition function is rewritten as
\begin{equation}\label{eq:pfFgn}
Z = \exp\left( \sum_{\substack{g \in \frac{1}{2} \mathbb{N}, n \in \mathbb{N}^* \\ 2g-2+n > 0}} \hbar^{2g-2+n} F_{g,n}(x_A) \right).
\end{equation}
We also often expand the homogeneous polynomials $F_{g,n}(x_A)$ as
\begin{equation}
F_{g,n}(x_A) = \sum_{k_1, \ldots, k_n \in A} \frac{1}{n!} F_{g,n}[k_1, \ldots,k_n]  x_{k_1} \cdots x_{k_n}
\end{equation}
which brings the partition function in the standard form
\begin{equation}\label{eq:pfFgncoeff}
Z = \exp\left( \sum_{\substack{g \in \frac{1}{2} \mathbb{N}, n \in \mathbb{N}^* \\ 2g-2+n > 0}} \frac{\hbar^{2g-2+n}}{n!}  \sum_{k_1, \ldots, k_n \in A}  F_{g,n}[k_1, \ldots,k_n]  x_{k_1} \cdots x_{k_n}\right).
\end{equation}

A natural question is to ask whether there exists non-trivial differential operators $P$ in the variables $x_A$ and that are polynomial in $\hbar$ such that $P Z = 0$. For a generic partition function $Z$, there will be no such non-zero operators. But for some specific choices of partition functions, such operators may exist. 

The reason that this is interesting is the following. On the one hand, if such an operator $P$ exists, the differential constraint $P Z = 0$ leads to recursion relations for the coefficients $F_{g,n}[k_1,\ldots,k_n]$, since the operator $P$ is polynomial in $\hbar$ while the argument of the exponential is a formal series in $\hbar$. On the other hand, in general we will be interested in calculating the coefficients $F_{g,n}[k_1,\ldots,k_n]$, as those may give interesting enumerative invariants, such as integrals over the moduli space of curves or correlation functions of a physical theory. Thus, the existence of non-trivial operators $P$ that annihilate $Z$ lead to non-trivial recursion relations between the quantities of interest.

In fact, we can ask an even stronger question: for what $Z$ can we find a ``maximal'' set of differential operators that annihilate $Z$? This is even more interesting, because if there are enough such independent operators (and hence the notion of ``maximality''), then the resulting recursion relations could potentially be used to fully reconstruct $Z$ uniquely, and hence completely determine the quantities of interest.

This is the question that Airy structures address.

\subsection{The Rees Weyl algebra}

Let us now make these concepts more precise. We follow the approach in \cite{BCJ22}.  Let $A$ be a finite or countably infinite index set (think of $A = \mathbb{N}$). We use the notation $x_A$ for the set of variables $\{x_a \}_{a \in A}$, and $\partial_A$ for the set of differential operators $\left\{ \frac{\partial}{\partial x_a} \right \}_{a \in A}$. 

\begin{definition}\label{d:Weyl}
The \emph{Weyl algebra} $ \mathbb{C}[x_A]\langle \partial_A \rangle$ is the algebra of differential operators with polynomial coefficients. We define the \emph{completed Weyl algebra} $\mathcal{D}_A$ to be the completion of the Weyl algebra where we allow infinite sums in the derivatives (when $A$ is a countably infinite index set) but not in the variables. For simplicity, we will usually call $\mathcal{D}_A$ the Weyl algebra in the variables $x_A$.
\end{definition}

\begin{example}
If $A = \mathbb{N}$, then $P = \sum_{a \in A} \partial_a \in \DA$, but $Q = \sum_{a \in A} x_a \notin \DA$. 
\end{example}

The Weyl algebra $\mathcal{D}_A$ is a filtered algebra, but not a graded algebra. Recall that an (exhaustive ascending) filtration on $\mathcal{D}_A$ is an increasing sequence of subspaces $F_i \mathcal{D}_A \subseteq \mathcal{D}_A$, for $i \in \mathbb{N}$, such that
\begin{equation}
\{0\} \subseteq F_0 \mathcal{D}_A \subseteq F_1 \mathcal{D}_A \subseteq F_2 \mathcal{D}_A \subseteq \ldots \subseteq \mathcal{D}_A,
\end{equation}
with $ \cup_{i \in \mathbb{N}} F_i \mathcal{D}_A = \mathcal{D}_A$ and $F_i \mathcal{D}_A \cdot F_j \mathcal{D}_A \subseteq F_{i+j} \mathcal{D}_A$ for all $i,j \in \mathbb{N}$.

It turns out that there are many filtrations on the Weyl algebra. The one that we will be interested in is called the Bernstein filtration.

\begin{definition}\label{d:bernstein}
The \emph{Bernstein filtration} on $\DA$ is the exhaustive ascending filtration
 which  is defined by assigning degree one to all variables $ x_a $ and partial derivatives $ \partial_a $ and defining the subspaces $ F_i \mathcal D_A $ to contain all operators in $ \mathcal D_A $ of degree $ \leq i $. Mathematically, we can write
$F_i \mathcal{D}_A$ as
\begin{equation}
 F_i \mathcal{D}_A = \left \{ \sum_{\substack{m, k \in \mathbb{N} \\ m+k = i}} \sum_{a_1, \ldots, a_m \in A} p_{a_1 \cdots a_m}^{(k)}(x_A) \partial_{a_1} \cdots \partial_{a_m} \right \},
\end{equation}
where the $ p_{a_1 \cdots a_m}^{(k)}(x_A)$ are polynomials of degree $\leq k$.
 Here, $F_0 \mathcal{D}_A = \mathbb{C}$.
\end{definition}

\begin{exercise}
Show that this is an exhaustive ascending filtration of $\mathcal{D}_A$.
\end{exercise}

The Weyl algebra is filtered but not graded. However, there is a natural way to construct a graded algebra out of any filtered algebra, which is sometimes called the ``Rees construction''.

\begin{definition}\label{d:rees}
The \emph{Rees Weyl algebra} $\mathcal{D}_A^\hslash$ associated to $\mathcal{D}_A$ with the Bernstein filtration is
\begin{equation}
\mathcal{D}_A^\hslash = \bigoplus_{n \in \mathbb{N}} \hslash^n F_n \mathcal{D}_A.
\end{equation}
\end{definition}
It is easy to see that $\DAh$ is a graded algebra, graded by powers of $\hbar$.
 
  \begin{remark}
A word of caution:  our $\hbar$ differs from the $\hbar$ introduced in \cite{KS17} where Airy structures were originally defined. To connect the two conventions, one should start with the original definition of Airy structures in \cite{KS17}, rescale the variables as $x_i \mapsto \hbar^{1/2} x_i$, and then redefine $\hbar \mapsto \hbar^2$. With this transformation, the grading defined in \cite{KS17} becomes the natural $\hbar$-grading on the Rees algebra that we introduce here. Although the two conventions are ultimately equivalent, we find the introduction of $\hbar$ via the Bernstein filtration more natural and transparent.
 \end{remark}

When $A$ is countably infinite, we also need to introduce a technical condition on collections of differential operators in $\DAh$. The point is that we want to be able to take infinite linear combinations of operators $P_a$ without divergent sums appearing. To this end, we define the notion of a bounded collection of differential operators:
\begin{definition}\label{d:bounded}
Let $I$ be a finite or countably infinite index set, and $\{P_i\}_{i \in I}$ a collection of differential operators $P_i \in \DAh$ of the form
\begin{equation}
P_i = \sum_{n \in \mathbb N} \hslash^n \sum_{\substack{m,k \in \mathbb N \\ m+k = n} }\sum_{a_1,\ldots, a_m \in A} p^{(n,k)}_{i;a_1,\ldots, a_m}(x_A) \partial_{a_1}\ldots \partial_{a_m}.
\end{equation}
We say that the collection is \emph{bounded} if, for all fixed choices of indices $ a_1,\ldots, a_m, n  $ and $ k $, the polynomials  $ p^{(n,k)}_{i;a_1,\ldots, a_m}(x_A) $ vanish for all but finitely many indices $ i \in I $.
\end{definition} 
It is easy to see that for any bounded collection $\{P_i\}_{i \in I}$, linear combinations $\sum_{i \in I} c_i P_i$ for any $c_i \in \DAh$ are well defined operators in $\DAh$, regardless of whether $I$ is finite or countably infinite.
\begin{exercise}
Check this!
\end{exercise}

\subsection{Airy ideals}

Our original question was to dermine for what partition functions $Z$ does there exist a ``maximal'' set of differential operators that annihilate $Z$. Following the concepts introduced in the previous section, we would like these operators to live in the Rees Weyl algebra $\DAh$. However, as is usual in the theory of differential equations, it is better to formulate the question in terms of left ideals, instead of a specific collection of differential operators. This is because if $P Z =0$, then $Q P Z = 0$ for any differential operator $Q$. So it is more natural to talk about the left ideal in $\DAh$ that annihilates $Z$. This is called the ``annihilator ideal'' of $Z$.

\begin{definition}
Let $Z$ be a partition fuction as in \eqref{eq:pfFgncoeff}. The \emph{annihilator ideal} $\mathcal{I} = \text{Ann}_{\DAh}(Z)$ of $Z$ in $\DAh$ is the left ideal in $\DAh$ defined by
\begin{equation}
 \text{Ann}_{\DAh}(Z) = \{ P \in \DAh~|~ P Z = 0 \}. 
\end{equation}
\end{definition}

So we can reformulate our  original question as follows: when is the annihilator ideal of a partition function $Z$ in $\DAh$ as large as possible?

To this end, we define the notion of an ``Airy ideal'', which is a class of left ideals that will arise as annihilator ideals of partition functions.

 \begin{definition}\label{d:airy}
 Let $\mathcal{I} \subseteq \DAh$ be a left ideal. We say that it is an \emph{Airy ideal} (or \emph{Airy structure}) if there exists a bounded generating set $\{H_a \}_{a \in A}$ for $\mathcal{I}$ such that:\footnote{We abuse notation slightly here. We say that $\mathcal{I}$ is generated by the $H_a$, even though in standard terminology the ideal generated by the $H_a$ should only contain finite linear combinations of the generators. Here we allow finite and infinite (when $A$ is countably infinite) linear combinations, which is allowed since the collection $\{H_a \}_{a \in A}$ is bounded.}
  \begin{enumerate}
    \item The operators $H_a$ take the form
    \begin{equation}\label{eq:form}
       H_a= \hbar \partial_a + O(\hslash^2).
    \end{equation}
    \item The left ideal $\mathcal{I}$ satisfies the property:
    \begin{equation}
      [ \mathcal{I}, \mathcal{I}] \subseteq \hbar^2 \mathcal{I}.
    \end{equation}
  \end{enumerate}
 \end{definition}    
 
\begin{remark}
We note that condition (2) is highly non-trivial. First, it is clear that, for any left ideal $\mathcal{I} \subseteq \DAh$, $[\mathcal{I}, \mathcal{I}] \subseteq \mathcal{I}$, by definition of a left ideal. Second, because of the definition of the Rees Weyl algebra, it is also clear that $[\mathcal{I},\mathcal{I}] \subseteq \hbar^2 \DAh$. However, the requirement that $[\mathcal{I},\mathcal{I}] \subseteq \hbar^2 \mathcal{I}$ is highly non-trivial; most left ideals do not satisfy this condition.

As a simple example of a left ideal that does not satisfy condition (2), take $A = \{1,2 \}$, and consider the left ideal $I$ generated by the two operators $H_1 = \hbar \partial_1$ and $H_2 = \hbar \partial_2 + \hbar^2 x_1$. Then $[H_1, H_2] = \hbar^3 = \hbar^2 \cdot \hbar$. While $\hbar \in \DAh$, and $\hbar^3 \in \mathcal{I}$ since $\hbar^3 = H_1 H_2 - H_2 H_1$, it is easy to see that $\hbar \notin \mathcal{I}$, and thus $\mathcal{I}$ does not satisfy condition (2).
\end{remark}

 Definition \ref{d:airy} proposes an answer to the question that we asked.

 \begin{definition}\label{d:airypf}
 Let  $Z$ be a partition function as in \eqref{eq:pfFgncoeff}. We say that $Z$ is an \emph{Airy partition function} if its annihilator ideal $\mathcal{I} = \text{Ann}_{\DAh}(Z) \subset \DA$ is an Airy ideal. 
 \end{definition}
 
 The two conditions in \eqref{d:airy} have a very concrete meaning:
 \begin{itemize}
\item Condition (1) is a precise statement of what it means for the annihilator ideal to be ``as large as possible''. 
\item Condition (2) is a necessary condition for the left ideal $\mathcal{I}$ to be the annihilator ideal of a partition function of the form \eqref{eq:pfFgncoeff}. To see this, suppose that $\I$ is the annihilator ideal of a partition function $Z$. Then $[\mathcal{I}, \mathcal{I} ] \cdot Z = 0$. From the definition of the Rees Weyl algebra, we also know that for all $P,Q \in \DAh$, $[P, Q ] = \hbar^2 S$ for some $S \in \DAh$. But since the action of $\hbar^2$ on $Z$ is just multiplication, we conclude that, for all $P,Q \in \mathcal{I}$, $[P,Q]  \cdot Z = \hbar^2 S \cdot Z = 0$, which implies that $S \cdot Z = 0$, that is, $S \in \mathcal{I}$. In other words, $[ \mathcal{I}, \mathcal{I} ] \subseteq \hbar^2 \mathcal{I}$.
\end{itemize}

We say that a partition function is Airy if its annihilator ideal is an Airy ideal; this is a characterization of partition functions for which the annihilator ideal is as large as it gets. The resulting differential constraints $H_a Z = 0$ give rise to recursion relations for the coefficients $F_{g,n}[k_1,\ldots,k_n]$ in $Z$. But two questions remain:
\begin{enumerate}
\item Are the recursion relations sufficient to fully reconstruct the partition function $Z$ uniquely?
\item Given an Airy ideal $\I$, is it always the case that it is the annihilator ideal of a partition function of the form \eqref{eq:pf}?
\end{enumerate}
In other words, given an Airy ideal $\I$, does there always exist a unique solution $Z$ of the form \eqref{eq:pfFgncoeff} to the constraints $\I Z = 0$? The main theorem in the theory of Airy structures, which was first proved in \cite{KS17}, gives a positive answer to this question.

 \begin{theorem}\label{t:airy}
  Let $\mathcal{I} \subset \DAh$ be an Airy ideal. Then there exists a unique partition function $Z$ of the form \eqref{eq:pfFgncoeff} such that $\mathcal{I}$ is the annihilator ideal of $Z$ in $\DAh$.
 \end{theorem}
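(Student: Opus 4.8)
\emph{Strategy.} The plan is to convert the constraints $\mathcal{I}Z = 0$ into an explicit recursion determining the homogeneous pieces $F_{g,n}$ one $\hbar$-order at a time, and to show this recursion has a unique solution; the two conditions of Definition~\ref{d:airy} are exactly what guarantees the recursion is consistent. Fix a bounded generating set $\{H_a\}_{a \in A}$ as in Definition~\ref{d:airy}, write $H_a = \hbar\partial_a + \sum_{i \geq 2}\hbar^i H_a^{[i]}$ with $H_a^{[i]} \in F_i\DA$, and seek $Z = \exp\bigl(\sum_{d \geq 1}\hbar^d S_d\bigr)$, where a priori $S_d$ is the unknown. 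Conjugating a differential operator by $e^{\sum_j \hbar^j S_j}$ performs the substitution $\partial_b \mapsto \partial_b + \sum_j \hbar^j \partial_b S_j$, so $H_a Z = 0$ is equivalent to $\bigl(e^{-\sum_j \hbar^j S_j} H_a\, e^{\sum_j \hbar^j S_j}\bigr)\cdot 1 = 0$, a power series in $\hbar$ whose lowest possible order is $\hbar^2$. Reading off the coefficient of $\hbar^{d+1}$ yields, for each $a \in A$, an equation
\begin{equation*}
\partial_a S_d = -\,r_a, \qquad r_a = r_a(S_1, \ldots, S_{d-1}),
\end{equation*}
where $r_a$ is an explicit polynomial in finitely many derivatives of $S_1, \ldots, S_{d-1}$ and of the coefficients of the $H_a$'s (the term $\partial_a S_d$ itself coming only from the leading piece $\hbar\partial_a$). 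This sets up an induction on $d$; at $d = 1$ one has simply $r_a =$ (the derivative-free part of $H_a^{[2]}$), a polynomial of degree $\leq 2$.

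\emph{The recursion step.} Given $S_1,\dots,S_{d-1}$, three points must be settled. (i) \emph{Degree bound.} Expanding the conjugation combinatorially, a monomial $\hbar^i p^{(k)}(x)\partial_{b_1}\cdots\partial_{b_m}$ of $H_a$ (with $k+m = i$, $\deg p^{(k)} \leq k$) contributes to order $\hbar^{d+1}$ via set partitions of $\{b_1,\dots,b_m\}$ into $b$ blocks, each block $B$ producing a factor $\partial_B S_{j_B}$; using $\deg S_j \leq j+2$ inductively, such a contribution has degree $\leq (d+1) + 2(b-m) \leq d+1$. Hence $r_a$ has degree $\leq d+1$, so $S_d$ will have degree $\leq d+2$. (ii) \emph{Uniqueness and form.} Once the $r_a$ are known, $\partial_a S_d = -r_a$ fixes $S_d$ up to an additive constant, pinned to $0$ by the normalization $q^{(k+2)}(0) = 0$; and any degree-$\leq d+2$ polynomial with vanishing constant term is uniquely the sum $\sum_{2g-2+n=d}F_{g,n}$ of its homogeneous components, with $g = (d+2-n)/2 \in \tfrac12\mathbb{N}$ and $n \geq 1$, so the solution automatically has the shape \eqref{eq:pfFgncoeff}. (When $A$ is infinite, boundedness of $\{H_a\}$ ensures the $r_a$ and $S_d$ are well-defined elements.) (iii) \emph{Integrability}: the collection $(r_a)_{a\in A}$ must satisfy $\partial_a r_b = \partial_b r_a$ for all $a, b$, or else no common $S_d$ solves all the equations at once — this is the crux, and where condition (2) is used.

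\emph{Integrability from $[\mathcal{I},\mathcal{I}] \subseteq \hbar^2\mathcal{I}$ (the main obstacle).} Let $Z^{(d-1)} = \exp\bigl(\sum_{j \leq d-1}\hbar^j S_j\bigr)$ be the partial solution built so far, so that $H_c Z^{(d-1)} = \hbar^{d+1} r_c Z^{(d-1)} + O(\hbar^{d+2})$ for every $c$. On one hand, since $H_a = \hbar\partial_a + O(\hbar^2)$ and $\partial_a Z^{(d-1)} = O(\hbar)\,Z^{(d-1)}$, a direct computation of $H_a H_b Z^{(d-1)} - H_b H_a Z^{(d-1)}$ gives
\begin{equation*}
[H_a, H_b]\,Z^{(d-1)} = \hbar^{d+2}\bigl(\partial_a r_b - \partial_b r_a\bigr)Z^{(d-1)} + O(\hbar^{d+3})\,Z^{(d-1)}.
\end{equation*}
On the other hand, condition (2) gives a bounded collection $g^c_{ab} \in \DAh$ with $[H_a, H_b] = \hbar^2\sum_c g^c_{ab} H_c$; applying this to $Z^{(d-1)}$, and using $H_c Z^{(d-1)} = O(\hbar^{d+1})$ together with the fact that each $g^c_{ab}$ has non-negative $\hbar$-degree, forces $[H_a,H_b]\,Z^{(d-1)} = O(\hbar^{d+3})$. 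Comparing coefficients of $\hbar^{d+2}$ yields $\partial_a r_b = \partial_b r_a$, completing the induction. Letting $d \to \infty$, the $Z^{(d)}$ stabilize order by order to $Z = \exp\bigl(\sum_{d\geq 1}\hbar^d S_d\bigr)$ with $H_a Z = 0$ for all $a$, hence $\mathcal{I}Z = 0$ by boundedness.

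\emph{The annihilator is exactly $\mathcal{I}$, and uniqueness.} Uniqueness of $Z$ is then immediate: any partition function with annihilator ideal $\mathcal{I}$ satisfies $\mathcal{I}Z = 0$, so its coefficients $S_d$ are forced by the recursion above. For the inclusion $\mathrm{Ann}_{\DAh}(Z) \subseteq \mathcal{I}$, use $\hbar\partial_a \equiv -\sum_{i\geq 2}\hbar^i H_a^{[i]} \pmod{\mathcal{I}}$, and more generally $\hbar^m \partial_{a_1}\cdots\partial_{a_m} \equiv H_{a_1}\cdots H_{a_m} + (\text{fewer derivatives}) \pmod{\mathcal{I}}$, to reduce any $P \in \DAh$ modulo $\mathcal{I}$ to a derivative-free normal form $\tilde P \in \mathbb{C}[x_A][[\hbar]]$ (the $\hbar$-grading makes the reduction converge). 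If $P Z = 0$ then $\tilde P Z = 0$; since $Z = 1 + O(\hbar)$ is a unit, $\tilde P = 0$, so $P \in \mathcal{I}$. With the reverse inclusion already established, $\mathrm{Ann}_{\DAh}(Z) = \mathcal{I}$, which proves the theorem.
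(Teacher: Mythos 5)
Your proof is correct, but it is organized quite differently from the one in the paper (following \cite{BCJ22}). The paper's argument first passes to the $\hbar$-adic completion $\DAhC$ and manipulates the ideal itself: it produces canonical first-order generators $\bar{H}_a = \hbar\partial_a - \sum_{n\geq 2}\hbar^n p_a^{(n)}(x_A)$ of the completed ideal $\widehat{\mathcal{I}}$, shows that condition (2) forces $[\bar{H}_a,\bar{H}_b]=0$ as operators (because $\widehat{\mathcal{I}}$ contains no nonzero derivative-free elements), and then applies the Poincar\'e lemma once to write $p_a^{(n)} = \partial_a q^{(n+1)}$ and read off $Z$ in closed form. You instead never normalize the generators: you solve the system $H_aZ=0$ degree by degree in $\hbar$, proving the degree bound $\deg S_d \leq d+2$ and extracting the integrability condition $\partial_a r_b = \partial_b r_a$ at each step by evaluating $[H_a,H_b]$ on the partial solution $Z^{(d-1)}$ and comparing $\hbar$-orders against $[H_a,H_b]\in\hbar^2\mathcal{I}$. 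The two uses of condition (2) and of the Poincar\'e lemma are morally the same, but your version localizes them to each degree $d$ and keeps the construction of $Z$ and the consistency check interleaved, which makes the role of each hypothesis very concrete and avoids working with the completed ideal for the existence part; the paper's version buys a cleaner structural statement (the ideal is generated by commuting first-order operators) that is reusable beyond this theorem. One shared loose end: in your final step the normal-form reduction of an arbitrary $P\in\DAh$ converges only $\hbar$-adically, so it a priori shows $P\in\widehat{\mathcal{I}}$ rather than $P\in\mathcal{I}$; identifying $\widehat{\mathcal{I}}\cap\DAh$ with $\mathcal{I}$ needs a word (this is precisely what the paper's step (2), that the $\bar{H}_a$ generate $\widehat{\mathcal{I}}$, is used for, and the paper's sketch is equally brief on this point).
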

 
 In words, given any Airy ideal $\I$, there always exists a partition function $Z$ such that $\I Z =0$, and the resulting recursion relations can be used to fully reconstruct $Z$ uniquely.
 
\subsection{Sketch of the proof}

We will not prove Theorem \ref{t:airy} in these lecture notes, but rather sketch the main ideas of the proof. The original proof was provided in \cite{KS17}, but we follow here the approach of \cite{BCJ22}, which is perhaps more instructive.

The starting point of the proof is to work in the $\hbar$-adic completion $\DAhC$ of the Rees Weyl algebra $\DAh$:
\begin{equation}
\DAhC = \prod_{n \in \mathbb{N}} \hbar^n F_n \mathcal{D}_A.
\end{equation}
The difference between $\DAh$ and $\DAhC$ is that operators that are formal series in $\hbar$ are included in the latter, while the former only includes operators that are polynomial in $\hbar$. Thus $\DAh \subset \DAhC$.

Let $\I \subset \DAh$ be an Airy ideal, and $\widehat{\I} \subset \DAhC$ be the left ideal generated by the $\{H_a\}_{a \in A}$ in the completion $\DAhC$. Then it is straightforward to see that if $\I$ is the annihilator ideal in $\DAh$ for some partition function $Z$, then $\widehat{\I}$ is the annihilator ideal in $\DAhC$ of the same partition function $Z$.

The key realization then is that the ideal $\widehat{\I}$ in the completion $\DAhC$ is actually quite simple. What do we mean by that?  We proceed in four steps.

\begin{enumerate}
\item In the first step, we show that there exists a collection of operators $\{\bar{H}_a\}_{a\in A}$ in $\widehat{\I}$ of the form
\begin{equation}\label{eq:Hbar}
\bar{H}_a= \hbar \partial_a -\sum_{n=2}^\infty \hbar^n p_a^{(n)}(x_A) ,
\end{equation}
where the $p_a^{(n)}(x_A)$ are polynomials of degree $\leq n$. Note that those operators are generally much simpler than the $H_a$, which may include terms with derivatives at all orders in $\hbar$.

To show this, write the generators of $\widehat{\I}$ as $H_a = \hbar \partial_a + P_a$, with  $P_a = O (\hbar^2)$. Pick one of them, say $H_a$, and consider first the terms in $P_a$ of order $\hbar^2$. There are two types of terms (we assume everything is normal ordered, with derivatives on the right of variables): terms that have no derivatives (``polynomial terms'') and terms with at  least one derivative. Keep the terms in the first class untouched, and for all terms in the second class, replace the right-most derivatives $\hbar \partial_k$ by $H_k - P_k$. This creates new terms with a $H_k$ on the right (those are in the left ideal $\widehat{\I}$), and new terms of order $\hbar^3$. We then do the same thing at order $\hbar^3$, and so on and so forth. As we are now working in the completed algebra, where formal power series in $\hbar$ are included, we can keep doing this process inductively, and we end up with the statement that
\begin{equation}
H_a = \hbar \partial_a -\sum_{n=2}^\infty \hbar^n p_a^{(n)}(x_A) + Q_a,
\end{equation}
where the $p_a^{(n)}(x_A)$ are polynomials of degree $\leq n$, and $Q_a \in \widehat{\I}$. Since $H_a \in \widehat{\I}$ and $Q_a \in \widehat{\I}$, we conclude that $\bar{H}_a := \hbar \partial_a -\sum_{n=2}^\infty \hbar^n p_a^{(n)}(x_A) $ is also in $\widehat{\I}$. Of course, we can do that for all $H_a$ in the generating set.

Note that for this process to work, it is very important that we work in the completion $\DAhC$, so that the inductive process can keep going forever.

\item In the second step, we show that, not only are the simpler operators $\bar{H}_a$ in $\widehat{\I}$, but, in fact, the collection $\{ \bar{H}_a \}_{a \in A}$ generates $\widehat{\I}$. In other words, we can write any element in $\widehat{\I}$ as a (potentially infinite) linear combination of the $\bar{H}_a$.

\item The third step consists in showing that, while the operators $H_a$ do not necessarily commute, the $\bar H_a$ do. That is, $[ \bar H_a, \bar H_b ] =0$ for all $a,b \in A$. To show this, we use the condition $[\widehat{\I}, \widehat{\I}] \subseteq \hbar^2 \widehat{\I}$, which implies that there are no non-zero operators in $\widehat{\I}$ that are formal power series in $\hbar$ with polynomial coefficents (no derivatives).

\item In the fourth step, we use Poincare lemma to show that the condition $[\bar{H}_a, \bar{H}_b] = 0$ implies that we can rewrite the polynomials $p_a^{(n)}(x_A)$ as derivatives of a single polynomial $q^{(n+1)}(x_A)$; that is,
\begin{equation}\label{eq:hbarfinal}
\bar{H}_a = \hbar \partial_a -\sum_{n=2}^\infty \hbar^n \partial_a q^{(n+1)}(x_A).
\end{equation}
\end{enumerate}

The result is that the ideal $\widehat{\I}$ in the completed Rees Weyl algebra is very simple; it is generated by the collection of operators $\bar{H}_a$ from \eqref{eq:hbarfinal}. Using the standard action of derivatives on exponentials, it is then clear that the $\bar{H}_a$ annihilate the partition function $Z$ of the form
\begin{equation}\label{eq:pp}
Z = \exp \left(\sum_{n=2}^\infty \hslash^{n-1} q^{(n+1)}(x_A)\right).
\end{equation}
More precisely, $\widehat{\I}$ is the annihilator ideal of $Z$ in $\DAhC$. It then follows that $H_a Z = 0$ as well, for all $a \in A$, since $H_a \in \widehat{\I}$. Furthermore, $Z$ is uniquely determined, after fixing the initial conditions $q^{(n+1)}(0) = 0$.  This essentially concludes the proof of the theorem. Tadam!

\begin{remark}
We note that we can generalize the definition of Airy ideals, Definition \ref{d:airy}, slightly. We can include in the $H_a$ extra terms of degree $\hbar^1$ of the form $\hbar p _a^{(1)}(x_A)$ for some linear polynomials $p_a^{(1)}(x_A)$. Everything goes through, with the caveat that the sum over $n$ in the resulting partition function \eqref{eq:pp} starts at $n=1$. In the notation of \eqref{eq:pfFgncoeff}, this means that unstable terms with $2g-2+n = 0$ are included.

We could also try to include degree $\hbar^0$ terms in the $H_a$, but this is a lot more subtle. Except for rather trivial cases, introducing degree zero terms requires refining the construction of Airy structures to work over formal power series in these coefficients. This is related to the construction of Whittaker states, which are relevant for supersymmetric gauge theories, via Airy structures in \cite{BBCC21,BCU24}.
\end{remark}

\subsection{Examples}

A definition is interesting only if it has non-trivial examples. Perhaps there are no left ideal in $\DAh$ that are Airy ideals! How boring would that be! Fortunately, it turns out that there many interesting Airy ideals, and that, in fact, many of the differential constraints already known in enumerative geometry fall into the Airy framework.

We could start by searching for Airy ideals in $\DAh$ with $A$ a finite index set (i.e., the Rees Weyl algebra in a finite number of variables). In fact, an interesting question is to classify all Airy ideals that can be realized in this way. See for instance \cite{ABCO17,HR19}.

However, most interesting examples arise when $A$ is a countably infinite index set, and hence this is what we focus on in this section.

\subsubsection{Kontsevich-Witten Virasoro constraints}
\label{s:KW}

The prototypical example of an Airy  ideal comes from the Virasoro constraints satisfied by the Kontsevich-Witten partition function originally formulated by Dikgraaf, Verlinde and Verlinde in \cite{DVV91}.

Let $A$ be the set of odd natural numbers, and define the following differential operators in $\DAh$:
\begin{equation}\label{eq:KW}
L_{k} = \hbar J_{2k+3}- \hbar^2 \left( \frac{1}{2}\sum_{\substack{m_1+m_2 = 2 k\\ m_1, m_2 \text{ odd}}} : J_{m_1} J_{m_2}: + \frac{1}{8}\delta_{k,0} \right), \qquad k \geq -1,
\end{equation}
where, for $m \geq 1$, 
\begin{equation}
J_m =\partial_m, \qquad J_{-m} = m x_m,
\end{equation}
and $: \quad :$ denotes normal ordering, where we put the $J_k$ with positive $k$'s to the right of those with negative $k$'s (i.e. derivatives to the right).

First, one can show that the left ideal $\I \subset \DAh$ generated by the collection $\{L_k\}_{k \geq -1}$ is an Airy ideal. Indeed, by direct calculation, we get that
\begin{equation}\label{eq:vir1}
[L_m, L_n] = \hbar^2 (m-n) L_{m+n}, \qquad m,n \geq -1.
\end{equation}
That is, the differential operators above form a representation of a subalgebra of the Virasoro algebra (appropriately rescaled by $\hbar$ -- see Section \ref{s:uea}). 

\begin{exercise}
Show that the operators in \eqref{eq:KW} satisfy the Virasoro commutation relations \eqref{eq:vir1}.
\end{exercise}

A consequence of \eqref{eq:vir1} is that the ideal $\I$ satisfies condition (2) of Definition \ref{d:airy}, namely $[\I,\I ] \subseteq \hbar^2 \I$. Indeed, pick any two $P, Q \in \I$. We can write $P = \sum_{a \in A} c_a L_a$ and $Q = \sum_{b \in A} d_b L_b$. Calculating the commutator, we get
\begin{align}\label{eq:calc}
[P,Q] =& [ \sum_{a \in A} c_a L_a, \sum_{b \in A} d_b L_b ]\\
=&\sum_{a,b \in A} \left( c_a d_b [L_a, L_b]+ c_a [ L_a, d_b] L_b + d_b [c_a, L_b] L_a + [c_a, d_b] L_b L_a \right).
\end{align}
The last three terms on the RHS are clearly in $\hbar^2 \I$, and the first term is also in $\hbar^2 \I$ because of \eqref{eq:vir1}.

As for condition (1), looking at \eqref{eq:KW} we see that
\begin{equation}
L_k = \hbar \partial_{2k+3} + O(\hbar^2), \qquad k \geq -1,
\end{equation}
and as we are working over the set $A$ of odd natural numbers, condition (1) is also satisfied (after trivially redefining $H_{2k+3} = L_k$). It follows that $\I$ is an Airy ideal.\footnote{To be precise, we should also check that the collection $\{ L_k \}_{k \geq -1}$ is bounded, which is easy to show.}

Now that we know that the collection $\{ L_k \}_{k \geq -1}$ from \eqref{eq:KW} generates an Airy ideal, Theorem \ref{t:airy} gives us for free that there is a unique solution $Z$ of the form \eqref{eq:pfFgncoeff} to the constraints
\begin{equation}
L_k Z = 0, \qquad k \geq -1.
\end{equation}
Of course, $Z$ is nothing but the Kontsevich-Witten tau-function of the KdV integrable hierarchy (with a suitable choice of normalization), written explicitly as
\begin{equation}
Z = \exp\left( \sum_{\substack{g \in  \mathbb{N}, n \in \mathbb{N}^* \\ 2g-2+n > 0}} \frac{\hbar^{2g-2+n}}{n!}  \sum_{k_1, \ldots, k_n \in \mathbb{N}}  F_{g,n}[2k_1+1, \ldots,2k_n+1]  x_{2k_1+1} \cdots x_{2k_n+1}\right)
\end{equation}
with the coefficients given by
\begin{equation}\label{eq:fgnairy}
F_{g,n}[2k_1+1, \ldots,2k_n+1] = \prod_{i=1}^n (2 k_i+1)!! \int_{\overline{\mathcal{M}}_{g,n}} \psi_1^{k_1} \cdots \psi_n^{k_n}.
\end{equation}
The Virasoro constraints $L_k Z = 0$, $k \geq -1$, therefore determine intersection numbers over $\overline{\mathcal{M}}_{g,n}$ uniquely recursively, which is a well known consequence of Witten's conjecture.

\subsubsection{BGW Virasoro constraints}
\label{s:BGW}

We can do a small variation on the previous example. Let $A$ still be the set of odd natural numbers, but define the following differential operators in $\DAh$:
\begin{equation}\label{eq:BGW}
L_{k} = \hbar J_{2k+1}- \hbar^2 \left( \frac{1}{2}\sum_{\substack{m_1+m_2 =2 k\\ m_1, m_2 \text{ odd}}} : J_{m_1} J_{m_2}: + \frac{1}{8}\delta_{k,0} \right), \qquad k \geq 0.
\end{equation}
Note that we changed the $O(\hbar^1)$ terms, and also we kept only the operators with $k \geq 0$. One can show that those operators still form a representation of the Virasoro algebra:
\begin{equation}
[L_m, L_n] = \hbar^2 (m-n) L_{m+n}, \qquad m,n \geq 0.
\end{equation}
It then follows, as above, that they generate an Airy ideal in $\DAh$ (with $H_{2k+1} = L_k$), and hence there is a unique partition function $Z$ of the form \eqref{eq:pfFgncoeff} satisfying the constraints $L_k Z = 0$, $k \geq 0$. 

\begin{exercise}
In this example, show that the constraints $L_k Z = 0$, $k \geq 0$ imply that $F_{0,n}[2k_1+1, \ldots, 2k_n+1] =0$ for all $n \in \mathbb{N}^*$.
\end{exercise}

What is this partition function $Z$? Can we write the coefficients $F_{g,n}[k_1, \ldots, k_n]$ in terms of intersection numbers over $\overline{\mathcal{M}}_{g,n}$? Already in this simple example the answers to these equations are quite interesting!

It turns out that $Z$ is still a tau-function of the KdV integrable hierarchy, but a different one: the BGW tau-function (it provides a different solution of the integrable system, satisfying different initial conditions). Moreover, the enumerative geometric interpretation of the coefficients $F_{g,n}[k_1,\ldots,k_n]$ was only obtained recently by Norbury in \cite{No17}, and involves constructing new cohomology classes over $\overline{\mathcal{M}}_{g,n}$. Fascinating!

More precisely, the geometric statement is that
\begin{equation}\label{eq:norbury}
F_{g,n}[2k_1+1, \ldots, 2k_n+1] = \prod_{i=1}^n (2k_i+1)!! \int_{\overline{\mathcal{M}}_{g,n}} \Theta_{g,n} \psi_1^{k_1} \cdots \psi_n^{k_n},
\end{equation}
where $\Theta_{g,n}$ is a cohomology class on $\overline{\mathcal{M}}_{g,n}$ now known as the Norbury class (it is a particular case of a construction due to Chiodo \cite{Ch06}). The Virasoro constraints  $L_k Z = 0$, $k \geq 0$ determine these intersection numbers uniquely, by recursion on $2g-2+n$.

As we will see, this is only the first example of a large class of interesting Airy ideals that appear as ``building blocks'' of the general theory. Much remains to be understood for these building blocks, as far as the relations with integrable systems and enumerative geometry go.

\subsubsection{Quadratic operators}

In both previous examples, the operators that generate the Airy ideal are quadratic in $\hbar$. Let us now look at the general case of Airy ideals generated by operators that are quadratic in $\hbar$. In fact, this is the original case that was studied by Kontsevich and Soibelman in \cite{KS17}.

Let $A$ be an index set. We write general operators in $\DAh$ that are quadratic in $\hbar$ and that satisfy condition (1) of Definition \ref{d:airy} as:
\begin{equation}
H_a = \hbar \partial_a -\hbar^2 \left(  \frac{1}{2} A_{abc} x_b x_c + B_{abc} x_b \partial_C + \frac{1}{2}C_{abc} \partial_b \partial_c + D_a \right).
\end{equation}
In this section we use the convenient notation that repeated indices are summed over the index set $A$.
Note that we could also include terms that are linear in $x$'s and $\partial$'s in the $O(\hbar^2)$ term, but for simplicity we omit them, as was originally done in \cite{KS17}.

Let $\I$ be the left ideal generated by $\{H_a \}_{a \in A}$. Condition (1) of Definition \ref{d:airy} is satisfied by construction. We also assume that the collection $\{H_a\}_{a \in A}$ is bounded. What about condition (2)? From the calculation in \eqref{eq:calc}, it is easy to see that $[\I,\I] \subseteq \hbar^2 \I$ if and only if 
\begin{equation}\label{eq:Hab}
[H_a, H_b] = \hbar^2 f_{abc} H_c
\end{equation}
for some $f_{abc} \in \DAh$. But since the $H_a$ are quadratic in $\hbar$, it follows that the $f_{abc}$ must be  $O(\hbar^0)$. That is, $f_{abc} \in \mathbb{C}$. In other words, condition (2) is satisfied if and only if the $H_a$ form a representation of a Lie algebra! Note that this only holds for operators that are quadratic in $\hbar$; if we include higher order terms, Airy ideals do not have to come from representations of Lie algebras.

Assuming that \eqref{eq:Hab} holds, from Theorem \ref{t:airy} we know that there is a unique partition function $Z$ of the form \eqref{eq:pfFgncoeff} that satisfies the constraints $H_a Z = 0$, $a \in A$. Its coefficients $F_{g,n}[k_1, \ldots, k_n]$ are determined recursively. In fact, we can write down the recursion explicitly here. It reads \cite{ABCO17}:
\begin{align}\label{eq:trA}
F_{g,n}&[k_1, \ldots, k_n]= \sum_{m=2}^\infty B_{k_1 k_m a} F_{g,n-1} [a, k_2, \ldots, \widehat{k}_m, \ldots, k_n] \nonumber\\
& + \frac{1}{2} C_{k_1 a b} \left[F_{g-1,n+1}[a,b, k_2, \ldots,k_n] + \sum_{\substack{g_1+g_2=g \\ I \cup J = \{k_2, \ldots, k_n\}}} F_{g_1, |I|+1}[a,I] F_{g_2, |J|+1} [b, J] \right],
\end{align}
where the notation $\widehat{k}_m$ means that $k_m$ is omitted, and in the last sum we sum over all possibly empty disjoint subsets $I,J \subseteq \{ k_2,\ldots, k_n \}$ such that $I \cup J =  \{ k_2 ,\ldots, k_n \}$. The initial conditions of the recursion are
\begin{equation}\label{eq:trB}
F_{0,3}[a,b,c] = A_{abc}, \qquad F_{1,1}[a] = D_a.
\end{equation}

\begin{exercise}
Obtain the recursion formula \eqref{eq:trA} and the initial conditions \eqref{eq:trB} from the differential constraints $H_a Z = 0$.
\end{exercise}

This is our first take on ``topological recursion''; in the context of Airy ideals, it appears simply as the recursive structure satisfied by the coefficients $F_{g,n}[k_1,\ldots,k_n]$ as a consequence of the differential constraints $\I Z = 0$. We wrote it explicitly here for the case of operators that are quadratic in $\hbar$, but it can of course be written as well for operators of any order in $\hbar$ (see \cite{BBCCN18}).

\begin{remark}
The astute reader may ask the following question. What if we start directly from the topological recursion of \eqref{eq:trA}, with the initial conditions in \eqref{eq:trB}? Doesn't it construct a partition function $Z$ recursively that solves the constraints $H_a Z = 0$, \emph{for any choice of tensors} $A_{abc}$, $B_{abc}$, $C_{abc}$ and $D_a$? That would be rather strange, as it would mean that any collection of quadratic operators $\{H_a \}_{a \in A}$ satisfies the condition $[\I,\I]\subseteq \hbar^2 \I$ in the definition of Airy ideals. Surely that can't be right!

The subtlety is in the recursive formula \eqref{eq:trA}. For this formula to reconstruct a partition function $Z$, it must produce coefficients $F_{g,n}[k_1,\ldots, k_n]$ that are \emph{symmetric} under permutations of the indices $k_1,\ldots,k_n$. This is highly non-trivial! The formula is manifestly symmetric under permutations that leave $k_1$ fixed, but it is far from obviously symmetric for permutations that involve $k_1$. In fact, the conditions on the tensors $A_{abc}$, $B_{abc}$, $C_{abc}$ and $D_a$ that are necessary for the recursion \eqref{eq:trA} to produce symmetric $F_{g,n}[k_1,\ldots,k_n]$ are quite involved; see \cite{ABCO17} for details. Nevertheless, it turns out that symmetry is implied by the condition $[\I,\I] \subseteq \hbar^2 \I$, as it should, and everything is well.
\end{remark}

%
%\begin{exercise}
%Show that including terms that are linear in $x$'s and $\partial$'s in the $O(\hbar^2)$ may lead to $F_{g,n}$ that are non-vanishing with $g$ half-integer.
%\end{exercise}

\subsubsection{Universal enveloping algebras and $\mathcal{W}$-algebras}
\label{s:uea}

Our previous examples of Airy ideals are constructed via representations of Lie algebras. This is not a coincidence; many interesting examples of Airy ideals can be constructed in this way, or more generally as representations of non-linear Lie algebras such as $\mathcal{W}$-algebras. Let me explain the construction in more general terms. This subsection is a little more technical, and may be skipped.

Recall that a Lie algebra $\mathfrak{g}$ is a vector space $\mathfrak{g}$ with an alternating bilinear map $[,]: \mathfrak{g} \times \mathfrak{g} \to \mathfrak{g}$. For the purpose of the construction, we can be more general and allow $\mathfrak{g}$ to be a ``non-linear Lie algebra'' (see for instance Section 3 of \cite{DK05} for a precise definition). Roughly speaking, a non-linear Lie algebra is like a Lie algebra, but the bilinear map now takes the form $[,]: \mathfrak{g} \times \mathfrak{g} \to \mathcal{T}(\mathfrak{g})$, where $\mathcal{T}(\mathfrak{g})$ is the tensor algebra over $\mathfrak{g}$.\footnote{The definition is actually a little more involved than this. In particular, one needs to impose a grading condition on the algebra and the bilinear map. We refer the reader to Definition 3.1 of \cite{DK05} for a precise definition.} In other words, if $\{ e_i \}$ is a basis for $\mathfrak{g}$, in a Lie algebra, the commutator $[p,q]$ for any $p,q \in \mathfrak{g}$ is a linear combination of the $e_i$, while in a non-linear Lie algebra $[p,q]$ is a ``polynomial'' in the $e_i$ (with product given by tensor product). Many of the properties of Lie algebras carry through to non-linear Lie algebras; for instance, one can construct the universal enveloping algebra $U(\mathfrak{g})$ as usual, and $\mathfrak{g}$ is a PBW basis for $U(\mathfrak{g})$ (see \cite{DK05}). 

Let $\mathfrak{g}$ be a Lie algebra or a non-linear Lie algebra, and $U(\mathfrak{g})$ the universal enveloping algebra. Suppose that there is an exhaustive ascending filtration 
\begin{equation}
\{0\} \subseteq F_0 U(\mathfrak{g}) \subseteq F_1 U(\mathfrak{g}) \subseteq \ldots \subseteq U(\mathfrak{g}).
\end{equation}
Then we can construct the Rees universal enveloping algebra $U^\hbar(\mathfrak{g}) = \bigoplus_{n \in \mathbb{N}} \hbar^n F_n U(\mathfrak{g})$ using the Rees construction as in Definition \ref{d:rees}. We assume that the filtration is such that
\begin{equation}\label{eq:filtcond}
[ F_i U(\mathfrak{g}), F_j U(\mathfrak{g})] \subseteq F_{i+j-2} U(\mathfrak{g}),
\end{equation}
in which case
\begin{equation}
[U^\hbar(\mathfrak{g}), U^\hbar(\mathfrak{g})] \subseteq \hbar^2 U^\hbar(\mathfrak{g}).
\end{equation}
To construct Airy ideals, we proceed as follows:
\begin{lemma}\label{l:uea}
Let $\rho: U^\hbar(\mathfrak{g}) \to \DAh$ be a representation of the Rees enveloping algebra in the Rees Weyl algebra, for some index set $A$. Let $\mathcal{I}_{U^\hbar} \subseteq U^\hbar(\mathfrak{g})$ be a left ideal in $U^\hbar(\mathfrak{g})$, and $\mathcal{I} = \DAh \rho(\mathcal{I}_{U^\hbar}) \subseteq \DAh$ be the corresponding left ideal in $\DAh$ generated by $\rho(\mathcal{I}_{U^\hbar})$.

Suppose that  $\mathcal{I}_{U^\hbar}$ satisfies the property $[\mathcal{I}_{U^\hbar},\mathcal{I}_{U^\hbar}] \subseteq \hbar^2 \mathcal{I}_{U^\hbar}$, and that there exists a generating set $\{H_a \}_{a \in A}$ for $\mathcal{I}_{U^\hbar}$ such that $\rho(H_a) = \hbar \partial_a + O(\hbar^2)$ and the collection $\{\rho(H_a)\}_{a \in A}$ is bounded. Then $\mathcal{I}$ is an Airy ideal.
\end{lemma}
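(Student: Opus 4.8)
The plan is to verify directly that the collection $\{\rho(H_a)\}_{a\in A}$ is a bounded generating set for $\I$ satisfying the two conditions of Definition~\ref{d:airy}. First I would check that $\{\rho(H_a)\}_{a\in A}$ does generate $\I$ as a left ideal in $\DAh$: since $\{H_a\}_{a\in A}$ generates $\I_{U^\hbar}$, every element of $\I_{U^\hbar}$ is a bounded combination $\sum_a u_a H_a$ with $u_a\in U^\hbar(\mathfrak{g})$, hence $\rho(\I_{U^\hbar})\subseteq\sum_{a\in A}\DAh\,\rho(H_a)$, while conversely each $\rho(H_a)\in\rho(\I_{U^\hbar})\subseteq\I$; therefore $\I=\DAh\rho(\I_{U^\hbar})=\sum_{a\in A}\DAh\,\rho(H_a)$. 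With this description, condition (1) of Definition~\ref{d:airy}, namely $\rho(H_a)=\hbar\partial_a+O(\hbar^2)$, together with boundedness of $\{\rho(H_a)\}_{a\in A}$, are exactly the two hypotheses of the lemma, so nothing is left to prove there.

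The substance of the argument is condition (2), $[\I,\I]\subseteq\hbar^2\I$, and for this I would reuse the expansion in \eqref{eq:calc}. Writing $P=\sum_a c_a\rho(H_a)$ and $Q=\sum_b d_b\rho(H_b)$ with $c_a,d_b\in\DAh$,
\begin{align*}
[P,Q]=\sum_{a,b\in A}\Big(&c_a d_b\,[\rho(H_a),\rho(H_b)]+c_a\,[\rho(H_a),d_b]\,\rho(H_b)\\
&\ +d_b\,[c_a,\rho(H_b)]\,\rho(H_a)+[c_a,d_b]\,\rho(H_b)\rho(H_a)\Big).
\end{align*}
In the last three summands the inner brackets $[\rho(H_a),d_b]$, $[c_a,\rho(H_b)]$, $[c_a,d_b]$ lie in $\hbar^2\DAh$ by the structure of the Rees Weyl algebra (every commutator there acquires a factor $\hbar^2$, as recalled after Definition~\ref{d:airy}), so each of these summands lies in $\hbar^2\DAh\cdot\I=\hbar^2\I$ because $\hbar$ is central and $\I$ is a left ideal. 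For the first summand I would use that $\rho$ is an algebra homomorphism, so $[\rho(H_a),\rho(H_b)]=\rho([H_a,H_b])$; the hypothesis $[\I_{U^\hbar},\I_{U^\hbar}]\subseteq\hbar^2\I_{U^\hbar}$ lets us write $[H_a,H_b]=\hbar^2 S_{ab}$ with $S_{ab}\in\I_{U^\hbar}$, so that $c_a d_b\,[\rho(H_a),\rho(H_b)]=\hbar^2\,c_a d_b\,\rho(S_{ab})\in\hbar^2\DAh\cdot\rho(\I_{U^\hbar})\subseteq\hbar^2\I$. Thus every summand lies in $\hbar^2\I$, and condition (2) follows; hence $\I$ is an Airy ideal.

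The one place I expect to have to be careful — and the only real obstacle — is the bookkeeping with infinite linear combinations when $A$ is countably infinite: I must check that the double sum above converges and reassembles into a genuine element of $\hbar^2\I$, i.e.\ into a \emph{bounded} combination of the $\rho(H_a)$ with coefficients in $\hbar^2\DAh$. This is the same kind of verification as in the exercise following Definition~\ref{d:bounded}, and it is driven entirely by the boundedness of $\{\rho(H_a)\}_{a\in A}$ together with the (bounded) way in which $\I_{U^\hbar}$ is generated; I would sketch it rather than grind through it in full.
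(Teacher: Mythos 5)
Your proposal is correct and follows essentially the same route as the paper, which simply asserts that $[\mathcal{I}_{U^\hbar},\mathcal{I}_{U^\hbar}]\subseteq\hbar^2\mathcal{I}_{U^\hbar}$ implies $[\mathcal{I},\mathcal{I}]\subseteq\hbar^2\mathcal{I}$ and that $\{\rho(H_a)\}_{a\in A}$ generates $\mathcal{I}$; you have merely filled in the details via the commutator expansion of \eqref{eq:calc}, exactly as the paper intends.
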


\begin{proof}
This is clear. It is easy to show that the condition $[\mathcal{I}_{U^\hbar},\mathcal{I}_{U^\hbar}] \subseteq \hbar^2 \mathcal{I}_{U^\hbar}$ implies that $[\mathcal{I}, \mathcal{I} ] \subseteq \hbar^2 \mathcal{I}$, and since the set $\{ \rho(H_a) \}_{a \in A}$ generates $\mathcal{I}$, we conclude that $\mathcal{I}$ is an Airy ideal.
\end{proof}

In this construction, we see that the two conditions in the definition of Airy ideals are realized independently. On the one hand, the condition $[\mathcal{I}_{U^\hbar},\mathcal{I}_{U^\hbar}] \subseteq \hbar^2 \mathcal{I}_{U^\hbar}$ is a condition on the left ideal $\mathcal{I}_{U^\hbar} \subseteq U^\hbar(\mathfrak{g})$ in the Rees universal enveloping algebra, which is independent of the choice of representation $\rho: U^\hbar(\mathfrak{g}) \to \DAh$. On the other hand, the condition that there exists a generating set with $\rho(H_a) = \hbar \partial_a  + O(\hbar^2)$ is very much representation-dependent.

The condition $[\mathcal{I}_{U^\hbar},\mathcal{I}_{U^\hbar}] \subseteq \hbar^2\mathcal{I}_{U^\hbar}$ is in fact fairly easy to satisfy, as it is naturally obtained from left  ideals in $U(\mathfrak{g})$ as follows. We first define an operation that maps elements of $U(\mathfrak{g})$ to elements of $U^\hbar(\mathfrak{g})$ (see for instance Section 1.2 in \cite{SST99}). 

\begin{definition}\label{d:reesation}
Let 
 $p \in U(\mathfrak{g})$, and let  $i = \min\{ k \in \mathbb{N}~|~ p \in F_k U(\mathfrak{g}) \}$. We define the \emph{homogenization} $h(p)$ of $p$ to be $h(p) = \hbar^i p \in U^\hbar(\mathfrak{g})$. We define the homogenization $h(\mathcal{I}_U)$ of a left ideal $\mathcal{I}_U \subseteq U(\mathfrak{g})$ to be the left ideal in $U^\hbar(\mathfrak{g})$ generated by all homogenized elements $h(p)$, $p \in \mathcal{I}_U$. 
 \end{definition}
 
 Then we have the following simple lemma:
 
\begin{lemma}
Let $\mathcal{I}_U \subseteq U(\mathfrak{g})$ be a left ideal. Then its homogenization $h(\mathcal{I}_U) \subseteq U^\hbar(\mathfrak{g})$ satisfies $[ h(\mathcal{I}_U), h(\mathcal{I}_U)] \subseteq \hbar^2 h(\mathcal{I}_U)$.
\end{lemma}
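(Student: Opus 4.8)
The plan is to reduce the statement to a single computation: controlling $[h(p),h(q)]$ for two \emph{generators} $h(p),h(q)$ with $p,q\in\mathcal{I}_U$, and then to bootstrap to arbitrary elements of $h(\mathcal{I}_U)$ by a Leibniz-type expansion of the commutator exactly as in \eqref{eq:calc}, using that $h(\mathcal{I}_U)$ is a left ideal and that every commutator in $U^\hbar(\mathfrak{g})$ already lies in $\hbar^2 U^\hbar(\mathfrak{g})$ by \eqref{eq:filtcond}.

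For the core step, take $p,q\in\mathcal{I}_U$ and let $i=\min\{k:p\in F_kU(\mathfrak{g})\}$, $j=\min\{k:q\in F_kU(\mathfrak{g})\}$, so that $h(p)=\hbar^i p$ and $h(q)=\hbar^j q$. Since $\hbar$ is central, $[h(p),h(q)]=\hbar^{i+j}[p,q]$. Because $\mathcal{I}_U$ is a left ideal and \emph{both} $p$ and $q$ lie in it, $pq,qp\in U(\mathfrak{g})\cdot\mathcal{I}_U\subseteq\mathcal{I}_U$, hence $[p,q]\in\mathcal{I}_U$; and by the filtration hypothesis, $[p,q]\in[F_iU(\mathfrak{g}),F_jU(\mathfrak{g})]\subseteq F_{i+j-2}U(\mathfrak{g})$. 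If $[p,q]=0$ there is nothing to prove; otherwise set $\ell=\min\{k:[p,q]\in F_kU(\mathfrak{g})\}\le i+j-2$, so $h([p,q])=\hbar^\ell[p,q]$ and
\[
[h(p),h(q)]=\hbar^{i+j}[p,q]=\hbar^{\,i+j-\ell}\,h([p,q]).
\]
Since $i+j-\ell\ge 2$, the central scalar $\hbar^{i+j-\ell}$ lies in $\hbar^2 U^\hbar(\mathfrak{g})$, and $h([p,q])\in h(\mathcal{I}_U)$ by definition of the homogenized ideal; as $h(\mathcal{I}_U)$ is a left ideal this gives $[h(p),h(q)]\in\hbar^2 h(\mathcal{I}_U)$.

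To promote this to arbitrary elements, write a general element of $h(\mathcal{I}_U)$ as a $U^\hbar(\mathfrak{g})$-linear combination $\sum_\alpha Q_\alpha h(p_\alpha)$ with $Q_\alpha\in U^\hbar(\mathfrak{g})$ and $p_\alpha\in\mathcal{I}_U$ (infinite but admissible combinations, when $A$ is infinite, being handled by the boundedness conventions already in force). By bilinearity it suffices to bound $[Q_1 h(p_1),Q_2 h(p_2)]$; expanding via $[AB,C]=A[B,C]+[A,C]B$ produces $Q_1Q_2[h(p_1),h(p_2)]$, controlled by the core step, together with three terms of the shape $R\,[X,Y]\,S$ in which $[X,Y]$ is a commutator in $U^\hbar(\mathfrak{g})$, hence in $\hbar^2 U^\hbar(\mathfrak{g})$, while one of the two outer factors equals $h(p_1)$, $h(p_2)$, or $h(p_2)h(p_1)$, all of which lie in the left ideal $h(\mathcal{I}_U)$; thus each such term lies in $\hbar^2 U^\hbar(\mathfrak{g})\,h(\mathcal{I}_U)\subseteq\hbar^2 h(\mathcal{I}_U)$. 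Summing over $\alpha$ yields $[h(\mathcal{I}_U),h(\mathcal{I}_U)]\subseteq\hbar^2 h(\mathcal{I}_U)$.

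The only real content is the degree bookkeeping in the core step. One automatically has $[h(p),h(q)]\in h(\mathcal{I}_U)$ (commutator of elements of a left ideal) and $[h(p),h(q)]\in\hbar^2 U^\hbar(\mathfrak{g})$; what is not automatic is that a single witness serves both, i.e.\ that the two spare powers of $\hbar$ can be peeled off while the remaining factor stays in the ideal — and the hypothesis \eqref{eq:filtcond}, forcing the minimal filtration degree of $[p,q]$ to drop by exactly $2$, is precisely what guarantees this. I do not anticipate any genuine obstacle beyond being careful that $[p,q]$ returns to $\mathcal{I}_U$ (which uses that both $p$ and $q$, not just one of them, belong to the ideal).
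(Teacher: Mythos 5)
Your proof is correct and follows essentially the same route as the paper's: the core computation $[h(p),h(q)]=\hbar^{i+j-\ell}h([p,q])$ with $i+j-\ell\geq 2$ is exactly the paper's observation that $[h(p),h(q)]=\hbar^k h(s)$ for some $k\geq 2$, and your Leibniz expansion merely spells out the step the paper dismisses as ``it is easy to show,'' reusing the computation of \eqref{eq:calc}. No gaps; the extra care about $\ell$ possibly being strictly smaller than $i+j-2$ and about needing both $p$ and $q$ in the ideal to conclude $[p,q]\in\mathcal{I}_U$ is welcome but does not change the argument.
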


\begin{proof}
For any $p,q \in \mathcal{I}_U$, $[p,q] = s$ for some $s \in \mathcal{I}_U$. Moreover, by our requirement \eqref{eq:filtcond}, we know that if $p \in F_i U(\mathfrak{g})$ and $q \in F_j U(\mathfrak{g})$, then $s \in F_{i+j-2} U(\mathfrak{g})$. This means that the homogenizations satisfy $[h(p),h(q)] = \hbar^k h(s)$ for some $k \geq 2$. Since $h(\mathcal{I}_U)$ is generated by all homogenized elements $h(p)$, $p \in \mathcal{I}_U$, it is easy to show that this implies that $[h(\mathcal{I}_U), h(\mathcal{I}_U)] \subseteq \hbar^2 h(\mathcal{I}_U)$.
\end{proof}

Thus any left ideal $\mathcal{I}_{U^\hbar} \subseteq U^\hbar(\mathfrak{g})$ that is obtained as the homogenization of a left ideal in $U(\mathfrak{g})$ automatically satisfies $[\mathcal{I}_{U^\hbar},\mathcal{I}_{U^\hbar}] \subseteq \hbar^2\mathcal{I}_{U^\hbar}$. This perhaps helps demystify the meaning of this condition.

\begin{remark}
One has to be careful however; if $\{p_a\}_{a \in A}$ is a generating set for a left ideal $\mathcal{I}_U \subseteq U(\mathfrak{g})$, then it is not necessarily true that the homogenizations $\{h(p_a)\}_{a \in A}$ form a generating set for the homogenization of the left ideal $h(\mathcal{I}_U) \subseteq U^\hbar(\mathfrak{g})$ (see Proposition 1.2.11 in \cite{SST99}). Let us illustrate this with an example.
Let $\mathfrak{g}$ be the Virasoro algebra, spanned by the modes $\{L_m\}_{m \in \mathbb{Z}}$ and the central charge $c$. Consider the left ideal $\mathcal{I}_U \subseteq U(\mathfrak{g})$ generated by $\{L_1, L_2 \}$. Since $[L_2,L_1] = L_3 \in \mathcal{I}_U$, $[L_3,L_1] = 2 L_4 \in \mathcal{I}_U$, and so on and so forth, we know that $L_m \in \mathcal{I}_U$ for all $m \geq 1$. We use the filtration by conformal weight, where $L_m \in F_2 U(\mathfrak{g})$ for all $m \in \mathbb{Z}$, and thus $h(L_m) = \hbar^2 L_m$. Let $h(\mathcal{I}_U) \subseteq U^\hbar(\mathfrak{g})$ be the homogenization of the left ideal. Then $h(L_m)=\hbar^2 L_m \in h(\mathcal{I}_U)$ for all $m \geq 1$. This means that $h(\mathcal{I}_U)$ is not generated by $\{ h(L_1), h(L_2) \}=\{\hbar^2 L_1, \hbar^2 L_2\}$, since for instance $h(L_3) = \hbar^2 L_3$ is not in the ideal generated by $\{h(L_1), h(L_2) \}$ in $U^\hbar(\mathfrak{g})$ --- indeed, $[h(L_2), h(L_1)] = \hbar^2 h(L_3)$, which means that $\hbar^2 h(L_3)$ is in the  ideal generated by $\{h(L_1), h(L_2) \}$, but not $h(L_3)$.
\end{remark}

Nevertheless, this gives a clear recipe on how to obtain Airy ideals from universal enveloping algebras.

\begin{enumerate}
\item We start with a left ideal $\mathcal{I}_U \subseteq U(\mathfrak{g})$ or, equivalently, a cyclic left module $M \simeq U(\mathfrak{g})/ \mathcal{I}_U$ generated by a vector $v$ whose annihilator is $\mathcal{I}_U = \text{Ann}_{U(\mathfrak{g})}(v)$.
\item We construct the homogenization $\mathcal{I}_{U^\hbar}= h(\mathcal{I}_U)$, which is a left ideal in $U^\hbar(\mathfrak{g})$.  By construction, we know that $[\mathcal{I}_{U^\hbar}, \mathcal{I}_{U^\hbar}] \subseteq \hbar^2 \mathcal{I}_{U^\hbar}$. From the point of view of modules, we obtain a cyclic left module $M[\hbar] \simeq U^\hbar(\mathfrak{g})/\mathcal{I}_{U^\hbar}$ generated by the vector $v$ and where $\hbar$ acts by multiplication; the annihilator of $v$ in $U^\hbar(\mathfrak{g})$ is $\mathcal{I}_{U^\hbar} = \text{Ann}_{U^\hbar(\mathfrak{g})}(v)$.
\item We find a representation $\rho: U^\hbar(\mathfrak{g}) \to \DAh$, for some index set $A$, such that there exists a  generating set $\{H_a\}_{a \in A}$ for $\mathcal{I}_{U^\hbar}$ with $\rho(H_a) = \hbar \partial_a + O(\hbar^2)$ and the collection $\{\rho(H_a)\}_{a \in A}$ bounded.
\end{enumerate}
By Lemma \ref{l:uea}, the left ideal $\mathcal{I} \subseteq \DAh$ generated by $\{\rho(H_a)\}_{a \in A}$ is an Airy ideal.

The examples presented previously were obtained in this way.

\begin{example}
Let $\mathfrak{g}$ be the Virasoro algebra with central charge $c$. $\mathfrak{g}$ is spanned by the modes $\{L_m \}_{m \in \mathbb{Z}}$ and the central charge $c$, with commutation relations
\begin{equation}
[L_m, L_n ] = (m-n) L_{m+n} + \frac{c}{12}(m^3-m) \delta_{m,-n}.
\end{equation}
There is a natural filtration on $U(\mathfrak{g})$ that satisfies \eqref{eq:filtcond}, where the $L_m$ have degree $2$ and $c$  has degree $0$ (this is the filtration by conformal weight). The homogenizations are then $h(L_m) = \hbar^2 L_m$ and $h(c) = c$. Thus, in $U^\hbar(\mathfrak{g})$, we have
\begin{equation}
[h(L_m), h(L_n)] = \hbar^2 (m-n) h(L_{m+n}) + \hbar^4 \frac{h(c)}{12}(m^3-m) \delta_{m,-n}.
\end{equation}

Consider the left ideal $\mathcal{I}_U \subseteq U(\mathfrak{g})$ generated by $\{L_m\}_{m \geq -1}$. Then $U(\mathfrak{g})/\mathcal{I}_U$ is the cyclic module generated by the vacuum vector $v$, which is such that $L_m v = 0$ for all $v \geq -1$. It is easy to see that the homogenization $\mathcal{I}_{U^\hbar} = h(\mathcal{I}_U) \subseteq U^\hbar(\mathfrak{g})$ is the left ideal generated by $\{h(L_m)\}_{m \geq -1}$. By construction, it satisfies $[\mathcal{I}_{U^\hbar}, \mathcal{I}_{U^\hbar}] \subseteq \hbar^2 \mathcal{I}_{U^\hbar}$. To get the Kontsevich-Witten Virasoro constraints from Section \ref{s:KW}, we construct the representation of $U^\hbar(\mathfrak{g})$ in \eqref{eq:KW} (in the language of this section, the $L_k$ from \eqref{eq:KW} should be $h(L_k)$ in the Rees universal enveloping algebra), which exists for central charge $c=1$.
\end{example}

\begin{example}
The BGW example from Section \ref{s:BGW} is very similar. The Lie algebra $\mathfrak{g}$ is the same, with the same filtration. We now look at the left ideal $\mathcal{I}_U$ in $U(\mathfrak{g})$ generated by $\{L_m \}_{m \geq 0}$, and construct its homogenization $h(\mathcal{I}_U)$, which is generated by $\{ h(L_m) \}_{m \geq 0}$. The BGW Virasoro constraints are obtained via the representation \eqref{eq:BGW} for the $h(L_m)$, which again exists for central charge $c=1$.

This construction could be generalized a little bit. We could have started with the left ideal $\mathcal{I}_U$ in $U(\mathfrak{g})$ generated by $\{L_m\}_{m \geq 1}$ and $L_0 - \Lambda$ for some $\Lambda \in \mathbb{C}$. Then $U(\mathfrak{g})/ \mathcal{I}_U$ is a highest weight module for the Virasoro algebra with highest weight $\Lambda$, generated by a highest weight vector $v$ that satisfies $L_m v =0$ for $m \geq 1$ and $L_0 v = \Lambda v$. Its homogenization $\mathcal{I}_{U^\hbar}=h(\mathcal{I}_U)$ is generated by $h(L_m) = \hbar^2 L_m$ for $m \geq 1$ and $h(L_0 - \Lambda) = \hbar^2(L_0 - \Lambda)$. Again, by construction $[\mathcal{I}_{U^\hbar}, \mathcal{I}_{U^\hbar}] \subseteq \hbar^2 \mathcal{I}_{U^\hbar}$. Using the same representation \eqref{eq:BGW} for the $h(L_m)$, we obtain an Airy ideal. As a result, we obtain a partition function $Z$ uniquely fixed by the constraints  $h(L_m) Z = 0$ for $ m\geq 1$ and $h(L_0) Z = \hbar^2 \Lambda Z$, with the $h(L_m)$ represented as in \eqref{eq:BGW}. Explicitly, this gives the constraints
\begin{align}
 \hbar J_{2k+1} Z =& \frac{\hbar^2}{2}\sum_{\substack{m_1+m_2 =2 k\\ m_1, m_2 \text{ odd}}} : J_{m_1} J_{m_2}:  Z \qquad \text{for $k \geq 1$,}\\
 \hbar J_1 Z =& \hbar^2 \left(\frac{1}{2}\sum_{\substack{m_1+m_2 =0\\ m_1, m_2 \text{ odd}}} : J_{m_1} J_{m_2}:+ \frac{1}{8} + \Lambda \right) Z.
\end{align}
The result is simply a shift of the constant term $1/8$ by the highest weight of the original module for the Virasoro algebra. This gives a new $\Lambda$-dependent partition function, which can be understood as the highest weight vector generating the highest weight module, with the BGW case corresponding to the zero weight $\Lambda=0$.
\end{example}

\begin{example}
So far all the examples started with a Lie algebra $\mathfrak{g}$. But many interesting examples start with a non-linear Lie algebra. A typical example of a non-linear Lie algebra $\mathfrak{g}$ is the vector space spanned by the modes of the strong generators of a $\mathcal{W}$-algebra. Explaining what $\mathcal{W}$-algebras are is beyond the scope of these notes; it will suffice to say here that one can think of $\mathcal{W}$-algebras as extensions of the Virasoro algebra that appear as symmetry algebras for two-dimensional conformal field theories. See for instance \cite{Ar16} for an introduction to $\mathcal{W}$-algebras.

Following the strategy above, many examples of Airy structures have been constructed via representations of the Rees universal enveloping algebras of $\mathcal{W}$-algebras -- see for instance \cite{BBCCN18, BKS20,BM20, BBCC21,  BCJ22,BCU24}. In particular, a class of Airy ideals that arise from representations of the $\mathcal{W}(\mathfrak{gl}_n)$-algebras at self-dual level reproduce the topological recursion of \cite{BE13} which we will encounter later on.
\end{example}

\section{Topological recursion}

\label{s:TR}

Let us now study a different recursive structure, which at first will appear completely unrelated. The recursive structure in question has become known as the ``Eynard-Orantin topological recursion'', or simply ``topological recursion'' (affectionately shortened as ``TR''). It was originally formulated in \cite{EO07} as a method for solving loop equations to recursively calculate the correlators of matrix models. But it was proposed to be much more general than that, and indeed it has now become a unifying theme in many different contexts, from Hurwitz theory, to Gromov-Witten theory, to knot theory, to topological and cohomological field theories, to topological string theory. 

\subsection{Spectral curves}

The starting point is the geometry of a spectral curve. Note that many of the concepts and results on the geometry of Riemann surfaces used in this section were beautifully explained in Marco Bertola's notes for his lecture series at the current school \cite{Be24}.

\begin{definition}\label{d:sc}
A \emph{spectral curve} is a quadruple $\mathcal{S} = (\Sigma,x,\omega_{0,1}, \omega_{0,2})$, where:\footnote{We note here that the definition of spectral curves can be generalized in many different ways. We take here a fairly simple approach, but general enough for our purposes.}
\begin{itemize}
\item $\Sigma$ is a Riemann surface;
\item $x: \Sigma \to \mathbb{P}^1$ is a holomorphic map between Riemann surfaces;
\item $\omega_{0,1}$ is a meromorphic one-form on $\Sigma$;
\item $\omega_{0,2}$ is a \emph{fundamental bidifferential} (sometimes also called \emph{Bergman kernel} and denoted by $B$), which is a symmetric meromorphic bidifferential on $\Sigma \times \Sigma$ whose only pole consists of a double pole on the diagonal with biresidue $1$. 
\end{itemize}
\end{definition}
We will come back to the mysterious object $\omega_{0,2}$ in a second; you can forget about it for the time being.

First, we remark that specifying a holomorphic map $x: \Sigma \to \mathbb{P}^1$ is the same as specifying a meromorphic function $x$ on $\Sigma$. Second, specifying a one-form $\omega_{0,1}$ in $\Sigma$ is equivalent to specifying a second holomorphic map $y : \Sigma \to \mathbb{P}^1$ (or meromorphic function $y$) such that $\omega_{0,1} = y \ dx$. This is why spectral curves are sometimes defined by specifying two holomorphic maps $x,y: \Sigma \to \mathbb{P}^1$ instead of a map $x$ and a one-form $\omega_{0,1}$. 

When $\Sigma$ is a compact Riemann surface, the holomorphic map $x: \Sigma \to \mathbb{P}^1$ is a finite degree branched covering. Moreover, in this case there always exists an irreducible polynomial $P(x,y)$ such that the two holomorphic maps $x$ and $y$ identically satisfy the polynomial equation $P(x,y) = 0$.
So we can construct a spectral curve by starting with an algebraic curve 
\begin{equation}\label{eq:alg}
\{ P(x,y) = 0 \} \subset \mathbb{C}^2.
\end{equation} 
We take $\Sigma$ to be the normalization of the corresponding Riemann surface, $x: \Sigma \to \mathbb{P}^1$ to be projection on the $x$-axis, and $\omega_{0,1} =  y\ dx$. Note however that our definition is more general, as it allows non-compact Riemann surfaces $\Sigma$; those spectral curves may not come from algebraic curves.

Locally, every non-constant holomorphic map between Riemann surfaces looks like a power map. More precisely, let $p \in \Sigma$, and $x(p) \in \mathbb{P}^1$. Then there exists a local coordinate $\zeta$ on $\Sigma$ centered at $p$ and a local coordinate on $\mathbb{P}^1$ centered at $x(p)$ such that the map $x: \Sigma \to \mathbb{P}^1$ takes the local normal form
\begin{equation}
x: \zeta \mapsto \zeta^r
\end{equation}
for some positive integer $r$. If we think of $x$ as a meromorphic function on $\Sigma$, then locally it can be written as $x = x(p) + \zeta^r$ if $x(p) \in \mathbb{C}$, and $x = \zeta^{-r}$ if $p$ is a pole of $x$.

The integer $r$ is uniquely defined and called the \emph{ramification order} of $x$ at $p$. We say that $p$ is a \emph{ramification point} if $r \geq 2$, and its image $x(p)$ is called a \emph{branch point}. Let us introduce the notation $\Ra \subset \Sigma$ for the set of ramification points of $x$. Those correspond to the zeros of the differential $dx$ and the poles of $x$ of order $\geq 2$.

The ramification order specifies the local behaviour of $x$ near a point $p$. But what about the one-form $\omega_{0,1}$? We can expand it near $p$ in a local coordinate $\zeta$. We get:
\begin{equation}
\omega_{0,1} = \left( \sum_{k \geq \bar s} \tau_k \zeta^k \right) \frac{d \zeta}{\zeta},
\end{equation}
where $\bar s$ is the minimal exponent appearing in the expansion. For further use, we also define $s$, which is the minimal exponent not divisible by $r$:
\begin{equation}
s = \min\{ k \in \mathbb{Z} ~|~ \tau_k \neq 0 \text{ and $r$ does not divide $k$} \}.
\end{equation}
The three parameters $(r,\bar s, s)$ characterize the local behaviour of a spectral curve near a point $p$.

To make sense of topological recursion, we will need to impose an admissibility condition on spectral curves, which constrains the local behaviour of a spectral curve.

\begin{definition}\label{d:admissible}
Let  $\mathcal{S} = (\Sigma,x,\omega_{0,1}, \omega_{0,2})$ be a spectral curve. We say that it is \emph{admissible at a point $p \in \Sigma$} if either $x$ is unramified at $p$, or $x$ is ramified at $p$ and the three following conditions on the local behaviour at $p$ are satisfied:
\begin{enumerate}
\item $r$ and $s$ are coprime;
\item either $s \leq -1$, or $s \in \{1, \ldots, r+1 \}$ and $r = \pm 1 \mod s$;
\item $\bar s = s$ or $\bar s = s -1$.
\end{enumerate}
We say that the spectral curve is \emph{admissible} if it is admissible everywhere on $\Sigma$.
\end{definition}

\begin{example}\label{e:Airy}
Consider the spectral curve $\mathcal{S}=(\Sigma, x, \omega_{0,1}, \omega_{0,2})$ with 
\begin{equation}
\Sigma = \mathbb{C}, \qquad  x = \frac{1}{2} z^2, \qquad \omega_{0,1} = z^2\ dz \ (\text{equivalently, } y=z), \qquad \omega_{0,2} = \frac{dz_1 dz_2}{(z_1-z_2)^2}. 
\end{equation}
This spectral curve is called the \emph{Airy spectral curve}. As $x$ has no pole on $\mathbb{C}$ and $dx = z dz$, there is a single ramification point at $z=0$, with ramification order $r=2$. Looking at $\omega_{0,1}$ we see that $s=\bar s = 3$. Thus it is admissible.
\end{example}

\begin{example}\label{e:Bessel}
Consider the spectral curve $\mathcal{S}=(\Sigma, x, \omega_{0,1}, \omega_{0,2})$ with 
\begin{equation}
\Sigma = \mathbb{C}, \qquad  x = \frac{1}{2} z^2, \qquad \omega_{0,1} =  dz \ (\text{equivalently, } y=\frac{1}{z}), \qquad \omega_{0,2} = \frac{dz_1 dz_2}{(z_1-z_2)^2}. 
\end{equation}
This spectral curve is called the \emph{Bessel spectral curve}. As $x$ has no pole on $\mathbb{C}$ and $dx = z dz$, there is a single ramification point at $z=0$, with ramification order $r=2$. Looking at $\omega_{0,1}$ we see that $s=\bar s = 1$, and the spectral curve is admissible.
\end{example}

Those two spectral curves are the ``building blocks'' of topological recursion, as they encapsulate the two relevant possible local behaviours near simple ramification points (that is, ramification points with $r=2$).

There is a similar class of spectral curves that encapsulate the local behaviour near ramification points of arbitrary order:

\begin{example}\label{e:rs}
Consider the spectral curve $\mathcal{S}=(\Sigma, x, \omega_{0,1}, \omega_{0,2})$ with 
\begin{equation}
\Sigma = \mathbb{C}, \qquad  x = \frac{1}{r} z^r, \qquad \omega_{0,1} = z^{s-1}\ dz \ (\text{equivalently, } y=z^{s-r}), \qquad \omega_{0,2} = \frac{dz_1 dz_2}{(z_1-z_2)^2},
\end{equation}
with $s \in \{1,\ldots,r+1\}$ and $r= \pm 1 \mod s$. This spectral curve is called the \emph{$(r,s)$ spectral curve}. As $x$ has no pole on $\mathbb{C}$ and $dx = z^{r-1} dz$, there is a single ramification point at $z=0$, with ramification order $r$. Looking at $\omega_{0,1}$ we see that what we called $s$ above is the local parameter $s$, and $s=\bar s$. Thus the spectral curve is admissible.
\end{example}

For simplicity, from now on in these notes we will generally assume that $x$ only has simple ramification points (that is, $r=2$ at all ramification points). There is no particular reason to assume this in the general framework of topological recursion,\footnote{Although one can argue that this is the generic case, since ramification points of a holomorphic maps are generically simple, but can become of higher ramification order after colliding (see for instance \cite{BBCKS23}). Nevertheless, many cases of interest correspond to holomorphic maps $x: \Sigma \to \mathbb{P}^1$ with higher ramification.} but it makes the formulae much simpler, which is useful pedagogically in the context of lecture notes. For the general definition of topological recursion for arbitrary ramification, see for instance \cite{BE13,BE17,BBCCN18,BKS20,BBCKS23,}.

When $r=2$ at a ramification point, admissibility simplifies. More precisely, these are the only possible choices of $s$:
\begin{itemize}
\item $s \leq -1$, which will make the ramification point irrelevant;
\item $s=1$, in which case we say that the ramification point is of ``Bessel-type'';
\item $s=3$, in which case we say that the ramification point is of ``Airy-type''.
\end{itemize}
For simplicity, we will also assume that ramification points of the first type do not appear; these would not contribute to topological recursion anyway, and hence we do not really lose generality by making such an assumption.

 For clarity, let us introduce a name for this class of spectral curves:
\begin{definition}
Let  $\mathcal{S} = (\Sigma,x,\omega_{0,1}, \omega_{0,2})$ be an admissible spectral curve. We say that it is \emph{simple} if all ramification points in $\Ra$ are simple and of either Bessel-type ($s=1$) or Airy-type ($s=3$).\footnote{In the standard literature on topological recursion, for spectral curves with simple ramification point it is usually required that $dx$ and $dy$ do not have common zeros \cite{EO07,EO08}. This is equivalent to requiring that $s$ is no greater than $3$ at a simple ramification point, since if $dy$  had a zero at a simple zero of $dx$, then $s > 3$ at that ramification point.}
\end{definition}

\subsection{Projection property and fundamental bidifferential}

So far we did not discuss the mysterious fundamental bidifferential $\omega_{0,2}$, which is part of the data of a spectral curve. What role does it play?

First, let us recall its definition: $\omega_{0,2}$ is a symmetric meromorphic bidifferential on $\Sigma \times \Sigma$ whose only pole consists of a double pole on the diagonal with biresidue $1$. 

When $\Sigma$ is a compact Riemann surface, it turns out that the fundamental bidifferential is uniquely fixed by a choice of Torelli marking.

\begin{lemma}\label{l:bergman}
Let $\Sigma$ be a compact Riemann surface, and fix a Torelli marking on $\Sigma$, which is a choice of symplectic basis $\{a_1, \ldots, a_{\bar g}, b_1, \ldots, b_{\bar g} \}$ for the homology group $H_1(\Sigma, \mathbb{Z})$, where $\bar{g}$ is the genus of $\Sigma$. Then there is a unique fundamental bidifferential $\omega_{0,2}(z_1, z_2)$ normalized such that
\begin{equation}
\oint_{a_i} \omega_{0,2}(\cdot, z_2) = 0, \qquad \text{for all $i \in \{1,\ldots,\bar{g} \}$.}
\end{equation}
\end{lemma}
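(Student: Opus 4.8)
The plan is to establish existence and uniqueness separately, working with the classical theory of abelian differentials on a compact Riemann surface $\Sigma$ of genus $\bar g$. The starting observation is that any two fundamental bidifferentials differ by a holomorphic bidifferential: if $\omega_{0,2}$ and $\omega_{0,2}'$ both have only a double pole on the diagonal with biresidue $1$, then $\omega_{0,2} - \omega_{0,2}'$ has no pole at all, hence is a holomorphic symmetric bidifferential on $\Sigma \times \Sigma$. The space of such objects is spanned by symmetrized products $\eta_i(z_1)\eta_j(z_2)$ of the $\bar g$ holomorphic one-forms $\eta_1,\dots,\eta_{\bar g}$, so the difference is determined by a symmetric $\bar g \times \bar g$ matrix of coefficients. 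This reduces both statements to linear algebra once I know one fundamental bidifferential exists.

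For \emph{existence}, I would invoke the classical construction of the bidifferential of the second kind (often attributed to Klein, or obtained from the prime form / Riemann theta function): for a fixed Torelli marking there is a standard meromorphic bidifferential, e.g. $\omega_{0,2}(z_1,z_2) = d_{z_1} d_{z_2} \log E(z_1,z_2)$ with $E$ the prime form, which is symmetric, has a double pole on the diagonal with biresidue $1$, and whose $a$-periods in the first variable vanish. Alternatively, one can build it from a basis of normalized differentials of the second kind with a single double pole. I would cite this rather than reproduce it, since it is standard (and, per the remark in the excerpt, covered in Bertola's notes \cite{Be24}). This gives a fundamental bidifferential satisfying the normalization, establishing existence.

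For \emph{uniqueness}, suppose $\omega_{0,2}$ and $\omega_{0,2}'$ are two fundamental bidifferentials with $\oint_{a_i}\omega_{0,2}(\cdot,z_2) = \oint_{a_i}\omega_{0,2}'(\cdot,z_2) = 0$ for all $i$. Their difference $\Delta(z_1,z_2) := \omega_{0,2}(z_1,z_2) - \omega_{0,2}'(z_1,z_2)$ is holomorphic, so for each fixed $z_2$ the one-form $\Delta(\cdot,z_2)$ lies in the $\bar g$-dimensional space of holomorphic differentials. But a holomorphic differential with all $a$-periods zero is identically zero (the period map $\eta \mapsto (\oint_{a_i}\eta)_i$ is an isomorphism onto $\mathbb{C}^{\bar g}$). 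Hence $\Delta(\cdot,z_2) \equiv 0$ for every $z_2$, i.e.\ $\omega_{0,2} = \omega_{0,2}'$. This completes the argument.

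The only genuinely substantive input is the existence of the normalized bidifferential of the second kind; I would treat this as a black box from the theory of Riemann surfaces, so the main "obstacle" is really just deciding how much of that classical construction to recall versus cite. Everything else — the reduction to holomorphic differentials and the injectivity of the $a$-period map — is routine.
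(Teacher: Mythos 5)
Your proposal is correct. Note that the paper itself offers no proof of this lemma: it is stated as a classical fact from the theory of compact Riemann surfaces (with the surrounding text deferring such background to Bertola's lectures \cite{Be24}), so there is no in-paper argument to compare against. Your two-step argument is exactly the standard one that the paper implicitly relies on: existence from the normalized bidifferential of the second kind (e.g.\ $d_{z_1}d_{z_2}\log E(z_1,z_2)$ with $E$ the prime form), and uniqueness from the fact that the difference of two fundamental bidifferentials is, for each fixed $z_2$, a holomorphic one-form in $z_1$, which must vanish once all its $a$-periods do. Nothing is missing; treating the existence of the normalized second-kind bidifferential as a citable black box is entirely appropriate here.
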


Note however that, when $\Sigma$ is non-compact, there is no similar notion of uniqueness, so there can be more choices of fundamental bidifferential, and this is why it is part of the data of a spectral curve.

The genus zero case is particularly important.

\begin{example}
Let $\Sigma = \mathbb{P}^1$, and let $z$ be a uniformizing coordinate on $\mathbb{P}^1$. Since for genus zero compact Riemann surfaces $H_1(\Sigma,\mathbb{Z}) = \{0\}$ (i.e. the first homology is trivial), there is no choice of  Torelli marking. The unique fundamental bidifferential simply reads
\begin{equation}\label{eq:w02}
\omega_{0,2}(z_1,z_2) = \frac{d z_1 dz_2}{(z_1-z_2)^2}.
\end{equation}
\end{example}
This is the fundamental bidifferential that was used in the previous examples, and is the only one that we will use in these lecture notes. 

We now understand better what the object is. But why do we need to specify a fundamental bidifferential? The main reason is that it defines a projection operator as follows.

\begin{definition}\label{d:proj}
Let $\omega_{0,2}$ be a fundamental bidifferential on $\Sigma$.  Let $\alpha$ be a meromorphic one-form on $\Sigma$. Pick a finite set of points $P \subset \Sigma$. We define the projection $\hat{B}_P[\alpha](z')$ of $\alpha(z)$ at $P$ to be the one-form on $\Sigma$ given by:
\begin{equation}
\hat{B}_P[\alpha](z') = \sum_{p \in P} \Res_{z=p} \left( \int_p^z \omega_{0,2}(\cdot, z') \right) \alpha(z).
\end{equation}
\end{definition}

The cool thing about this operation is the following lemma:

\begin{lemma}
The projection $\hat{B}_P[\alpha]$ is a one-form on $\Sigma$ that has the same principal part as $\alpha$ on $P$. It then follows that it is indeed a projection, as $\hat{B}_P \circ \hat{B}_P = \hat{B}_P$.
\end{lemma}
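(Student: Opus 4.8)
The plan is to compute the principal part of $\hat{B}_P[\alpha]$ at a point $p \in P$ directly from the defining residue formula, using the normalization of $\omega_{0,2}$. Fix $p \in P$ and a local coordinate $\zeta$ centered at $p$, with $\alpha$ having Laurent expansion $\alpha(z) = \sum_{k} a_k \zeta^k\, d\zeta$ near $p$. The key local input is that $\omega_{0,2}$ has a double pole on the diagonal with biresidue $1$, so in the coordinate $\zeta$ it reads $\omega_{0,2}(z, z') = \left( \frac{1}{(\zeta - \zeta')^2} + \text{holomorphic} \right) d\zeta\, d\zeta'$ near $z = z' = p$. Hence $\int_p^z \omega_{0,2}(\cdot, z')$, as a function of $z$ with $z'$ near $p$, behaves like $-\frac{1}{\zeta - \zeta'} + (\text{regular in }\zeta) + (\text{terms from the }a\text{-cycle normalization or base point, regular at }z=p)$. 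First I would substitute this expansion into $\Res_{z=p}\big( \int_p^z \omega_{0,2}(\cdot, z') \big) \alpha(z)$ and extract the residue term by term: only the singular piece $-\frac{1}{\zeta - \zeta'}$ interacts with the Laurent tail of $\alpha$ to produce a pole in $\zeta'$, and a short computation (expanding $\frac{1}{\zeta - \zeta'}$ as a geometric series and picking the $\zeta^{-1}\,d\zeta$ coefficient) shows that this contribution reproduces exactly the principal part $\sum_{k \le -1} a_k (\zeta')^{k}\, d\zeta'$ of $\alpha$ at $p$, while the holomorphic part of $\omega_{0,2}$ contributes only regular terms. Summing over $p \in P$ gives that $\hat{B}_P[\alpha] - \alpha$ is holomorphic at every point of $P$, i.e. $\hat{B}_P[\alpha]$ has the same principal part as $\alpha$ on $P$; one should also note that $\hat{B}_P[\alpha]$ has no poles outside $P$, since for $z' \notin P$ the integrand $\big(\int_p^z \omega_{0,2}(\cdot, z')\big)\alpha(z)$ is regular in $z'$ near each $p \in P$ and the residue is taken in $z$.

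For the idempotence $\hat{B}_P \circ \hat{B}_P = \hat{B}_P$, the idea is that $\hat{B}_P[\alpha]$ is the unique one-form in the image of $\hat{B}_P$ that has prescribed principal parts at $P$ and is otherwise controlled, so applying $\hat{B}_P$ again changes nothing. Concretely I would argue: set $\beta = \hat{B}_P[\alpha]$; by the first part, $\beta - \alpha$ is holomorphic near each $p \in P$, so $\beta$ and $\alpha$ have the same Laurent tails at $P$. Since in the residue formula only the Laurent tail (the principal part, in fact only the negative-degree coefficients) of the input one-form near each $p$ contributes — the holomorphic part of the input gives zero residue against $\int_p^z \omega_{0,2}(\cdot, z')$, which is itself holomorphic in $z$ at $z=p$ — we get $\hat{B}_P[\beta] = \hat{B}_P[\alpha] = \beta$. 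That last observation (holomorphic input $\Rightarrow$ zero output) is itself a one-line residue computation: if $\alpha$ is holomorphic at $p$ then $\big(\int_p^z \omega_{0,2}(\cdot,z')\big)\alpha(z)$ is holomorphic in $z$ near $z=p$ for $z'$ generic, so its residue at $z=p$ vanishes.

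The main obstacle I anticipate is bookkeeping around the base point $p$ of the integral $\int_p^z \omega_{0,2}(\cdot, z')$ and the fact that $\omega_{0,2}$ on a higher-genus curve is only pinned down by the Torelli marking: one must check that the ``regular'' pieces really are regular in $z'$ near $p$ and that the choice of integration path does not affect the residue (it does not, since changing the path changes the integral by a period, which is locally constant in $z$ hence contributes nothing to $\Res_{z=p}$). None of this is deep, but it needs to be stated carefully so that the residue extraction is unambiguous. Everything else is the elementary geometric-series residue calculation indicated above.
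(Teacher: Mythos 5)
Your argument is correct and follows essentially the same route as the paper: a local-coordinate expansion of $\omega_{0,2}$ and $\alpha$ at each $p\in P$, a geometric-series residue extraction showing that $\hat{B}_P[\alpha]$ reproduces the principal part of $\alpha$, and idempotence deduced from the observation that a form holomorphic on $P$ is annihilated by $\hat{B}_P$. The paper phrases the last step as writing $\alpha = \hat{B}_P[\alpha] + \omega$ with $\omega$ holomorphic on $P$ and then applying $\hat{B}_P$ again, which is the same argument you give.
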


\begin{proof}
This is easiest to see in local coordinates. Pick a point $p \in P$ and a local coordinate $\zeta$ centered at $p$. We can expand the fundamental bidifferential and $\alpha$ in this local coordinate:
\begin{align}
\omega_{0,2}\simeq& \frac{d \zeta_1 d \zeta_2}{(\zeta_1-\zeta_2)^2} + \sum_{l,m> 0} \phi_{lm}  \zeta_1^{l-1} \zeta_2^{m-1} d \zeta_1 d \zeta_2,\\
\alpha \simeq& \sum_{k=-M}^\infty \alpha_k \zeta^{k} d \zeta.
\end{align}
Evaluating the projection at  $p$ in the local coordinate, we get:
\begin{align}
\hat{B}_p[\alpha](\zeta') \simeq& \Res_{\zeta=0} \left( \int^\zeta_0 \omega_{0,2}(\zeta,\zeta') \right) \alpha(\zeta) \\
=& \left( \sum_{k=-M}^{-1} \alpha_k (\zeta')^k  + O(1) \right) d \zeta'.
\end{align}
In other words, the one-forms $\hat{B}_p[\alpha]$ and $\alpha$ have the same principal part at $p$. The same holds true for the one-form $\hat{B}_P[\alpha]$, where we now sum over all $q \in P$, since the projection at the other points $q \neq p$ in $P$ are holomorphic at $p \in P$.

As for showing that it is a projection, for any one-form $\alpha$ on $\Sigma$, we can write
\begin{equation}
\alpha = \hat{B}_P[\alpha] + \omega
\end{equation}
where $\omega$ is holomorphic on $P$. (This is clearly true since $\alpha$ and $\hat{B}_P[\alpha]$ have the same principal part on $P$.) But then, it is easy to see that $\hat{B}_P[\omega] = 0$, and hence $\hat{B}_P \circ \hat{B}_P = \hat{B}_P$.
\end{proof}

In fact, one can think of this projection operation as a ``local-to-global'' operation. Indeed, in Definition \ref{d:proj}, on the right-hand-side we don't really need $\alpha$ to be a well-defined one-form on $\Sigma$; all we need is its local data near the points $p \in P$. In other words, we could take as input only the local data of differentials on open sets around the points $p \in P$ (what is called the ``germ'' of a one-form in fancy language), and the output of the operation would be a globally defined one-form on $\Sigma$ that has the same principal part at the points $p \in P$ as the given local data. Neat!

Along these lines, this projection allows us to define a natural basis of one-forms on $\Sigma$ with prescribed poles at a point $p$:

\begin{definition}\label{d:xi}
Let $\omega_{0,2}$ be a fundamental bidifferential on $\Sigma$. Pick a point $p \in \Sigma$, with $\zeta$ a local coordinate centered at $p$. For $k \in \mathbb{N}$, we define the one-forms
\begin{align}
 \xi^{(p)}_{-k}(z') =& \hat{B}_p\left[ \frac{d \zeta(z)}{\zeta^{k+1}(z)} \right] (z') \\
 =&
  \Res_{z=p} \left( \int^{z}_{p} \omega_{0,2}(\cdot, z') \frac{d \zeta(z)}{\zeta^{k+1}(z)} \right).
\end{align}
\end{definition}

In other words, we start with the locally defined one-form $\frac{d \zeta}{\zeta^{k+1}}$ at $p$, and construct a globally defined one-form $\xi_{-k}(z)$ with the same principal part at $p$. That is, in local coordinate $\zeta$ centered at $p$, 
\begin{equation}
\xi^{(p)}_{-k}(z) \simeq \left( \frac{1}{\zeta(z)^{k+1}} + \text{holomorphic} \right) d \zeta(z).
\end{equation}

This is a particularly nice basis of one-forms because one-forms obtained by projection have a finite expansion in this basis:
\begin{lemma}\label{l:finite}
Let $\omega_{0,2}$ be a fundamental bidifferential on $\Sigma$, and $P \subset \Sigma$ a finite set of points. Let $\alpha$ be a meromorphic one-form on $\Sigma$. Then
\begin{equation}
\hat{B}_P[\alpha](z) = \sum_{p \in P} \sum_{k\in \mathbb{N}} \alpha_{p,k} \xi_{-k}(z),
\end{equation}
where only finitely many coefficients $\alpha_{p,k}$ are non-vanishing.
\end{lemma}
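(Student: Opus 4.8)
The plan is to reduce the statement to a single point $p\in P$ and then to the principal part of $\alpha$ at that point. First I would note that, straight from Definition \ref{d:proj}, the projection splits as a sum of single-point projections,
\[
\hat B_P[\alpha] \;=\; \sum_{p\in P}\hat B_{\{p\}}[\alpha],
\qquad
\hat B_{\{p\}}[\alpha](z') \;=\; \Res_{z=p}\bigl(\textstyle\int_p^z\omega_{0,2}(\cdot,z')\bigr)\alpha(z).
\]
Since $P$ is finite, it suffices to show that each $\hat B_{\{p\}}[\alpha]$ is a \emph{finite} linear combination of the one-forms $\xi^{(p)}_{-k}$, $k\in\mathbb{N}$.

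The key point, already emphasized after Definition \ref{d:proj}, is that $\hat B_{\{p\}}$ is a local operator: the residue at $z=p$ only sees the germ of $\alpha$ at $p$, and in fact it annihilates any germ holomorphic at $p$. Indeed, for $z'\neq p$ the function $z\mapsto \int_p^z\omega_{0,2}(\cdot,z')$ is holomorphic near $p$ and vanishes at $z=p$, so its product with a one-form holomorphic at $p$ is holomorphic near $p$ and vanishes there, hence has zero residue; thus $\hat B_{\{p\}}[\alpha]$ vanishes on $\Sigma\setminus\{p\}$ and, being a meromorphic one-form on $\Sigma$, vanishes identically. (This is the special case, already used in the proof of the preceding lemma, that $\hat B_P[\omega]=0$ whenever $\omega$ is holomorphic on $P$.) Consequently $\hat B_{\{p\}}[\alpha]$ depends on $\alpha$ only through its principal part at $p$.

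Then I would write $\alpha$ in the local coordinate $\zeta$ centred at $p$ as $\alpha\simeq\sum_{k\ge -M}\alpha_k\,\zeta^k\,d\zeta$, with $M$ the order of the pole of $\alpha$ at $p$ (take $M=0$ if $\alpha$ is holomorphic there), and set $\alpha_{p,k}:=\alpha_{-(k+1)}$ for $k\in\mathbb{N}$, which vanishes whenever $k\ge M$. The principal part of $\alpha$ at $p$ is then the \emph{finite} sum $\sum_{k\in\mathbb{N}}\alpha_{p,k}\,\zeta^{-(k+1)}\,d\zeta$, and $\alpha$ differs from it by a germ holomorphic at $p$. Using linearity of $\hat B_{\{p\}}$ together with the previous paragraph and Definition \ref{d:xi},
\[
\hat B_{\{p\}}[\alpha] \;=\; \sum_{k\in\mathbb{N}}\alpha_{p,k}\,\hat B_{\{p\}}\bigl[\tfrac{d\zeta}{\zeta^{k+1}}\bigr] \;=\; \sum_{k\in\mathbb{N}}\alpha_{p,k}\,\xi^{(p)}_{-k},
\]
a finite sum. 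Summing over $p\in P$ gives $\hat B_P[\alpha]=\sum_{p\in P}\sum_{k\in\mathbb{N}}\alpha_{p,k}\,\xi^{(p)}_{-k}$ with only finitely many nonzero coefficients, since $P$ is finite and, for each $p$, only $\alpha_{p,0},\dots,\alpha_{p,M-1}$ can be nonzero.

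I expect the only step requiring genuine care to be the second one — that $\hat B_{\{p\}}$ kills the holomorphic part of $\alpha$ at $p$, which is exactly what makes the expansion terminate; the rest is bookkeeping with Laurent coefficients. One should also be mindful that the principal part $\sum_k\alpha_{p,k}\,\zeta^{-(k+1)}\,d\zeta$ is only defined in a punctured neighbourhood of $p$, so the manipulations above genuinely take place at the level of germs at $p$, with $\hat B_{\{p\}}$ returning a globally defined meromorphic one-form on $\Sigma$ as output.
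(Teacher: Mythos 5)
Your proof is correct and follows essentially the same route as the paper's (much terser) argument: expand $\alpha$ in a local coordinate at each $p$, observe that the projection kills the holomorphic part and sends $\zeta^{-k-1}d\zeta$ to $\xi^{(p)}_{-k}$, so only the finitely many principal-part coefficients survive. You have simply filled in the details — the single-point decomposition and the residue computation showing that holomorphic germs are annihilated — which the paper leaves implicit.
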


\begin{proof}
Expand $\alpha$ in a local coordinate $\zeta$ at $p \in P$:
\begin{equation}
\alpha \simeq \sum_{k=-M}^\infty \alpha_k \zeta^{k} d \zeta.
\end{equation}
The projection operator turns the $\zeta^{-k-1}$ into $\xi_{-k}$ and kills the holomorphic part.
\end{proof}
This finite decomposition will prove to be very useful in the following. To end this section, we define the ``projection property''.

\begin{definition}\label{d:projprop1}
Let $\omega_{0,2}$ be a fundamental bidifferential on $\Sigma$, and $P \subset \Sigma$ a finite set of points. Let $\alpha$ be a meromorphic one-form on $\Sigma$. We say that $\alpha$ satisfies the \emph{projection property on $P$} if
\begin{equation}
\hat{B}_P[\alpha] = \alpha.
\end{equation}
By Lemma \ref{l:finite} this implies that $\alpha$ has a finite decomposition in the basis of one-forms $\xi_{-k}^{(p)}(z)$ for $p \in P$.
\end{definition}

\subsection{Topological recursion}

With this under our belt, we are now ready to define topological recursion. As mentioned previously, we will focus on simple spectral curves, for clarity. However, everything can be naturally generalized to spectral curves with arbitrary ramification, with the only price to pay being complicated combinatorics and ugly looking formulae.

Let $\mathcal{S} = (\Sigma,x,\omega_{0,1}, \omega_{0,2})$ be a simple spectral curve. Let $\zeta$ be a local coordinate centered at a ramification point $a \in \Ra$. Since $a$ is simple, locally near $a$ we have $x(\zeta) = x(a) + \zeta^2$ if $x(a) \in \mathbb{C}$, or $x(\zeta) = \frac{1}{\zeta^2}$ if $a$ is a pole of $x$. In both cases, there is a natural involution $\sigma_a: \zeta \mapsto - \zeta$ near $a$ that fixes $a$ and such that $x(\sigma_a(\zeta)) = x(\zeta)$. This is the involution that exchanges the two sheets of the branched covering $x: \Sigma \to \mathbb{P}^1$ that meet at the ramification point $a \in \Sigma$.

The goal of topological recursion is to construct a sequence of meromorphic differentials $\{\omega_{g,n} \}_{g \in \mathbb{N}, n \in \mathbb{N}^*}$, where $\omega_{g,n}$ is a symmetric meromorphic differential on $\Sigma^n$. We also impose that those differentials only have poles at the ramification points in $\Ra$ and that they are residueless. We will call those ``correlators''.

\begin{definition}\label{d:system}
Let $\mathcal{S} = (\Sigma,x,\omega_{0,1}, \omega_{0,2})$ be a simple spectral curve. A \emph{system of correlators on $\mathcal{S}$} is a collection $\{\omega_{g,n} \}_{g \in \mathbb{N}, n \in \mathbb{N}^* }$ of symmetric $n$-differentials on $\Sigma^n$, where $\omega_{0,1}$ and $\omega_{0,2}$ are already specified by $\mathcal{S}$, and for $2g-2+n>0$, $\omega_{g,n}$ only has poles on $\Ra$ with vanishing residues.
\end{definition}

To define topological recursion, we will need the following particular combination of correlators:
\begin{definition}\label{d:comb}
Let $\mathcal{S} = (\Sigma,x,\omega_{0,1}, \omega_{0,2})$ be a simple spectral curve, and $\{\omega_{g,n} \}_{g \in \mathbb{N}, n \in \mathbb{N}^* }$ a system of correlators on $\mathcal{S}$. Let $a \in \Ra$. We define
\begin{multline}\label{eq:comb}
\tilde{\omega}^{(a)}_{g,n}(z, z_2, \ldots, z_n) = \frac{1}{\omega_{0,1}(z) - \omega_{0,1}(\sigma_a(z))} \Big( \omega_{g-1,n+1}(z, \sigma_a(z), z_2, \ldots, z_n) \\+ \sum'_{\substack{g_1+g_2=g \\ I \cup J = \{z_2, \ldots, z_n \}}} \omega_{g_1, |I|+1}(z, I) \omega_{g_2, |J|+1} (\sigma_a(z), J) \Big),
\end{multline}
which is a locally defined one-form (in the variable $z$) in the neighborhood of $z=a$. Here, the second sum is over disjoint, possibly empty, subsets $I, J \subset \{ z_2, \ldots, z_n\}$ such that $I \cup J = \{z_2, \ldots, z_n\}$, and the prime means that we omit all terms involving $\omega_{0,1}$.  We note that $\tilde{\omega}_{g,n}^{(a)}$ is constructed out of $\omega_{g',n'}$ with $2g'-2+n' < 2g-2+n$.
\end{definition}

The definition may seem strange at first, as we are ``dividing by a one-form''. But this is just a formal manipulation; the term in brackets is a quadratic differential in the variable $z$, so by  ``dividing by a one-form'' we just mean cancelling a differential so that the result is a one-form in $z$.

We finally define topological recursion, which amounts to constructing globally defined differentials $\omega_{g,n}$ from the locally defined one-forms $\tilde{\omega}_{g,n}^{(a)}$ from Definition \ref{d:comb} using the projection operator from Definition \ref{d:proj}. In other words, topological recursion constructs globally defined differentials $\omega_{g,n}(z_1,\ldots,z_n)$ on $\Sigma^n$ that have the same principal parts (in $z_1$) as the local $\tilde{\omega}^{(a)}_{g,n}(z_1,\ldots,z_n)$.

\begin{definition}\label{d:TR}
We say that a system of correlators $\{\omega_{g,n} \}_{g \in \mathbb{N}, n \in \mathbb{N}^* }$ on a simple spectral curve $\mathcal{S} = (\Sigma,x,\omega_{0,1}, \omega_{0,2})$ satisfies \emph{topological recursion} if, for all $g,n$ such that $2g-2+n > 0$, 
\begin{align}\label{eq:TR}
\omega_{g,n}(z_1,\ldots,z_n) =&\sum_{a \in \Ra} \hat{B}_{a}[\tilde{\omega}^{(a)}_{g,n}(\cdot, z_2, \ldots, z_n)](z_1)\\
=& \sum_{a \in \Ra} \Res_{z=a} \left(\int_a^z  \omega_{0,2}(\cdot, z_1) \right) \tilde{\omega}^{(a)}_{g,n}(z, z_2, \ldots, z_n).
\end{align}
\end{definition}

We note that the correlators are uniquely reconstructed by topological recursion since, as noted in Definition \ref{d:comb}, $\tilde{\omega}^{(a)}_{g,n}$ only involves contributions from $\omega_{g',n'}$ with $2g'-2+n' < 2g-2+n$. Thus the formula is recursive on $2g-2+n$, with initial data given by $\omega_{0,1}$ and $\omega_{0,2}$. For $2g-2+n>0$, the correlators $\omega_{g,n}$ only have poles on $\Ra$, with order at most $6g-4+2n$.

\begin{remark}
It is common in the literature on topological recursion to introduce the \emph{recursion kernel}
\begin{equation}
K^{(a)}(z_1,z) =  \frac{\int^z_a \omega_{0,2}(\cdot, z_1) }{\omega_{0,1}(z) - \omega_{0,1}(\sigma_a(z))},
\end{equation}
which is locally defined (in $z$) near $z=a$. Then topological recursion can be rewritten as
\begin{multline}\label{eq:TR2}
\omega_{g,n}(z_1, \ldots, z_n) = \sum_{a \in \Ra} \Res_{z=a} K^{(a)}(z_1,z)  \Big( \omega_{g-1,n+1}(z, \sigma_a(z), z_2, \ldots, z_n) \\+ \sum'_{\substack{g_1+g_2=g \\ I \cup J = \{z_2, \ldots, z_n \}}} \omega_{g_1, |I|+1}(z, I) \omega_{g_2, |J|+1} (\sigma_a(z), J) \Big),
\end{multline}
which is a common way of writing topogical recursion in the literature.
\end{remark}

%\begin{remark}\label{r:proj}
%We can understand \eqref{eq:TR} using the projection operator of Definition \ref{d:proj}. Indeed, what topological recursion is doing is quite simple. The formula simply calculates the projection $\hat{B}_{\Ra}$ (in the variable $z$ to the variable $z_1$) of the expressions
%\begin{equation}
%\frac{1}{\omega_{0,1}(z) - \omega_{0,1}(\sigma(z))} \Big(\omega_{g-1,n+1}(z, \sigma(z), z_2, \ldots, z_n) + \sum'_{\substack{g_1+g_2=g \\ I \cup J = \{z_2, \ldots, z_n \}}} \omega_{g_1, |I|+1}(z, I) \omega_{g_2, |J|+1} (\sigma(z), J) \Big),
%\end{equation}
%which are defined locally around $z=a$. In other words, it constructs a globally defined differential $\omega_{g,n}(z_1,\ldots,z_n)$ on $\Sigma^n$ that has the same principal part (in $z_1$) as the local expressions above.
%
%\end{remark}

\subsubsection{Symmetry}

You may have noticed a strong similarity between \eqref{eq:comb}-\eqref{eq:TR} (or \eqref{eq:TR2}) and the recursive formula for the coefficients $F_{g,n}$ of a partition function $Z$ that we obtained from the differential constraints $\I Z = 0$ when the Airy ideal $\I$ is generated by differential operators $H_a$ that are quadratic in $\hbar$, see \eqref{eq:trA}.  Indeed, the combinatorics of topological recursion are roughly the same as the combinatorics of the action of differential operators $H_a$ that are degree two in $\hbar$  on a partition function $Z$. This is of course not a coincidence; we will see how the two formalisms are related in the next section.

As was the case with the recursion \eqref{eq:trA}, symmetry of the correlators produced by \eqref{eq:TR} is far from obvious, since $z_1$ plays a very different role from the other variables $z_2, \ldots, z_n$ in the formula. As symmetry is a defining property of a system of correlators in Definition \ref{d:system}, saying that a system of correlators satisfies topological recursion on a given spectral curve is equivalent to saying that \eqref{eq:TR} produces symmetric correlators on that spectral curve. Note that symmetry of the correlators produced by \eqref{eq:TR} on any admissible simple spectral curve has been proved by direct calculation in \cite{EO07}. It will also follow as a consequence of the relation with Airy structures, as we will see.

In fact, as mentioned previously the topological recursion formula can be generalized to spectral curves with arbitrary ramification, see \cite{BE13}. As in the particular case above, symmetry of the correlators is then far from obvious. It turns out that the correlators produced by the general form of topological recursion are symmetric whenever the spectral curve is admissible according to Definition \ref{d:admissible} --- this is in fact the reason for introducing this particular admissibility condition. However, in contrast to the case of simple ramification above, proof of symmetry in the general context has not been done by direct calculation, as it gets rather messy. It is rather obtained  in a much nicer way as a consequence of the relation with Airy structures (see \cite{BBCCN18}).

We can summarize these statements in the following theorem:

\begin{theorem}
Let $\mathcal{S} = (\Sigma,x,\omega_{0,1}, \omega_{0,2})$ be an admissible spectral curve. Construct a collection of differentials $\{\omega_{g,n} \}_{g \in \mathbb{N}, n \in \mathbb{N}^* }$ via \eqref{eq:TR} if the curve is simple, and via the natural generalization of topological recursion in \cite{BE13} for spectral curves with higher ramification points. Then the differentials $\omega_{g,n}$ are symmetric.
\end{theorem}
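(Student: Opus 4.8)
The plan is to deduce symmetry not from a direct residue computation, but by recognizing topological recursion as the recursion attached to an Airy ideal and then invoking Theorem~\ref{t:airy}. Observe first that symmetry of the correlators produced by \eqref{eq:TR} in the variables $z_2,\ldots,z_n$ is manifest, so the only thing at stake is invariance under exchanging $z_1$ with any other variable; this is precisely the non-obvious symmetry of the Airy-structure recursion \eqref{eq:trA}. That symmetry is automatic once one knows that the generating differential operators form an Airy ideal: Theorem~\ref{t:airy} then produces a genuine partition function $Z$ of the form \eqref{eq:pfFgncoeff}, whose coefficients $F_{g,n}[k_1,\ldots,k_n]$ are symmetric by construction, and uniqueness of the solution of the recursion pins down the correlators.

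To set up the dictionary I would, for each $a\in\Ra$, fix the local coordinate $\zeta$ in which $x=x(a)+\zeta^r$ (or $\zeta^{-r}$) and build the one-forms $\xi^{(a)}_{-k}$ of Definition~\ref{d:xi}; as index set I take $A=\bigsqcup_{a\in\Ra}S_a$, where $S_a$ is the set of exponents relevant at $a$ (the odd positive integers at a simple Bessel- or Airy-type point, and the appropriate set dictated by $(r,s)$ in general). Given a system of correlators produced by \eqref{eq:TR}, each $\omega_{g,n}$ lies in the image of the projector $\hat B_\Ra$ that defines it, hence satisfies the projection property on $\Ra$, and so by Lemma~\ref{l:finite} has a \emph{finite} expansion
\[ \omega_{g,n}(z_1,\ldots,z_n) = \sum_{k_1,\ldots,k_n\in A} \frac{1}{n!}\,F_{g,n}[k_1,\ldots,k_n]\,\xi_{-k_1}(z_1)\cdots\xi_{-k_n}(z_n), \]
which both defines coefficients $F_{g,n}[k_1,\ldots,k_n]$ (hence a formal series $Z$ as in \eqref{eq:pfFgncoeff}) and recovers $\omega_{g,n}$ from them. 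Consequently $\omega_{g,n}$ is symmetric if and only if $F_{g,n}[k_1,\ldots,k_n]$ is symmetric in its arguments.

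The heart of the argument is to show that, under this dictionary, the residue formula \eqref{eq:TR} becomes exactly the Airy-structure recursion: \eqref{eq:trA} for simple spectral curves and its $\mathcal W$-algebra generalization of \cite{BBCCN18} in general. Concretely I would feed the basis products $\xi^{(a)}_{-k}(z)\,\xi^{(a)}_{-\ell}(\sigma_a(z))$ (and the term $\omega_{g-1,n+1}(z,\sigma_a(z),z_2,\ldots,z_n)$) into the recursion kernel $K^{(a)}(z_1,z)$ and check, using the local normal forms of Examples~\ref{e:Airy}, \ref{e:Bessel} and \ref{e:rs} together with admissibility, that the residue at $z=a$ closes back onto the $\xi$-basis with exactly the combinatorics of \eqref{eq:trA}; the tensors $A_{abc},B_{abc},C_{abc},D_a$ are read off from the local expansions of $\omega_{0,1}$ and $\omega_{0,2}$, the non-diagonal part $\sum_{l,m>0}\phi_{lm}\,\zeta_1^{l-1}\zeta_2^{m-1}\,d\zeta_1\,d\zeta_2$ of $\omega_{0,2}$ feeding the $B$-type terms and the $\omega_{g-1,n+1}$ term feeding the $C$-type terms. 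One then verifies that the associated operators $H_a=\hbar\partial_a+O(\hbar^2)$ form a bounded collection generating a left ideal $\I\subset\DAh$ with $[\I,\I]\subseteq\hbar^2\I$: here admissibility enters decisively, since the conditions that $r$ and $s$ be coprime, that $r\equiv\pm1\bmod s$, and that $\bar s\in\{s,s-1\}$ are exactly what make these constraints close into (one copy at each ramification point of) a Virasoro representation, or a $\mathcal W(\mathfrak{gl}_r)$ representation at self-dual level, hence an Airy ideal by Lemma~\ref{l:uea}. Theorem~\ref{t:airy} then yields a unique $Z$ of the form \eqref{eq:pfFgncoeff} with $\I Z=0$, whose coefficients are symmetric; by uniqueness of the solution of the recursion they coincide with the coefficients of the TR correlators, so the latter are symmetric.

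The main obstacle is this last identification — verifying that the residue calculus of the kernel on products of $\xi$'s reproduces the Airy combinatorics with the correct tensors, and that admissibility is precisely the hypothesis forcing $[\I,\I]\subseteq\hbar^2\I$. For simple spectral curves this is a finite though intricate computation in two cases, which reduce essentially to the BGW and Kontsevich--Witten Virasoro algebras of Sections~\ref{s:BGW} and~\ref{s:KW}; for higher ramification it requires the machinery of \cite{BBCCN18} matching the constraints with modules over $\mathcal W(\mathfrak{gl}_r)$. A secondary point is boundedness of $\{H_a\}_{a\in A}$: each ramification point contributes only finitely many operators affecting any fixed monomial, and under the standing assumption that $\Ra$ is finite (automatic when $\Sigma$ is compact) the collection is bounded, so no divergent sums arise and the reconstruction of $Z$ goes through.
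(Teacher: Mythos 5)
Your proposal is correct and follows essentially the same route as the paper: the paper also derives symmetry by recasting the loop equations plus the projection property as differential constraints generating an Airy ideal (one conjugated Virasoro or $\mathcal{W}(\mathfrak{gl}_r)$ module per ramification point, with admissibility being exactly the condition for the Airy-ideal property to hold), and then invoking Theorem \ref{t:airy} to get a unique partition function with manifestly symmetric coefficients that must agree with the TR output. The only difference is one of emphasis — the paper additionally notes that the simple-ramification case admits a direct residue computation following \cite{EO07}, whereas you run the Airy-structure argument uniformly, which is also what the paper does for higher ramification via \cite{BBCCN18}.
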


\subsubsection{Projection property}

Another key property of the correlators produced by topological recursion is that they satisfy the projection property. Recall the definition of the projection property in Definition \ref{d:projprop1}.
%
%\begin{definition}\label{d:projprop}
%Let $\{\omega_{g,n} \}_{g \in \mathbb{N}, n \in \mathbb{N}^* }$ be a system of correlators on a spectral curve $\mathcal{S} = (\Sigma,x,\omega_{0,1}, \omega_{0,2})$. We say that the correlators satisfy the \emph{projection property} if, for $2g-2+n>0$, 
%\begin{align}
%\omega_{g,n}(z_1, \ldots, z_n) =& \sum_{a \in \Ra} \hat{B}_a[ \omega_{g,n}(\cdot, z_2, \ldots, z_n)](z_1)\\
%=& \sum_{a \in \Ra} \Res_{z=a} \left( \int^z_a \omega_{0,2}(\cdot, z_1) \right) \omega_{g,n}(z, z_2, \ldots, z_n),
%\end{align}
%where in the first line we apply the projection operator from Definition \ref{d:proj} in the first variable.
%\end{definition}

\begin{lemma}
Let $\{\omega_{g,n} \}_{g \in \mathbb{N}, n \in \mathbb{N}^* }$ be a system of correlators on a simple spectral curve $\mathcal{S} = (\Sigma,x,\omega_{0,1}, \omega_{0,2})$ that satisfies topological recursion. Then the correlators satisfy the projection property on $\Ra$ (in the variable $z_1$). That is,
\begin{align}
\omega_{g,n}(z_1, \ldots, z_n) =&  \hat{B}_{\Ra}[ \omega_{g,n}(\cdot, z_2, \ldots, z_n)](z_1)\\
=& \sum_{a \in \Ra} \Res_{z=a} \left( \int^z_a \omega_{0,2}(\cdot, z_1) \right) \omega_{g,n}(z, z_2, \ldots, z_n).
\end{align}
We note that the statement remains true on any admissible spectral curve, with the correlators computed via the generalization of topological recursion from \cite{BE13}.
\end{lemma}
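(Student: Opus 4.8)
The essential observation is that topological recursion already \emph{exhibits} $\omega_{g,n}$ as a sum of local projections: by \eqref{eq:TR},
\[
\omega_{g,n}(\,\cdot\,, z_2, \ldots, z_n) \;=\; \sum_{a \in \Ra} \hat{B}_{a}\bigl[\tilde{\omega}^{(a)}_{g,n}(\,\cdot\,, z_2, \ldots, z_n)\bigr].
\]
So I would not touch the internal structure of the $\tilde{\omega}^{(a)}_{g,n}$ at all; instead I would prove the abstract fact that the operators $\hat{B}_a$, $a \in \Ra$, are mutually orthogonal idempotents, i.e.\ $\hat{B}_b \circ \hat{B}_a = \delta_{a,b}\,\hat{B}_a$, and then apply $\hat{B}_{\Ra} = \sum_{b \in \Ra} \hat{B}_b$ to the displayed identity.

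\noindent\textbf{Key steps.} First I would record the ``local-to-global'' property of $\hat{B}_a$ noted after Definition \ref{d:proj}: for any meromorphic one-form (indeed any germ of a one-form) $\beta$ near $a$, the one-form $\hat{B}_a[\beta](z_1)$ is holomorphic in $z_1$ away from $z_1 = a$. This is immediate from the residue formula in Definition \ref{d:proj} together with the defining property of $\omega_{0,2}$, whose only singularity is the diagonal double pole: evaluating $\Res_{z=a}\bigl(\int_a^z \omega_{0,2}(\,\cdot\,,z_1)\bigr)\beta(z)$ produces, as a function of $z_1$, something regular wherever $z_1$ stays away from a neighbourhood of $a$. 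Consequently each summand $\hat{B}_a[\tilde{\omega}^{(a)}_{g,n}]$ is holomorphic at every $b \in \Ra$ with $b \neq a$. The Lemma following Definition \ref{d:proj} then does the rest: it tells us that $\hat{B}_b$ annihilates one-forms holomorphic at $b$ (so $\hat{B}_b \circ \hat{B}_a = 0$ for $b \neq a$) and that $\hat{B}_a \circ \hat{B}_a = \hat{B}_a$ (the case $P = \{a\}$ of $\hat{B}_P \circ \hat{B}_P = \hat{B}_P$). This gives the orthogonality relation. Applying $\hat{B}_{\Ra}$ to the displayed identity and using linearity and orthogonality,
\begin{align*}
\hat{B}_{\Ra}\bigl[\omega_{g,n}(\,\cdot\,, z_2, \ldots, z_n)\bigr]
&= \sum_{b \in \Ra}\sum_{a \in \Ra} \hat{B}_b\Bigl[\hat{B}_a\bigl[\tilde{\omega}^{(a)}_{g,n}(\,\cdot\,, z_2, \ldots, z_n)\bigr]\Bigr] \\
&= \sum_{a \in \Ra} \hat{B}_a\bigl[\tilde{\omega}^{(a)}_{g,n}(\,\cdot\,, z_2, \ldots, z_n)\bigr]
= \omega_{g,n}(z_1, \ldots, z_n),
\end{align*}
which is exactly the projection property on $\Ra$ in the variable $z_1$; expanding $\hat{B}_{\Ra}$ with the residue formula of Definition \ref{d:proj} yields the second displayed equality in the statement. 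For an admissible spectral curve with higher ramification, the only input used is that the generalization of topological recursion in \cite{BE13} again writes $\omega_{g,n}(\,\cdot\,, z_2, \ldots, z_n)$ as $\sum_{a \in \Ra}\hat{B}_a$ applied to data built from lower correlators near $a$ (only the recursion kernel is more elaborate); admissibility, via the symmetry theorem, guarantees that the $\omega_{g,n}$ so produced form a genuine system of correlators with poles only on $\Ra$, so the same orthogonality bookkeeping applies verbatim.

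\noindent\textbf{Main obstacle.} There is no real analytic difficulty here; the argument is essentially bookkeeping once the orthogonality of the $\hat{B}_a$ is in hand. The one point that deserves genuine care is the claim that $\hat{B}_a[\beta]$ has no poles away from $a$: this is the ``local-to-global'' remark, and in the compact case it is underpinned by the $a$-cycle normalization of $\omega_{0,2}$ in Lemma \ref{l:bergman}, which is what removes the holomorphic ambiguity that would otherwise spoil both this claim and the idempotency $\hat{B}_a \circ \hat{B}_a = \hat{B}_a$. The second place where one must be slightly careful is the higher-ramification case, namely in checking that the \cite{BE13} recursion really does have the shape $\sum_{a \in \Ra}\hat{B}_a[\text{local data}]$, so that the orthogonality argument transfers without change.
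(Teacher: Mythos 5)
Your argument is correct and is essentially the paper's own proof, which simply observes that the topological recursion formula already writes $\omega_{g,n}$ as $\sum_{a\in\Ra}\hat{B}_a[\tilde{\omega}^{(a)}_{g,n}]$ and that $\hat{B}_{\Ra}$ is a projection; you have merely made explicit the mutual orthogonality $\hat{B}_b\circ\hat{B}_a=\delta_{a,b}\hat{B}_a$ that the paper leaves implicit. One small correction: the holomorphicity of $\hat{B}_a[\beta]$ away from $a$ and the idempotency need only the defining property that $\omega_{0,2}$'s sole pole is the diagonal double pole, not the Torelli/$a$-cycle normalization of Lemma \ref{l:bergman}, which plays no role here.
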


\begin{proof}
This is clear by definition of topological recursion (see Definition \ref{d:TR}) and the fact that $\hat{B}_{\Ra}$ is a projection. For general admissible spectral curves, this is explained in \cite{BBCKS23}.
\end{proof}

As explained in Lemma \ref{l:finite}, a direct consequence of the projection property is the existence of a finite expansion for the correlators in terms of the basis of one-forms $ \xi_{-k}^{(a)}$ introduced in Definition \ref{d:xi}.

\begin{lemma}
Let $\{\omega_{g,n} \}_{g \in \mathbb{N}, n \in \mathbb{N}^* }$ be a system of correlators on a spectral curve $\mathcal{S} = (\Sigma,x,\omega_{0,1}, \omega_{0,2})$ that satisfy the projection property on $\Ra$. Define the basis of one-forms $ \xi_{-k}^{(a)}$ for each ramification point $a \in \Ra$ as in Definition \ref{d:xi}. Then, for $2g-2+n>0$, the correlators take the form
\begin{equation}\label{eq:expansion}
\omega_{g,n}(z_1,\ldots,z_n) = \sum_{a_1, \ldots, a_n \in \Ra} \sum_{k_1, \ldots, k_n \in \mathbb{N}^*} F_{g,n} \begin{bmatrix} a_1 & \ldots & a_n \\ k_1 & \ldots & k_n \end{bmatrix} \xi_{-k_1}^{(a_1)}(z_1) \cdots  \xi_{-k_n}^{(a_n)}(z_n),
\end{equation}
where only a finite number of coefficients are non-zero.
\end{lemma}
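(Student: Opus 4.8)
The plan is to reduce the statement to a one-variable claim applied $n$ times. First I would fix all variables except $z_1$ and regard $\omega_{g,n}(\cdot, z_2, \ldots, z_n)$ as a meromorphic one-form on $\Sigma$. By hypothesis the correlators satisfy the projection property on $\Ra$ in the variable $z_1$, so Lemma \ref{l:finite} (specialized to $P = \Ra$) immediately gives
\begin{equation*}
\omega_{g,n}(z_1, z_2, \ldots, z_n) = \sum_{a_1 \in \Ra} \sum_{k_1 \in \mathbb{N}} \alpha_{a_1, k_1}(z_2, \ldots, z_n)\, \xi_{-k_1}^{(a_1)}(z_1),
\end{equation*}
where for each fixed $(z_2, \ldots, z_n)$ only finitely many coefficients are non-zero, and each coefficient $\alpha_{a_1,k_1}$ is obtained as a residue of $\omega_{g,n}$ against the local data of $\omega_{0,2}$ at $a_1$.

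Next I would observe that each coefficient $\alpha_{a_1, k_1}(z_2, \ldots, z_n)$ is itself a symmetric meromorphic differential on $\Sigma^{n-1}$ in the remaining variables — this follows because the projection/residue extraction in $z_1$ commutes with the poles in the other variables, and because $\omega_{g,n}$ only has poles on $\Ra$ in every variable (Definition \ref{d:system}), a property inherited by $\alpha_{a_1,k_1}$. Moreover $\alpha_{a_1,k_1}$ still satisfies the projection property on $\Ra$ in $z_2$, since projecting in $z_1$ and in $z_2$ are independent operations that commute. So I would iterate: apply the one-variable decomposition successively in $z_2, \ldots, z_n$, peeling off one $\xi$-factor at a time, until after $n$ steps every variable has been expanded and the remaining coefficient is a scalar, which I name $F_{g,n}\begin{bmatrix} a_1 & \cdots & a_n \\ k_1 & \cdots & k_n \end{bmatrix}$. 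This yields exactly \eqref{eq:expansion}.

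It remains to argue that only finitely many of these scalar coefficients are non-zero overall, not merely finitely many for each fixed tuple of the other variables. Here I would invoke that $\omega_{g,n}$ is a fixed meromorphic differential whose pole order at each ramification point is bounded — indeed the text notes that correlators produced by topological recursion have poles on $\Ra$ of order at most $6g-4+2n$. The projection at $a_i$ only picks up the principal part, so $k_i$ is bounded by (roughly) $6g-4+2n$ in the appropriate local coordinate; combined with $|\Ra|$ being finite, this bounds the total number of admissible index tuples $(a_1, \ldots, a_n, k_1, \ldots, k_n)$, hence the number of non-zero coefficients is finite.

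The main obstacle I anticipate is the commutativity/iteration bookkeeping in the second step: one must be careful that after extracting the $z_1$-dependence as a residue, the resulting coefficient genuinely remains a \emph{meromorphic} object with poles only on $\Ra$ in the other variables, so that Lemma \ref{l:finite} is applicable again — a subtlety if one worried that the residue operation could introduce new singularities. This is handled by noting the residue in $z_1$ is a contour integral over a small loop around $a_1 \in \Ra$, which is disjoint from the (also $\Ra$-supported) singular locus in the remaining variables, so differentiating under / taking residues preserves meromorphy and the pole structure. The symmetry of the final coefficients is then not something to prove separately — it is forced by the symmetry of $\omega_{g,n}$ together with the fact that the $\xi_{-k}^{(a)}$ are a fixed basis, so a permutation of arguments permutes the expansion and uniqueness of the coefficients does the rest.
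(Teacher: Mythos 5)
Your argument is correct and is essentially the paper's own: the paper records this lemma as an immediate consequence of Lemma \ref{l:finite} applied variable by variable (using symmetry of the correlators to get the projection property in every slot), which is exactly the iteration you spell out, and your finiteness argument via the bounded pole order of a fixed meromorphic differential is the right general version of the observation. The one point you leave implicit is why the sum runs over $k_i \in \mathbb{N}^*$ rather than $k_i \in \mathbb{N}$ as in Lemma \ref{l:finite}: the $k=0$ coefficient at a ramification point is the residue of $\omega_{g,n}$ there, which vanishes because a system of correlators is residueless by Definition \ref{d:system}.
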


This property is crucial. As we will see, generally speaking the coefficients $F_{g,n} $ of this expansion will have an interesting interpretation in terms of the moduli space of curves.

\begin{remark}\label{r:acycles}
When $\Sigma$ is a compact Riemann surface with a Torelli marking, we can think of the projection property in a different way. In this case, recall from Lemma \ref{l:bergman} that $\omega_{0,2}$ is uniquely fixed by imposing the normalization condition
\begin{equation}
\oint_{a_i} \omega_{0,2}(\cdot, z_2) = 0, \qquad \text{for all $i \in \{1,\ldots,\bar{g} \}$.}
\end{equation}
With this choice of fundamental bidifferential, it follows from the Riemannn bilinear identity that the statement that the correlators $\omega_{g,n}$ satisfy the projection property on $\Ra$ is equivalent to saying that they are normalized on $a$-cycles. That is, for all $2g-2+n>0$,
\begin{equation}
\oint_{a_i} \omega_{g,n}(\cdot, z_2, \ldots, z_n) = 0, \qquad \text{for all $i \in \{1,\ldots,\bar{g} \}$.}
\end{equation}
This is how the projection property was usually stated in the older literature on topological recursion.
\end{remark}

\subsubsection{Graphical interpretation}

The combinatorics involved in the topological recursion formula \eqref{eq:TR2} have a nice graphical interpretation. For now, this is simply a mnemonic trick to remember the terms on the right-hand-side of \eqref{eq:TR2}. However, the graphical interpretation takes roots in the geometry of the moduli space of curves, which is the natural setup to interpret the correlators produced by topological recursion. 

The idea of the graphical interpretation goes as follows. For each ramification point $a$, we do the following.
\begin{itemize}
\item To each $\omega_{g,n}(z_1,\ldots,z_n)$, we attach a genus $g$ Riemann surface with $n$ boundary, and label the boundaries with the variables $z_1,\ldots,z_n$;
\item To the recursion kernel $K^{(a)}(z_1,z)$, we attach a ``pair of pants'' (that is, a genus $0$ Riemann surface with three boundaries), and label the boundaries with $z_1$, $z$ and $\sigma_a(z)$. 
\end{itemize}

To get the terms on the right-hand-side of the recursion for $\omega_{g,n}(z_1,\ldots,z_n)$, we start with a genus $g$ Riemann surface with $n$ boundaries labeled by the the variables $z_1,\ldots,z_n$, and we cut off a pair of pants such that it includes the boundary labeled by $z_1$. This creates two new boundaries on the resulting Riemann surface (which may or may not be connected), which we label by $z$ and $\sigma_a(z)$ to match the two other boundaries of the pair of pants.

We do this in all possible ways, with the only constraint that we never allow discs (i.e. genus zero surfaces with only one boundary). This creates all terms on the right-hand-side of topological recursion \eqref{eq:TR}. This is exemplified for $\omega_{1,2}(z_1,z_2)$ in figure \ref{f:graphical}.

\begin{center}
\begin{figure}[H]
\includegraphics[width=\textwidth]{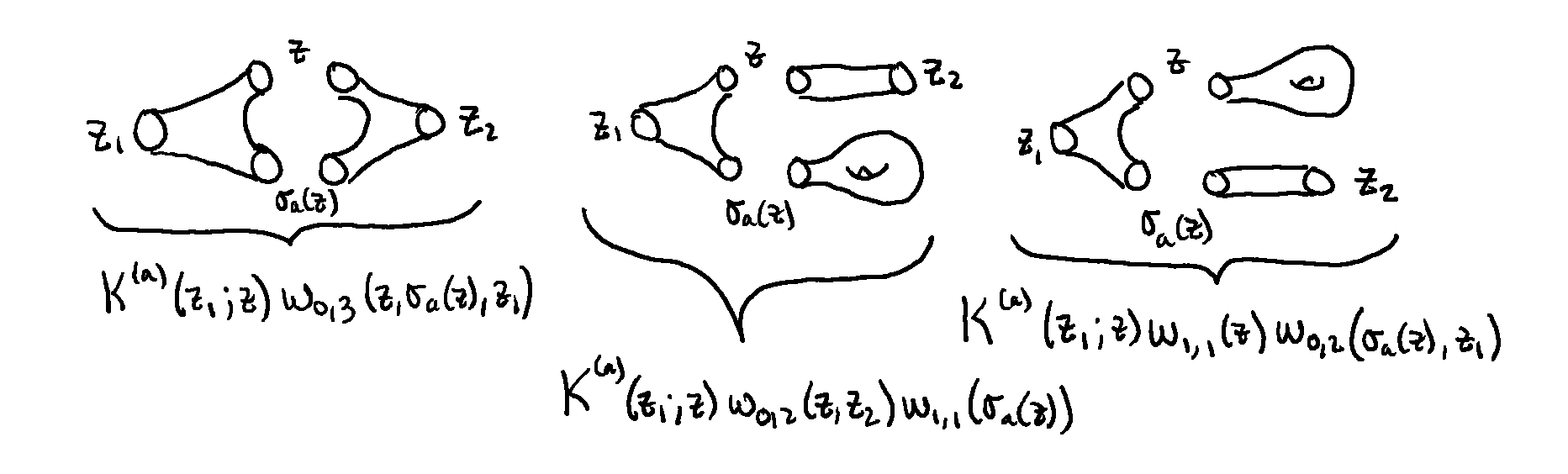}
\caption{A graphical interpretation in terms of pairs of pants of the right-hand-side of the topological recursion formula \eqref{eq:TR} for $\omega_{1,2}(z_1,z_2)$.}
\label{f:graphical}
\end{figure}
\end{center}

Another way to think of the recursive structure of \eqref{eq:TR} is to represent the terms on the right-hand-side of \eqref{eq:TR} as nodal Riemann surfaces (see the lecture notes on moduli spaces of Riemann surfaces from this school for this kind of drawing \cite{GL24}). To get the terms on the right-hand-side of the recursion for $\omega_{g,n}(z_1,\ldots,z_n)$, we draw all possible topologically inequivalent connected nodal Riemann surfaces of genus $g$ with $n$ marked points and such that:
\begin{itemize}
\item There is one component that is a sphere with three marked points, one of which is the point $z_1$ and the other two are nodes;
\item There are no other nodes;
\item There is no component that is a sphere with only one marked point.
\end{itemize}
We then take the normalization of the nodal surfaces, attach the kernel $K^{(a)}(z_1;z)$ to the singled out sphere with three marked points, and attach appropriate correlators $\omega_{g,n}$ to the other components as above.

Note that those are not necessarily stable nodal Riemann surfaces, as we may have components that are spheres with two marked points. What this pictorial representation shows is that the recursive structure seems to be somewhat related to going to the boundary of the compactified moduli space of curves $\overline{\mathcal{M}}_{g,n}$, but in a not so straightforward way. This is exemplified for $\omega_{1,2}(z_1,z_2)$ in figure \ref{f:graphical2}.

\begin{center}
\begin{figure}[H]
\includegraphics[width=\textwidth]{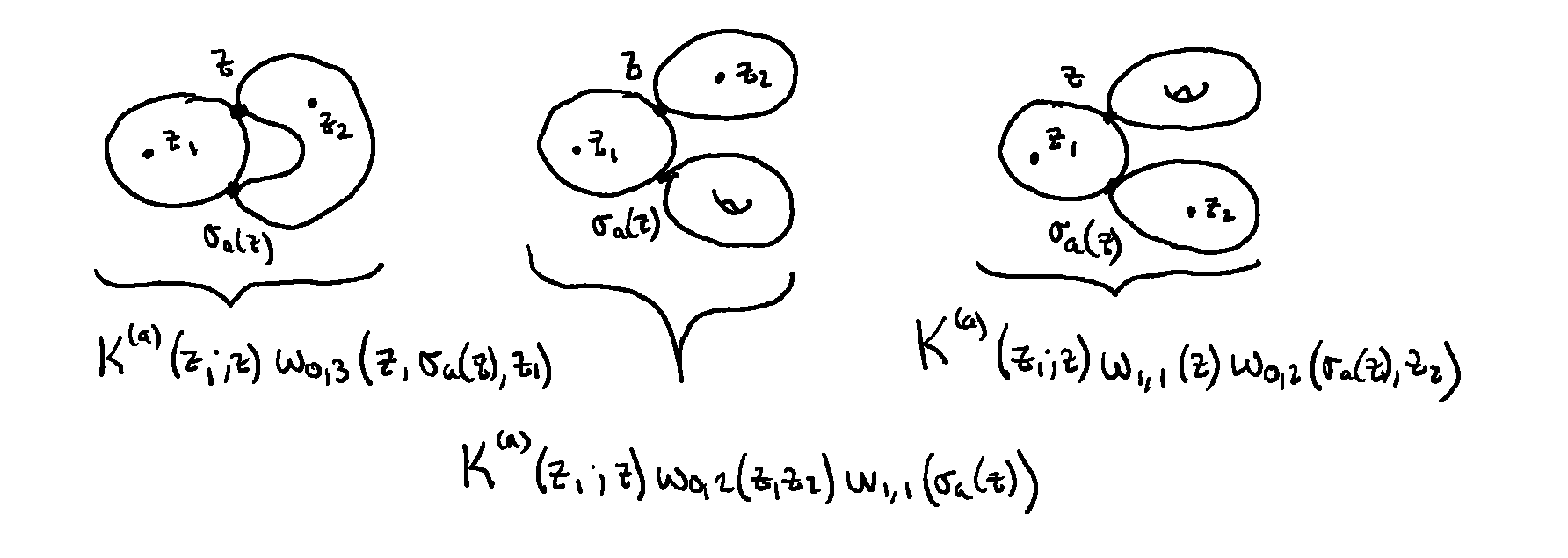}
\caption{A graphical interpretation in terms of nodal surfaces of the right-hand-side of the topological recursion formula \eqref{eq:TR} for $\omega_{1,2}(z_1,z_2)$.}
\label{f:graphical2}
\end{figure}
\end{center}

\begin{remark}
Using this graphical interpretation it is easy to explain the combinatorics that arise in topological recursion for ramification points with ramification order $r > 2$. The idea is that, instead of extracting only a pair of pants, if a ramification point has order $r$, we extract spheres with $m+1$ boundaries, for all $m \in \{2,\ldots,r\}$. We do this in all possible ways, following the rules above, and this gives rise to all the terms on the right-hand-side of the generalized topological recursion from \cite{BE13}. You can find pretty pictures of this process in Section 2.2.3 of \cite{BBCKS23}. 

Equivalently, we can also thinks of the generalized case from the point of view of nodal surfaces. The rules are as above, but we now draw all connected nodal surfaces with one component that is a sphere with $m+1$ marked points, for all $m \in \{2,\ldots,r\}$, one of which is the point $z_1$ and the other $m$ are nodes. Pretty pictures of this type can be found in \cite{BHLMR13}.

Note that this is also roughly the type of combinatorics that arise from Airy ideals, when the differential operators $H_a$ are of degree $r$ in $\hbar$, which is again not a coincidence.
\end{remark}

\subsubsection{Dilaton equation}

Correlators that satisfy topological recursion share many more interesting properties; we refer the reader to \cite{EO07,EO08} for more on this. In this section we simply state one more important property satisfied by the correlators, which is known as the dilaton equation.

\begin{theorem}\label{t:dilaton}

Let $\mathcal{S} = (\Sigma,x,\omega_{0,1}, \omega_{0,2})$ be an admissible spectral curve. Construct a collection of differentials $\{\omega_{g,n} \}_{g \in \mathbb{N}, n \in \mathbb{N}^* }$ via \eqref{eq:TR} if the curve is simple, and via the natural generalization of topological recursion in \cite{BE13} for spectral curves with higher ramification points. Then the differentials $\omega_{g,n}$, for $2g-2+n>0$, satisfy the dilaton equation:
\begin{equation}\label{eq:dilaton}
\omega_{g,n}(z_1,\ldots,z_n) = \frac{1}{2g-2+n} \sum_{a \in \Ra} \Res_{z=a} \Phi(z) \omega_{g,n+1}(z,z_1,\ldots,z_n),
\end{equation}
where $\Phi(z)$ is an arbitrary antiderivative of $\omega_{0,1}(z)$; that is, $\omega_{0,1}(z) = d \Phi(z)$.
\end{theorem}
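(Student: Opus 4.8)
The plan is to argue by induction on $2g-2+n$, feeding the right‑hand side of \eqref{eq:dilaton} into topological recursion and tracking how the integration against $\Phi$ interacts with the recursion residues. I treat simple spectral curves; the higher‑ramification case is structurally identical, with heavier combinatorics. It is convenient to use the equivalent form
\[
\sum_{a\in\Ra}\Res_{z_0=a}\Phi(z_0)\,\omega_{g,n+1}(z_0,z_1,\ldots,z_n)=(2g-2+n)\,\omega_{g,n}(z_1,\ldots,z_n),
\]
which for $(g,n)=(0,2)$ degenerates to the statement that the left‑hand side vanishes; this will be the base of the induction.

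First I would expand $\omega_{g,n+1}(z_0,z_1,\ldots,z_n)$ by \eqref{eq:TR2}, running the recursion on the variable $z_1$ (which is legitimate by symmetry) and calling the recursion variable $w$; the ``dilaton variable'' $z_0$ then appears only as a passive argument of the correlators on the right. The resulting terms split into two classes according to which correlator carries $z_0$. In the \emph{stable class}, $z_0$ sits inside some correlator $\omega_{g',n'}$ with $2g'-2+n'>0$. Such correlators have poles only on $\Ra$, so the dilaton residue $\sum_{a}\Res_{z_0=a}\Phi(z_0)$ commutes past the recursion residue in $w$ with no correction, and the inductive hypothesis replaces the correlator carrying $z_0$ by $(2g'-2+n'-1)$ times the same correlator with $z_0$ deleted. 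A short weight count then shows that, on reassembling \eqref{eq:TR} for $\omega_{g,n}$, every term acquires the common coefficient $2g+n-3$, so the stable class contributes exactly $(2g-3+n)\,\omega_{g,n}(z_1,\ldots,z_n)$. (One also checks that internal copies of $\omega_{0,2}$ cause no trouble: inserting $z_0$ into an $\omega_{0,2}$ produces an $\omega_{0,3}$, which is stable and on which the inductive hypothesis gives the weight $2\cdot 0-2+2=0$, consistently with the count.)

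The \emph{unstable class} consists of the terms in which $z_0$ is the lone argument of a copy of $\omega_{0,2}$, namely $\omega_{0,2}(w,z_0)\,\omega_{g,n}(\sigma_a(w),z_2,\ldots,z_n)$ and $\omega_{g,n}(w,z_2,\ldots,z_n)\,\omega_{0,2}(\sigma_a(w),z_0)$. Here $\omega_{0,2}$ has a double pole on the diagonal $z_0=w$ (resp.\ $z_0=\sigma_a(w)$), so $\sum_a\Res_{z_0=a}$ does \emph{not} commute past the $w$‑residue: an honest exchange of these two iterated residues produces an extra term equal to the residue along the diagonal. Using the normalisation of $\omega_{0,2}$ (biresidue $1$) one gets $\Res_{z_0=w}\Phi(z_0)\,\omega_{0,2}(w,z_0)=d\Phi(w)=\omega_{0,1}(w)$, and after this correction the unstable class collapses to
\[
\sum_{a\in\Ra}\Res_{w=a}K^{(a)}(z_1,w)\Bigl(\omega_{0,1}(w)\,\omega_{g,n}(\sigma_a(w),z_2,\ldots,z_n)+\omega_{0,1}(\sigma_a(w))\,\omega_{g,n}(w,z_2,\ldots,z_n)\Bigr).
\]
The last step is to identify this sum with $\omega_{g,n}(z_1,\ldots,z_n)$. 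Splitting the bracket into its parts that are invariant and anti‑invariant under the local involution $\sigma_a$, and dividing by the (anti‑invariant) one‑form $\omega_{0,1}(w)-\omega_{0,1}(\sigma_a(w))$ sitting inside $K^{(a)}$, the anti‑invariant piece of the integrand is residueless, while the invariant piece equals, after using the explicit form of $K^{(a)}$, precisely the projection $\hat B_{\Ra}[\omega_{g,n}(\cdot,z_2,\ldots,z_n)](z_1)$; by the projection property this equals $\omega_{g,n}(z_1,\ldots,z_n)$. In the degenerate case $(g,n)=(0,2)$ the stable class is empty and this last step returns $0$ (since $\omega_{0,2}$, having no poles on $\Ra$, fails the projection property), which establishes the base case. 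Adding the two classes gives $(2g-3+n)\,\omega_{g,n}+\omega_{g,n}=(2g-2+n)\,\omega_{g,n}$, as desired.

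The part I expect to require the most care is the unstable class: keeping the iterated residues in the correct order, accounting for every diagonal contribution and sign, and carrying out the final parity‑plus‑projection identity. An alternative and perhaps cleaner route avoids this by combining two facts. First, topological recursion is homogeneous under rescaling $\omega_{0,1}\mapsto\lambda\omega_{0,1}$: one gets $\omega_{g,n}\mapsto\lambda^{2-2g-n}\omega_{g,n}$, which is an immediate induction on $2g-2+n$ using \eqref{eq:TR}, since the recursion kernel scales as $\lambda^{-1}$ while $\omega_{0,2}$ is left unchanged. Second, a Rauch‑type variational formula expresses $\tfrac{d}{d\lambda}\big|_{\lambda=1}\omega_{g,n}$ as the insertion $-\sum_{a\in\Ra}\Res_{z_0=a}\Phi(z_0)\,\omega_{g,n+1}(z_0,z_1,\ldots,z_n)$. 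Comparing the two yields \eqref{eq:dilaton}.
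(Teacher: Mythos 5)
First, note that the paper does not actually prove Theorem \ref{t:dilaton}: it delegates the simple case to \cite{EO07} and the general case to limits as in \cite{BBCKS23} or to the Airy-structure equivalence. Your induction is essentially the classical Eynard--Orantin argument, and its skeleton is sound: running the recursion in $z_1$ with $z_0$ passive, the stable class does reassemble with the uniform weight $2g-3+n$ (the exchange of residues is clean there because stable correlators have their $z_0$-poles at the fixed points of $\Ra$, not on a $w$-dependent diagonal), the diagonal residue $\Res_{z_0=w}\Phi(z_0)\,\omega_{0,2}(w,z_0)=\omega_{0,1}(w)$ is the correct correction term, and the degenerate $(0,2)$ base case works as you describe. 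I checked the mechanism explicitly on the Airy curve for $(g,n)=(1,1)$ and the bookkeeping closes (up to an overall sign convention in the recursion kernel, on which the paper itself is not internally consistent between \eqref{eq:TR2}, the displayed correlators, and Definition \ref{d:loop}; this is not your problem, but you should fix one convention and state the induction in it).

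The genuine gap is in your last step for the unstable class. Writing $\Omega(w)=\omega_{g,n}(w,z_2,\ldots,z_n)$, $S=\Omega+\sigma_a^*\Omega$, $A=\Omega-\sigma_a^*\Omega$, $s=\omega_{0,1}+\sigma_a^*\omega_{0,1}$, $\Delta=\omega_{0,1}-\sigma_a^*\omega_{0,1}$ and $f(w)=\int_a^w\omega_{0,2}(\cdot,z_1)$, the bracket divided by $\Delta$ is $\tfrac{sS}{2\Delta}-\tfrac{A}{2}$, and \emph{both} pieces are anti-invariant one-forms in $w$, so the proposed ``anti-invariant piece is residueless, invariant piece is the projection'' split does not parse: $\Res_{w=a}f\,\tfrac{A}{2}$ is certainly not residueless (it is, up to correction, the projection you want), and neither cross-term vanishes for free. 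What actually closes the argument is the identity $\Res_{w=a}fA=\Res_{w=a}(f-\sigma_a^*f)\,\Omega=2\,\hat B_{\Ra}[\Omega](z_1)-\Res_{w=a}fS$, together with the vanishing of $\Res_{w=a}fS$ and of $\Res_{w=a}f\,\tfrac{sS}{\Delta}$. Both of these vanishings require the \emph{linear loop equation}, i.e.\ $S=O(dx)$ at each $a\in\Ra$ (combined with $f=O(\zeta)$ and the local orders of $s$ and $\Delta$ at Airy- and Bessel-type points); for a generic residueless $\Omega$ with poles on $\Ra$ they fail. So you must either carry the linear loop equation along in the induction (it is itself a consequence of \eqref{eq:TR}, proved by a similar parity argument) or cite Theorem \ref{d:unique}. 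Without this input the unstable class does not reduce to $\hat B_{\Ra}[\omega_{g,n}(\cdot,z_2,\ldots,z_n)](z_1)$ and the induction does not close. Your alternative route (homogeneity $\omega_{g,n}\mapsto\lambda^{2-2g-n}\omega_{g,n}$ under $\omega_{0,1}\mapsto\lambda\,\omega_{0,1}$, which is correct and immediate, combined with a variational formula for $\tfrac{d}{d\lambda}\omega_{g,n}$) is legitimate but simply relocates all of the work into the variational formula, which is a theorem of the same depth as the dilaton equation itself.
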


For simple spectral curves, this was proved in \cite{EO07}. For the general case, one can get the dilaton equation from limits as in \cite{BBCKS23} or from the equivalence with Airy structures.

The dilaton equation allows us to extend the definition of correlators to $\omega_{g,0}$, which are simply numbers; those are also called ``free energies'' and denoted by $F_g := \omega_{g,0}$. The definition is:

\begin{definition}\label{d:fe}
Let $\mathcal{S} = (\Sigma,x,\omega_{0,1}, \omega_{0,2})$ be an admissible spectral curve. For $g \geq 2$, the \emph{free energies} $F_g := \omega_{g,0}$ are defined by
\begin{equation}
F_g = \frac{1}{2g-2} \sum_{a \in \Ra} \Res_{z=a} \Phi(z) \omega_{g,1}(z).
\end{equation}
There is also a separate definition for $F_0$ and $F_1$, see \cite{EO07}.
\end{definition}

We will not use the free energies further in these lecture notes, but they play an important role in relations to integrability, and will also appear in other lecture series in this school \cite{Li24}.

\subsection{Examples and enumerative geometry}

In this section we study a few examples of spectral curves, and highlight the enumerative interpretation of the coefficients $F_{g,n}$ of the expansion \eqref{eq:expansion}.

\subsubsection{Airy spectral curve}

Consider the Airy spectral curve introduced in Example \ref{e:Airy}. We have 
\begin{equation}
\Sigma = \mathbb{C}, \qquad x = \frac{1}{2}z^2, \qquad \omega_{0,1} = z^2\ dz, \qquad \omega_{0,2} = \frac{dz_1 dz_2}{(z_1-z_2)^2}.
\end{equation}
There is only one ramification point at $z=0$, which is simple.

\eqref{eq:TR} can be used to construct a system of correlators $\omega_{g,n}$. In fact, for this spectral curve we can calculate the first few correlators by hand. We obtain:
\begin{align}
\omega_{0,3}(z_1,z_2,z_3) =& \frac{dz_1 dz_2 dz_3}{z_1^2 z_2^2 z_3^2},\\
\omega_{1,1}(z_1) =& \frac{dz_1}{8 z_1^4},\\
\omega_{0,4}(z_1,z_2,z_3,z_4) =& 3 \frac{dz_1 dz_2 dz_3 dz_4}{z_1^2 z_2^2 z_3^2 z_4^2} \sum_{i=1}^4 \frac{1}{z_i^2},\\
\omega_{1,2}(z_1, z_2) =& \frac{dz_1 dz_2}{8} \left(\frac{5}{z_1^2 z_2^6} + \frac{5}{z_1^6 z_2^2} + \frac{3}{z_1^4 z_2^4} \right).
\end{align}

\begin{exercise}
It is a good exercise to use the topological recursion formula \eqref{eq:TR} to calculate by hand the first few correlators ($\omega_{0,3}, \omega_{1,1}, \omega_{0,4}, \omega_{1,2}$, \ldots). Do it!
\end{exercise}

For this spectral curve, since $\omega_{0,2}= \frac{dz_1 dz_2}{(z_1-z_2)^2}$, the basis of one-forms from Definition \ref{d:xi} at the ramification point $z=0$ is very simple:
\begin{equation}
\xi_{-k}(z_1) = \Res_{z=0} \left( \int^z_0 \omega_{0,2}(\cdot, z_1) \frac{d z}{z^{k+1}} \right) = \frac{d z_1}{z_1^{k+1}}.
\end{equation} 
We see that the first  few correlators indeed have a finite expansion of the form
\begin{equation}\label{eq:exairy}
\omega_{g,n}(z_1,\ldots,z_n) = \sum_{k_1, \ldots,k_n \in \mathbb{N}^*} F_{g,n}[k_1,\ldots,k_n] \xi_{-k_1}(z_1) \cdots \xi_{-k_n}(z_n),
\end{equation}
as expected. The resulting non-zero coefficients (up to permutations of the entries) for the first few correlators are
\begin{equation}
F_{0,3}[1,1,1] = 1, \qquad F_{1,1}[3] = \frac{1}{8}, \qquad F_{0,4}[1,1,1,3] = 3, \qquad F_{1,2}[1,5] = \frac{5}{8}, \qquad F_{1,2}[3,3] = \frac{3}{8}.
\end{equation}

In general, for the Airy spectral curve one can prove an explicit formula for the correlators which relates them to intersection numbers over the moduli space of curves. Using the expansion \eqref{eq:exairy}, one can prove that the only non-zero coefficients are when all $k_i$'s are odd (in other words, the correlators $\omega_{g,n}$ only have even powers of the variables, and thus are odd under the involution $z_i \mapsto - z_i$). Those non-zero coefficients are given by
\begin{equation}
F_{g,n}[2m_1+1, \ldots, 2m_n+1] = \prod_{i=1}^n (2m_i+1)!! \int_{\overline{\mathcal{M}}_{g,n}} \psi_1^{m_1} \cdots \psi_n^{m_n}.
\end{equation}
Notice a similarity with \eqref{eq:fgnairy}? This is of course not a coincidence! As we will see, topological recursion on the Airy spectral curve is equivalent to the Kontsevich-Witten Virasoro constraints.

Using the explicit calculation of the coefficients $F_{g,n}$  above, we get the first few non-zero intersection numbers (up to permutations of the psi-classes):
\begin{gather}
\int_{\overline{\mathcal{M}}_{0,3}}1 = 1, \qquad \int_{\overline{\mathcal{M}}_{1,1}} \psi_1 = \frac{1}{24}, \qquad \int_{\overline{\mathcal{M}}_{0,4}} \psi_1 = 1, \\
\int_{\overline{\mathcal{M}}_{1,2}} \psi_1^{2} = \frac{1}{24}, \qquad \int_{\overline{\mathcal{M}}_{1,2}} \psi_1 \psi_2 = \frac{1}{24},
\end{gather}
which are well-known intersection numbers. For your interest, Bertrand Eynard has a nice online program that computes intersection numbers for you \cite{EOjava}.

\subsubsection{Bessel spectral curve}

Consider next the Bessel spectral curve introduced in Example \ref{e:Airy}. We have
\begin{equation}
\Sigma = \mathbb{C}, \qquad x = \frac{1}{2}z^2, \qquad \omega_{0,1} = dz, \qquad \omega_{0,2} = \frac{dz_1 dz_2}{(z_1-z_2)^2}.
\end{equation}
There is only one ramification point at $z=0$, which is simple.

For this spectral curve, using \eqref{eq:TR} one can show that
\begin{equation}
\omega_{0,n}(z_1, \ldots, z_n) = 0
\end{equation}
for all $n \in \mathbb{N}^*$. So the only non-zero correlators have $g \geq 1$. This can be proved with a simple pole analysis.

\begin{exercise}
Check this!
\end{exercise}

The first few non-zero correlators can also be calculated by hand. We get:
\begin{align}
\omega_{1,1}(z_1) =& \frac{dz_1}{8 z_1^2},\\
\omega_{1,2}(z_1, z_2) =& \frac{dz_1 dz_2}{8 z_1^2 z_2^2}.
\end{align}

The basis of one-forms at the ramification point at $z=0$ is the same as for the Airy spectral curve, and we observe that we get a finite expansion in this basis as expected. Just as for the Airy spectral curve, it is straightforward to show that the only non-zero coefficients $F_{g,n}[k_1,\ldots,k_n]$ are when all $k_i$'s are odd. 

The coefficients also have an interpretation in terms of intersection numbers over the moduli space of curves, which was only recently found by Norbury. The statement is that
\begin{equation}
F_{g,n}[2m_1+1, \ldots, 2m_n+1] = \prod_{i=1}^n (2m_i+1)!! \int_{\overline{\mathcal{M}}_{g,n}} \Theta_{g,n} \psi_1^{m_1} \cdots \psi_n^{m_n},
\end{equation}
where $\Theta_{g,n}$ is the Norbury cohomology class on $\overline{\mathcal{M}}_{g,n}$, which already appeared in \eqref{eq:norbury} \cite{No17}. Notice the similarity again!

For instance, using the first few correlators above, we get:
\begin{equation}
\int_{\overline{\mathcal{M}}_{1,1}} \Theta_{1,1} = \frac{1}{8}, \qquad \int_{\overline{\mathcal{M}}_{1,2}} \Theta_{1,2} = \frac{1}{8},
\end{equation}
which were calculated in \cite{No17}.

\subsubsection{Mirzakhani spectral curve}

There is another simple spectral curve that plays an important role in applications; we will call it the Mirzakhani spectral curve. It was first studied in \cite{EO07c}. We consider the spectral curve
\begin{equation}
\Sigma = \mathbb{C}, \qquad x = \frac{1}{2}z^2, \qquad \omega_{0,1} = \frac{z \sin(2 \pi z)}{2 \pi}\ dz, \qquad \omega_{0,2} = \frac{dz_1 dz_2}{(z_1-z_2)^2}.
\end{equation}
This is an interesting spectral curve, as it does not come from an algebraic curve. There is only one ramification point at $z=0$, which is simple, and the basis of one-forms is still $\xi_{-k}(z) = \frac{dz}{z^{k+1}}$.

In this case, the first few correlators can still be calculated by hand. We get:
\begin{align}
\omega_{0,3}(z_1,z_2,z_3) =& \frac{dz_1 dz_2 dz_3}{z_1^2 z_2^2 z_3^2},\\
\omega_{1,1}(z_1) =& dz_1 \left( \frac{1}{8 z_1^4} + \frac{\pi^2}{12 z_1^2} \right),\\
\omega_{0,4}(z_1,z_2,z_3,z_4) =&dz_1 dz_2 dz_3 dz_4 \left(  \frac{3}{z_1^2 z_2^2 z_3^2 z_4^2} \sum_{i=1}^4 \frac{1}{z_i^2} + \frac{2 \pi^2}{z_1^2 z_2^2  z_3^2 z_4^2}\right),\\
\omega_{1,2}(z_1, z_2) =&dz_1 dz_2 \left( \frac{5}{8 z_1^2 z_2^6} + \frac{5}{8 z_1^6 z_2^2} + \frac{3}{8 z_1^4 z_2^4} + \frac{\pi^2}{2 z_1^2 z_2^4} + \frac{\pi^2}{2 z_1^4 z_2^2} + \frac{\pi^4}{4 z_1^2 z_2^2} \right).
\end{align}

\begin{exercise}
If you are are not bored yet, use the TR formula \eqref{eq:TR} to calculate these correlators by hand! Or, write a code that you can then use to calculate TR for your favourite spectral curve. :-)
\end{exercise}

Interestingly, we see that the correlators are the correlators of the Airy spectral curve plus corrections that involve $\pi$. This is not surprising, given that near the ramification point $z=0$,
\begin{equation}
\omega_{0,1}(z)= \frac{z \sin(2 \pi z)}{2 \pi} = \left( z^2 - \frac{2 \pi^2 z^4}{3} + O(z^6) \right) dz,
\end{equation}
and hence to first order this is just the Airy spectral curve.

The enumerative interpretation for this spectral curve is very nice. Expanding in the basis of one-forms $\xi_{-k}(z) = \frac{dz}{z^{k+1}}$, one can prove that the only non-zero coefficients $F_{g,n}[k_1,\ldots,k_n]$ are when all $k_i$'s are odd, and they take the form (see  \cite{EO07c,EO08}):
\begin{equation}
F_{g,n}[2k_1+1, \ldots, 2k_n+1] =(2 \pi^2)^{k_0} \prod_{i=1}^n (2k_i+1)!! \int_{\overline{\mathcal{M}}_{g,n}} \kappa_1^{k_0} \psi_1^{k_1} \cdots \psi_n^{k_n},
\end{equation}
where $\kappa_1$ is a kappa classes. In particular, the correlators $\omega_{g,n}$ are the Laplace transforms of the Weil-Petersson volumes $V_{g,n}(L_1,\ldots,L_n)$ of the moduli spaces of bordered Riemann surfaces with geodesic boundaries of lengths $L_1,\ldots,L_n$ and genus $g$. The inverse Laplace transform of the topological recursion formula recovers Mirzakhani's recursion relations for these volumes \cite{Mi07}, as was shown in \cite{EO07c}. This particular example also plays a significant role in applications of topological recursion to JT gravity (see for instance \cite{SSS19} and the lecture series on JT gravity in this school \cite{Tu24}).

\begin{remark}\label{r:kappa}
It is worth noting that this example can be generalized to compute a general generating series for intersection numbers of $\psi$ and $\kappa$ classes on $\overline{\mathcal{M}}_{g,n}$ from topological recursion, see \cite{KN21}. The spectral curve looks like 
\begin{equation}
\Sigma=\mathbb{C}, \qquad x = \frac{1}{2}z^2, \qquad \omega_{0,1} =\left( z^2 + \sum_{k=1}^\infty g_k z^{2k+2} \right) dz, \qquad \omega_{0,2} = \frac{dz_1 dz_2}{(z_1-z_2)^2},
\end{equation}
where the $g_k$ are formal variables that appear in the generaring series for intersection numbers. (Correspondingly, through the correspondence with Airy structures, one can write general Virasoro constraints for the associated partition function.) As a result, it follows from this that for any spectral curve of the form 
\begin{equation}\label{eq:action}
\Sigma=\mathbb{C}, \qquad x = \frac{1}{2}z^2, \qquad \omega_{0,1} = \left(z^2 +O(z^3) \right) dz, \qquad \omega_{0,2} = \frac{dz_1 dz_2}{(z_1-z_2)^2},
\end{equation}
 the $F_{g,n}[k_1,\ldots,k_n]$ computed by topological recursion are particular combinations of integrals of $\psi$ and $\kappa$ classes on $\overline{\mathcal{M}}_{g,n}$ \cite{KN21}.
 
 There is a deep reason for this, which is that the $F_{g,n}[k_1,\ldots,k_n]$ are correlators of a semisimple cohomological field theory, which can be recovered from the trivial cohomological field theory (the Kontsevich-Witten or Airy case, which computes intersection  numbers of $\psi$-classes) via the action of the Givental group. In the case of spectral curves of the form \eqref{eq:action}, the corresponding cohomological field theory is obtained simply via a Givental translation. This is explained in \cite{DOSS12} -- see also the lecture notes on moduli spaces of Riemann surfaces in this summer school \cite{GL24}.
\end{remark}

\subsubsection{Simple admissible spectral curves}

The Airy and Bessel examples are the building blocks for simple spectral curves, as they control the local behaviour of a spectral curve near a simple ramification point. For simple admissible spectral curves, we can construct correlators using the topological recursion formula \eqref{eq:TR}, which produces symmetric correlators. We expand in the natural basis of one-forms $d \xi_{-k}^{(a)}$ at the ramification points $a \in \Ra$ as in \eqref{eq:expansion}. Do the coefficients
\begin{equation}
F_{g,n} \begin{bmatrix} a_1 & \ldots & a_n \\ k_1 & \ldots & k_n \end{bmatrix} 
\end{equation}
of the expansion have an interpretation as integrals over $\overline{\mathcal{M}}_{g,n}$, as was the case for the Airy and Bessel spectral curves?

The answer is yes, at least when all ramification points are of Airy-type; one can write down an expression for these coefficients as integrals over $\overline{\mathcal{M}}_{g,n}$.  As mentioned above in Remark \ref{r:kappa}, the reason is that the $F_{g,n}$ are correlators of a semisimple cohomological field theory, which can be obtained by acting on a product of trivial cohomological field theories (one for each ramification point) via the Givental group action. In this case however we need to act with both translations and rotations. Writing down the explicit resulting expression is beyond the scope of these lecture notes; we refer the reader to \cite{Ey11,DOSS12}  for more details and for the direct connection with cohomological field theory.

\subsubsection{The $(r,s)$ spectral curves}

Our next example is the $(r,s)$ spectral curve introduced in Example \ref{e:rs}. We have $\Sigma = \mathbb{C}$, $x = \frac{1}{r} z^r$, $\omega_{0,1} = z^{s-1} dz$ (or, equivalently, $y=z^{s-r}$), and $\omega_{0,2} = \frac{dz_1 dz_2}{(z_1-z_2)^2}$. There is only one ramification point at $z=0$, but for $r>2$ it is not simple anymore. Thus, the topological recursion formula \eqref{eq:TR} only applies when $r=2$; for $r>2$ we need to use its generalization from \cite{BE13}.

Nevertheless, for any $r \geq 2$ we can proceed as usual. As $\omega_{0,2} = \frac{dz_1 dz_2}{(z_1-z_2)^2}$, the natural basis of one-forms at $z=0$ is still given by $ \xi_{-k}(z) =  \frac{d z}{z^{k+1}}$. We expand the correlators $\omega_{g,n}$ as in \eqref{eq:exairy}. Do the coefficients $F_{g,n}[k_1,\ldots,k_n]$ have a natural enumerative geometric interpretation?

This question turns out to be a lot more subtle than expected. 

For the case $s=r+1$, it was shown that the $F_{g,n}$ calculate intersection numbers over the moduli space of curves with $r$-spin structures \cite{BE17,DNOPS15}, which is an enumerative geometric problem that was first studied by Witten in \cite{Wi93}. Indeed, as we will see, for this case topological recursion is equivalent to a set of $\mathcal{W}(gl_r)$-constraints satisfied by the $r$-spin partition function. It is also follows that $Z$ is a tau-function for the $r$-KdV hierarchy (sometimes known as the ``$r$-spin Witten conjecture''), which was originally proved by Faber, Shadrin and Zvonkine in \cite{FSZ06}.

For the case $s=r-1$, an enumerative interpretation was found very recently in \cite{CGG22}. It is the natural $r$-spin generalization of the intersection numbers appearing for the Bessel spectral curve, with the Norbury class replaced by its natural generalization based on the work of Chiodo \cite{Ch06}. In this case, the partition function $Z$ is a still a tau-function for the $r$-KdV hierarchy; it is the so-called $r$-BGW tau-function \cite{ABDKS23b}.

However, for other choices of $s \in \{1,\ldots, r-1\}$ with $r = \pm 1 \mod s$, at this point the enumerative interpretation of the coefficients is unknown. The natural candidate is to take the top class of the Chiodo class as for the case $s=r-1$, but this is incorrect. Finding an enumerative interpretation for these coefficients is an interesting open question in the field.

\subsubsection{Arbitrary admissible spectral curves}

The $(r,s)$ spectral curves are the building blocks for arbitrary admissible spectral curves, as they control the local behaviour near ramification points. Just as for the case of arbitrary admissible simple spectral curves, we expect the coefficients of the expansion in the natural basis of one-forms $d \xi_{-k}^{(a)}$ to have an enumerative interpretation. However, as the enumerative interpretation for ramification points of type $(r,s)$ is still unknown in general, at this stage we do not have a general expression for these coefficients in terms of enumerative geometry of curves.

\subsection{A few remarks}

\label{s:remarks}

We end this section with a few remarks concerning generalizations of topological recursion, beyond the natural generalization to curves with arbitrary ramification.
\begin{itemize}
\item The reader may have noticed that in going from Section 2 to Section 3 we replaced the sum over half-integers $g \in \frac{1}{2}{\mathbb{N}}$ by a sum over integers $g \in \mathbb{N}$. There is actually no reason to do this; topological recursion could (and should) be defined for systems of correlators $\{ \omega_{g,n} \}_{g \in \frac{1}{2} \mathbb{N}, n \in \mathbb{N}^*}$. This introduces no new complexity; however, there is now a third initial condition, namely the correlator $\omega_{\frac{1}{2}, 1}(z)$, as $2(\frac{1}{2}) -2 + 1 = 0$. This correlator should be given as part of the data of a spectral curve. We omitted this straigthforward generalization in this section for brevity.
\item In the definition of spectral curves, we assumed that $x: \Sigma \to \mathbb{P}^1$ is a holomorphic map between Riemann surfaces. Equivalently, we can think of $x$ as a meromorphic function on $\Sigma$. We can however generalize the topological recursion framework to include points in $\Sigma$ where $x$ has exponential singularities. We can think of the exponential singularities as ramification points of infinite order, and naturally generalize the topological recursion formula for arbitrary ramification to this context. This framework has interesting applications in enumerative geometry, for instance in the context of Hurwitz theory. See \cite{BKW23} for more details.
\item One of the most fundamental property of topological recursion is known as ``symplectic invariance'', and, in particular, ``$x-y$ invariance''. The precise statement of $x-y$ invariance is not so easy to formulate, but it proceeds as follows. Let $\mathcal{S}$ be a spectral curve, which we now think of as a Riemann surface $\Sigma$, two functions $x$ and $y$ on $\Sigma$, and $\omega_{0,2}$. Let us define a ``dual'' spectral curve $\mathcal{S}^\vee$, obtained by swapping $x$ and $y$. In other words, $\mathcal{S}^\vee$ is defined by the same Riemann surface $\Sigma$, the functions $x^\vee = y$ and $y^\vee = x$, and $\omega_{0,2}$. The statement of $x-y$ invariance is that the correlators $\omega_{g,n}$ produced by topological recursion on $\mathcal{S}$ can be fully reconstructed from the correlators $\omega_{g,n}^\vee$ produced by topological recursion on $\mathcal{S}^\vee$, and vice-versa. This statement was proposed already in the very early days of topological recursion in \cite{EO07,EO07b}, but a fully explicit reconstruction reformula relating the two systems of correlators has only been written down recently \cite{ABDKS23}. We refer to the reader to \cite{ABDKS23} for more details. Interestingly, for $x-y$ invariance to hold in general, one need to generalize topological recursion slightly. In particular, in the definition of spectral curves, $x$ and $y$ are not required to be meromorphic anymore; only $dx$ and $dy$ have to be (this is not unrelated to the case of exponential singularities mentioned above). The general recursive formula adds correction terms to \eqref{eq:TR}, and has become known as ``log TR'' \cite{ABDKS23} (note that for most interesting curves however the correction terms vanish and log TR reduces to the original TR presented here). The resulting projection property satisfied by the correlators is also modified accordingly. Very recently, a further generalization of topological recursion that takes its roots in symplectic invariance is known as ``generalized TR'' \cite{ABDKS24}; we caution the reader that for curves with higher ramification, the correlators constructed from generalized TR generally differs from those constructed by the topological recursion of \cite{BE13}.
\item There are many more generalizations of topological recursion that we will not cover in these lectures notes, for instance, $\beta$-deformed (or non-commutative) topological recursion \cite{EM08,CEM09,CEM11,BE18,BE19,BBCC21}, $Q$-deformed topological recursion \cite{KO22,Os23}, geometric recursion \cite{ABO17}, super topological recursion and super Airy structures \cite{BCHORS19,BO20}, etc.
\end{itemize}

\section{The bridge: loop equations}
\label{s:loop}

As should have become clear by studying the examples of the Airy and Bessel spectral curves, the topological recursion formula \eqref{eq:TR} should be somehow related to the Airy ideals of Section \ref{s:airy}. What is the connection?

This brings us to the roots of topological recursion. Eynard and Orantin originally formulated topological recursion as a method for calculating correlators of Hermitian matrix models, which are generating functions for expectation values of products of traces of matrices. More precisely, they formulated topological recursion as a method for calculating a formal asymptotic solution to the loop equations of matrix models. However, they realized that the recursion formula is much more general than that, and indeed, it is now known to appear in many different contexts, regardless of whether there is an underlying matrix model or not. Nevertheless, the origin of topological recursion as a solution of loop equations is key; this is how we will connect topological recursion with Airy ideals.

\subsection{The origin of TR: loop equations}

As in the previous section, for the sake of clarity we focus on simple spectral curves. However, all the main statements (suitably generalized of course) hold true for arbitrary admissible spectral curves. 

Loop equations (also sometimes called ``abstract loop equations'') are a set of equations satisfied by correlators on a spectral curve. They concern the local behaviour near ramification points of particular combinations of correlators.

Let $\mathcal{S} = (\Sigma,x,\omega_{0,1}, \omega_{0,2})$ be a simple admissible spectral curve. We split the set of ramification points as $\Ra = \Ra^A \cup \Ra^B$, where $\Ra^A$ and $\Ra^B$ include ramification points of Airy-type and Bessel-type respectively. Moreover, for any $a \in \Ra$, let $\sigma_a(z)$ be the local involution near $a$ that exchanges the two sheets (in a local coordinate $\zeta$ centered at $a$, $\sigma_a(\zeta) = - \zeta$). 

\begin{definition}\label{d:loop}
Let $\{\omega_{g,n} \}_{g \in \mathbb{N}, n \in \mathbb{N}^* }$ be a system of correlators on a simple admissible spectral curve $\mathcal{S} = (\Sigma,x,\omega_{0,1}, \omega_{0,2})$. 

We say that the correlators satisfy the \emph{linear loop equation} if, for all $2g-2+n >0$ and all $a \in \Ra$,
\begin{equation}
\frac{1}{dx(z)} \left(\omega_{g,n}(z, z_2, \ldots, z_n) + \omega_{g,n}(\sigma_a(z), z_2, \ldots, z_n) \right)
\end{equation}
is holomorphic in $z$ at $z=a$.

We say that the correlators satisfy the \emph{quadratic loop equation} if, for all $2g-2+n>0$ and all $a \in \Ra$,
\begin{equation}\label{eq:qle}
\frac{1}{dx(z)^2} \left( \omega_{g-1,n+1}(z, \sigma_a(z), z_2, \ldots, z_n) + \sum_{\substack{g_1+g_2=g\\I\cup J = \{ z_2, \ldots, z_n \}}} \omega_{g_1, |I|+1}(z, I) \omega_{g_2,|J|+1}(\sigma_a(z), J) \right)
\end{equation}
is either holomorphic in $z$ at $z=a$ if $a \in \Ra^A$, or it has at most a double pole at $z=a$ if $a \in \Ra^B$.
\end{definition}

Those equations may seem to come out of nowhere, but this is the type of equations that one obtains for correlators of Hermitian matrix models. In this context, they follow from the Schwinger-Dyson equations for the matrix model.

\begin{remark}
Note that the combinatorics of the quadratic loop equation \eqref{eq:qle} are very similar to the recursive structure of topological recursion in \eqref{eq:comb}, but the sum in \eqref{eq:qle} does not have a ``prime'': this means that terms with $\omega_{0,1}$ are included in \eqref{eq:qle}, while they were not included in \eqref{eq:comb}. This is very important.
\end{remark}

\begin{remark}
For spectral curves with higher ramification, loop equations can be similarly formulated, see \cite{BE17,BBCCN18,BKS20,BBCKS23}. If a ramification point has order $r$, then there are $r$ loop equations at this point, with the first two being suitable generalizations of the linear and quadratic loop equations above.
\end{remark}

Given a spectral curve, one can study the solution space of loop equations. Correlators that satisfy both the linear and quadratic loop equations on a simple admissible spectral curves are said to satisfy ``blobbed topological recursion'' \cite{BS15}.

\begin{definition}\label{d:blobbed}
Let $\{\omega_{g,n} \}_{g \in \mathbb{N}, n \in \mathbb{N}^* }$ be a system of correlators on a simple admissible spectral curve $\mathcal{S} = (\Sigma,x,\omega_{0,1}, \omega_{0,2})$. We say that the correlators satisfy \emph{blobbed topological recursion} if they satisfy both linear and quadratic loop equations.
\end{definition}

Blobbed topological recursion is not really a recursion. Indeed, the linear and quadratic loop equations do not uniquely specify the correlators. What they do, however, is specify the principal parts of the $\omega_{g,n}$ at the ramification points $a \in \Ra$ in terms of the correlators $\omega_{g',n'}$ with $2g'-2+n' < 2g-2+n$. In other words, we get some sort of recursive process on $2g-2+n$, but at each step of the recursion we can add to $\omega_{g,n}$ a differential that is holomorphic at all $a \in \Ra$ and still get a solution to the loop equations.

To fix this ambiguity and uniquely determine a solution of the loop equations, we need to impose a further property on the correlators. The natural condition to impose is the projection property on $\Ra$, which was introduced in Definition \ref{d:projprop1}. This gives rise to the Eynard-Orantin topological recursion.

\begin{theorem}\label{d:unique}
Let $\mathcal{S} = (\Sigma,x,\omega_{0,1}, \omega_{0,2})$ be a simple admissible spectral curve. There is a unique system of correlators $\{\omega_{g,n} \}_{g \in \mathbb{N}, n \in \mathbb{N}^* }$ on $\mathcal{S}$ that satisfies both blobbed topological recursion (i.e. the linear and quadratic loop equations) and the projection property on $\Ra$, and it is given by the topological recursion formula \eqref{eq:TR}.
\end{theorem}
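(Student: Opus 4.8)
The plan is to prove the statement in two halves, each by induction on $2g-2+n$: first, that the correlators defined by \eqref{eq:TR} satisfy the linear loop equation, the quadratic loop equation and the projection property on $\Ra$; second, that these three conditions pin down a system of correlators uniquely. The common mechanism behind both halves is that the projection property, written as
\[
\omega_{g,n}(z_1,z_2,\ldots,z_n)=\sum_{a\in\Ra}\Res_{z=a}\left(\int_a^z\omega_{0,2}(\cdot,z_1)\right)\omega_{g,n}(z,z_2,\ldots,z_n),
\]
reconstructs $\omega_{g,n}$ from nothing more than the principal parts of $\omega_{g,n}(\cdot,z_2,\ldots,z_n)$ at the ramification points, since the residue at $a$ depends only on finitely many Laurent coefficients of $\omega_{g,n}$ at $a$. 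Combined with \eqref{eq:comb}, in which $\tilde\omega^{(a)}_{g,n}$ is built solely out of lower correlators, the whole problem at level $2g-2+n$ reduces to a local computation near each $a\in\Ra$, with every $\omega_{g',n'}$ ($2g'-2+n'<2g-2+n$) already known.

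For existence, the projection property is immediate: \eqref{eq:TR} gives $\omega_{g,n}=\sum_{a\in\Ra}\hat B_a[\tilde\omega^{(a)}_{g,n}]$, and since $\hat B_a[\alpha]$ is holomorphic at every $b\in\Ra\setminus\{a\}$ one has $\hat B_b\hat B_a=\delta_{ab}\hat B_a$, whence $\hat B_{\Ra}\omega_{g,n}=\omega_{g,n}$. For the loop equations I would fix $a$, pass to a local coordinate near $a$, and use that by the principal-part property of the projection, $\omega_{g,n}(z,z_2,\ldots,z_n)=\tilde\omega^{(a)}_{g,n}(z,z_2,\ldots,z_n)+(\text{holomorphic at }z=a)$. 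The bracketed quadratic differential of \eqref{eq:comb} is invariant under $z\mapsto\sigma_a(z)$ (because $\omega_{g-1,n+1}$ is symmetric and the primed sum is symmetric under swapping the two groups of variables), so $\tilde\omega^{(a)}_{g,n}$ is anti-invariant, $\sigma_a^*\tilde\omega^{(a)}_{g,n}=-\tilde\omega^{(a)}_{g,n}$; feeding this and the holomorphic remainder into $\omega_{g,n}(z,\cdot)+\omega_{g,n}(\sigma_a(z),\cdot)$ gives the linear loop equation after dividing by $dx$, while inserting these expressions into the un-primed combination of \eqref{eq:qle} makes the $\tilde\omega^{(a)}_{g,n}$-contributions cancel against the bracket, leaving $\omega_{0,1}$ times a holomorphic one-form over $dx(z)^2$, which is holomorphic at $a$ for Airy-type points ($y=\omega_{0,1}/dx$ vanishes there) and has at most a double pole for Bessel-type points ($y$ has a simple pole there).

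For uniqueness, let $\{\omega_{g,n}\}$ satisfy the linear and quadratic loop equations and the projection property, fix $a\in\Ra$, and split off from the un-primed sum in \eqref{eq:qle} the two terms carrying $\omega_{g,n}$, namely $\omega_{0,1}(z)\omega_{g,n}(\sigma_a(z),z_2,\ldots,z_n)+\omega_{0,1}(\sigma_a(z))\omega_{g,n}(z,z_2,\ldots,z_n)$; the rest is exactly the bracket of \eqref{eq:comb} and is known by induction. Using the linear loop equation to replace $\omega_{g,n}(\sigma_a(z),\cdot)$ by $-\omega_{g,n}(z,\cdot)$ up to a holomorphic multiple of $dx(z)$, the quadratic loop equation becomes
\[
\frac{\bigl(\omega_{0,1}(\sigma_a(z))-\omega_{0,1}(z)\bigr)\,\omega_{g,n}(z,z_2,\ldots,z_n)}{dx(z)^2}=(\text{known; holomorphic at }a,\ \text{resp. at worst a double pole at }a),
\]
and solving for $\omega_{g,n}(z,\cdot)$ yields $\tilde\omega^{(a)}_{g,n}(z,\cdot)$ plus $\frac{dx(z)^2}{\omega_{0,1}(z)-\omega_{0,1}(\sigma_a(z))}$ times a term that is holomorphic at $a$ for Airy points and at most a double pole for Bessel points; in both cases the prefactor is regular enough (it vanishes to order $2$ in the Bessel case) that this correction is holomorphic at $a$. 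Thus $\omega_{g,n}(\cdot,z_2,\ldots,z_n)$ has the same principal part at every $a\in\Ra$ as $\tilde\omega^{(a)}_{g,n}$, and the projection property then forces $\omega_{g,n}(z_1,\ldots,z_n)=\sum_{a\in\Ra}\Res_{z=a}\bigl(\int_a^z\omega_{0,2}(\cdot,z_1)\bigr)\tilde\omega^{(a)}_{g,n}(z,z_2,\ldots,z_n)$, which is \eqref{eq:TR}.

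The step I expect to be the crux is precisely this local bookkeeping at the Bessel-type ramification points: there the quadratic loop equation only controls the relevant combination up to a double pole, and one must verify that the extra pole allowed by admissibility ($s=1$) is exactly compensated by the double zero of $dx(z)^2/\bigl(\omega_{0,1}(z)-\omega_{0,1}(\sigma_a(z))\bigr)$, so that the principal part of $\omega_{g,n}$ at $a$ is still rigidly determined and consistent with $\tilde\omega^{(a)}_{g,n}$. Tracking the $\sigma_a$-parity of the various (pulled-back) differentials and the orders of vanishing of $dx$ and $\omega_{0,1}$ at each type of point is where care is needed; the remaining manipulations parallel the direct verification in \cite{EO07} and the analysis of blobbed topological recursion in \cite{BS15}.
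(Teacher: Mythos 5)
The paper does not actually prove this theorem itself --- it defers to \cite{BS15,BBCKS23} --- and your argument is essentially the standard one found in those references: the linear and quadratic loop equations determine the principal part of $\omega_{g,n}(\cdot,z_2,\ldots,z_n)$ at each $a\in\Ra$ to be that of $\tilde\omega^{(a)}_{g,n}$ (everything lower being known by induction on $2g-2+n$), and the projection property, which reconstructs $\omega_{g,n}$ from exactly this local data, then forces \eqref{eq:TR}. Your uniqueness half is correct, including the point you rightly identify as the crux: the prefactor $dx(z)^2/\bigl(\omega_{0,1}(z)-\omega_{0,1}(\sigma_a(z))\bigr)$ has a double zero at a Bessel-type point, which absorbs the double pole allowed by the quadratic loop equation there.

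One justification in the existence half is wrong as stated, though easily repaired. You claim the leftover term in the quadratic loop equation is holomorphic at an Airy-type point because ``$y=\omega_{0,1}/dx$ vanishes there.'' Admissibility permits $\bar{s}=s-1=2$ at an Airy-type point (e.g.\ $x=\tfrac12 z^2$, $y=1+z$), in which case $y$ is holomorphic but \emph{nonvanishing} at $a$, and the single term $y(z)\,\sigma_a^*h(z)/dx(z)$ (with $h=\omega_{g,n}-\tilde\omega^{(a)}_{g,n}$ the holomorphic remainder) then has a simple pole. What actually saves the argument is that the remainder appears in the $\sigma_a$-symmetrized combination $\omega_{0,1}(z)\,\sigma_a^*h(z)+\sigma_a^*\omega_{0,1}(z)\,h(z)$: writing $h=f(\zeta)\,d\zeta$ in a local coordinate, this combination divided by $dx^2$ equals $-\bigl(y(\zeta)f(-\zeta)-y(-\zeta)f(\zeta)\bigr)/(2\zeta)$ up to convention, and the numerator is \emph{odd} in $\zeta$; its forced vanishing at $\zeta=0$ supplies the missing zero when $y$ is holomorphic (Airy), and caps the singularity at a double pole when $y$ has a simple pole (Bessel). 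The same parity argument is what you leave implicit in the linear loop equation: $h(z)+\sigma_a^*h(z)$ is a $\sigma_a$-invariant holomorphic one-form, hence its coefficient function is odd and it is divisible by $dx$, which is why dividing by the simple zero of $dx$ does not create a pole. With these parity observations inserted, your proof is complete and matches the argument of \cite{BS15,BBCKS23} (and the direct verification of \cite{EO07}).
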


The proof of this theorem can be found in many places, for instance  \cite{BS15,BBCKS23}.

This is a key result in the theory of topological recursion, and is in fact the origin of the recursion formula in the first place. Imposing the projection property fixes the holomorphic ambiguity at each step of the pseudo-recursive process of blobbed topological recursion to get a fully recursive construction of a system of correlators satisfying the linear and quadratic loop equations.

\begin{remark}
In Section \ref{s:remarks} we pointed out that for $x-y$ invariance to hold in general, one needs to add corrections to the topological recursion formula. The resulting formula (which is still recursive) is now known as log TR \cite{ABDKS23}. From the point of view of loop equations, log TR still provides a solution to the linear and quadratic loop equations; that is, the correlators still satisfy blobbed topological recursion. What changes is the projection property; the correlators produced by log TR do not satisfy the usual projection property. In other words, what log TR does is propose an alternative condition on the correlators that uniquely fixes the holomorphic blobs at each step of blobbed topological recursion. The result is a different set of correlators, uniquely constructed by log TR, that satisfy the linear and quadratic loop equations.
\end{remark}

\subsection{Loop equations as differential constraints}

We are now ready to connect topological recursion from Section \ref{s:TR} with the differential constraints provided by Airy ideal in Section \ref{s:airy}. What we will show is that topological recursion can be recast as an example of Airy ideals.

The idea is simple; we reformulate the loop equations as differential constraints on a partition function $Z$. To do so, we proceed in three steps. We give a rough sketch of the idea below, which was carried out in full generality in \cite{BBCCN18}.

Let $\{\omega_{g,n} \}_{g \in \mathbb{N}, n \in \mathbb{N}^* }$ be a system of correlators on a simple admissible spectral curve $\mathcal{S} = (\Sigma,x,\omega_{0,1}, \omega_{0,2})$.  For simplicity of notation, let us assume that the ramification set $\Ra$ consists of a single point $a$. (The general case is not much harder, but the notation becomes cumbersome.)

\begin{enumerate}
\item We first impose the projection property on the correlators. This means that, for $2g-2+n>0$, we have a finite expansion in terms of the basis of one-forms $\xi_{-k}$ at $a$ of the form (we omit the superscript $^{(a)}$ for simplicity):
\begin{equation}\label{eq:exp1}
\omega_{g,n}(z_1,\ldots,z_n) = \sum_{k_1,\ldots,k_n \in \mathbb{N}^*} F_{g,n}[k_1,\ldots,k_n]  \xi_{-k_1}(z_1) \cdots  \xi_{-k_n}(z_n).
\end{equation}
For the unstable correlators $\omega_{0,1}$ and $\omega_{0,2}$, we can also expand near $a$. As $\omega_{0,1}$ is holomorphic at $a$, we get
\begin{equation}\label{eq:exp2}
\omega_{0,1}(z) = \sum_{k \in \mathbb{N}^*} F_{0,1}[k]  \xi_k(z),
\end{equation}
where we introduced the locally defined one-forms $\xi_k(z) = \zeta^{k-1}(z) d \zeta(z)$ in a local coordinate $\zeta$ centered at $z=a$.
For $\omega_{0,2}$, one can check that, near $a$, we can write an expansion of the form
\begin{equation}\label{eq:exp3}
\omega_{0,2}(z_1,z_2) = \sum_{k \in \mathbb{N}^*} k  \xi_k(z_1)  \xi_{-k}(z_2).
\end{equation}
\item Next, we reformulate the loop equations as residue conditions. The linear loop equation can be restated as the condition that, for all $k \in \mathbb{N}^*$,
\begin{equation} 
\Res_{z=a} \frac{ \xi_k(z)}{dx(z)} \left( \omega_{g,n}(z, z_2, \ldots, z_n) + \omega_{g,n}(\sigma_a(z), z_2, \ldots, z_n) \right)= 0.
\end{equation}
Similarly, the quadratic loop equation can be restated as the condition that, for all $k \in \mathbb{N}^*$ if $a$ is of Airy-type, and for all $k \geq 3$ if $a$ is of Bessel-type,
\begin{gather}
\Res_{z=a} \frac{ \xi_k(z)}{dx(z)^2}  \Big(\omega_{g-1,n+1}(z, \sigma_a(z), z_2, \ldots, z_n) \\+ \sum_{\substack{g_1+g_2=g\\I\cup J = \{ z_2, \ldots, z_n \}}} \omega_{g_1, |I|+1}(z, I) \omega_{g_2,|J|+1}(\sigma_a(z), J) \Big)= 0.
\end{gather}
Inserting the expansions  \eqref{eq:exp1}, \eqref{eq:exp2} and \eqref{eq:exp3} in these residue formula, we get an infinite set of relations between the $F_{g,n}[k_1,\ldots,k_n]$, with coefficients given by residues of one-forms obtained by taking products of $\xi_k$'s (and dividing by $dx$'s).
\item We construct  a partition function
\begin{equation}
Z = \exp\left( \sum_{\substack{g \in  \mathbb{N}, n \in \mathbb{N}^* \\ 2g-2+n > 0}} \frac{\hbar^{2g-2+n}}{n!}  \sum_{k_1, \ldots, k_n \in \mathbb{N}^*}  F_{g,n}[k_1, \ldots,k_n]  x_{k_1} \cdots x_{k_n}\right),
\end{equation}
and show that the infinite set of relations between the $F_{g,n}[k_1,\ldots,k_n]$ can be naturally obtained by imposing an infinite set of differential constraints $H_k Z = 0$, $k \in \mathbb{N}^*$. The differential operators $H_k$ can be constructed explicitly; after diagonalization if needed, they take the form
\begin{equation}
H_a = \hbar \partial_a -\hbar^2 \left(  \frac{1}{2} A_{abc} x_b x_c + B_{abc} x_b \partial_C + \frac{1}{2}C_{abc} \partial_b \partial_c + D_a \right),
\end{equation}
where the tensors $A_{abc}$, $B_{abc}$, $C_{abc}$ and $D_a$ are constructed by taking residues of one-forms obtained by taking products of $d \xi_k$'s (and dividing by $dx$'s). The data of the spectral curve is therefore encapsulated in these tensors.
\end{enumerate}

This procedure turns the loop equations into differential constraints for the partition function $Z$. But do these $H_a$ generate an Airy ideal in the Weyl algebra $\DAh$ with $A = \mathbb{N}^*$? To prove this, we would need to check condition (2) of Definition \ref{d:airy}, namely $[\I,\I] \subseteq \hbar^2 \I$.

Fortunately, we do not have to do much here. Proceeding with the calculation above explicitly, one can show that the resulting operators $H_a$ form a representation of a subalgebra of the Virasoro algebra! Roughly speaking, if $a$ is of Airy-type, we obtain a conjugation of the Kontsevich-Witten representation from Section \ref{s:KW}. If $a$ is of Bessel-type, we obtain a conjugation of the BGW representation from Section \ref{s:BGW}.\footnote{To be precise, what we obtain is a representation of a subalgebra of the $\mathcal{W}(\mathfrak{gl}_2)$-algebra at self-dual level, see \cite{BBCCN18}. But after reduction, one can recast it as either the Kontsevich-Witten or BGW representation.} As a result, we know that those generate an Airy ideal. Neat!

Let us recap what we have seen in this section, for simple admissible spectral curves.
\begin{itemize}
\item The data of a system of correlators that satisfies the projection property can be encapsulated in the coefficients $F_{g,n}$ of their expansions at the ramification points in the ``good'' basis of one-forms.
\item The correlators satisfy topological recursion if and only if they satisfy the projection property and the loop equations.
\item The loop equations can be recast as a set of differential constraints for a partition function $Z$ constructed out of the coefficients $F_{g,n}$.
\item These differential constraints generate an Airy ideal, and hence uniquely fix the partition function $Z$.
\end{itemize}
As a result, we see that topological recursion can be reformulated as a particular example of Airy ideals!

\subsection{Examples}

\subsubsection{Airy spectral curve}

It is a good exercise to carry out the procedure outlined in the previous section by hand for the Airy and Bessel spectral curves. For these spectral curves, everything is very explicit, and the procedure can be carried out in detail. The tensors $A_{abc}$, $B_{abc}$, $C_{abc}$ and $D_a$ for the resulting differential operators can be calculated explicitly.

Let us introduce the following operators:
\begin{align}\label{eq:Ws}
W^1_k =&\hbar J_{2k},\nonumber\\
W^2_k =& - \hbar^2 \left( \frac{1}{2}\sum_{m_1+m_2 =2 k} : J_{m_1} J_{m_2}: + \frac{1}{8}\delta_{k,0} \right),
\end{align}
where, for $m \geq 1$, 
\begin{equation}
J_m =\partial_m, \qquad J_{-m} = m x_m, \qquad J_0 = 0
\end{equation}
and $: \quad :$ denotes normal ordering. 
This is a representation for the modes of the strong generators of the $\mathcal{W}(\mathfrak{gl}_2)$-algebra at self-dual level. This particular representation is obtained by first embedding the $\mathcal{W}(\mathfrak{gl}_2)$-algebra in the Heisenberg algebra of two free bosons, and then the representation is obtained via restriction of a $\mathbb{Z}_2$-twisted representation for the Heisenberg algebra, see \cite{BBCCN18}. 

To get the resulting differential constraints for the Airy spectral curve, we conjugate these operators as follows:
\begin{align}\label{eq:conj1}
H^1_k =& e^{-\frac{1}{\hbar} \frac{J_3}{3}} W^1_k e^{\frac{1}{\hbar} \frac{J_3}{3}} = \hbar J_{2k},\nonumber\\
H^2_k =& e^{-\frac{1}{\hbar} \frac{J_3}{3}} W^2_k e^{\frac{1}{\hbar} \frac{J_3}{3}} = \hbar J_{2k+3}- \hbar^2 \left( \frac{1}{2}\sum_{m_1+m_2 =2 k} : J_{m_1} J_{m_2}: + \frac{1}{8}\delta_{k,0} \right).
\end{align}
Conjugating doesn't change the commutation relations, so this is still a representation of $\mathcal{W}(\mathfrak{gl}_2)$-algebra at self-dual level. The differential constraints are $H^1_k Z = 0$ for $k \geq 0$ and $H^2_k Z = 0$ for $k \geq -1$. Those operators generate an Airy ideal. 

Those are not quite the Kontsevich-Witten constraints from Section \ref{s:KW}. However, notice that the constraints $H^1_k Z = 0$ impose that $Z$ does not depend on the even variables $x_{2k}$, $k \geq 1$ . Therefore, we can reduce the constraints by removing all terms with derivatives $\partial_{2k}$ with $k \geq 0$, and the differential constraints reduce precisely to the Kontsevich-Witten Virasoro constraints of Section \ref{s:KW}.

\begin{remark}
We remark that the conjugation in \eqref{eq:conj1} was not arbitrary.  For the Airy spectral curve, $\omega_{0,1} = z^2\ dz$. If we expand in the basis $\xi_k(z) = z^{k-1}\ dz$, with $\omega_{0,1} = \sum_{ k \in \mathbb{N}^*} F_{0,1}[k] \xi_k(z)$, we get that the only non-zero $F_{0,1}[k]$ is $F_{0,1}[3] = 1$. To get the differential constraints, we conjugated by the operator
\begin{equation}
T = e^{\frac{1}{\hbar} \sum_{k \in \mathbb{N}^*} F_{0,1}[k] \frac{J_k}{k}} = e^{\frac{1}{\hbar} \frac{J_3}{3} }.
\end{equation}
This is in fact the general recipe, see \cite{BBCCN18}. It essentially corresponds to the action of the Givental translation operator on the differential constraints for the partition function of the cohomological field theory .
\end{remark}

\subsubsection{Bessel spectral curve}

For the Bessel spectral curve, we start with the same representation of $\mathcal{W}(\mathfrak{gl}_2)$ \eqref{eq:Ws}, but since $\omega_{0,1} = dz$, we now conjugate by the operator
\begin{equation}
T = e^{\frac{1}{\hbar} \sum_{k \in \mathbb{N}^*} F_{0,1}[k] \frac{J_k}{k}}= e^{\frac{1}{\hbar} J_1 }.
\end{equation}
The resulting differential constraints take the form $H^1_k Z = 0$ for $k \geq 0$ and $H^2_k Z = 0$ for $k \geq 0$ with 
\begin{align}
H^1_k =&\hbar J_{2k},\\
H^2_k =&  \hbar J_{2k+1}- \hbar^2 \left( \frac{1}{2}\sum_{m_1+m_2 =2 k} : J_{m_1} J_{m_2}: + \frac{1}{8}\delta_{k,0} \right).
\end{align}
Those operators generate an Airy ideal. Furthermore, after reduction with respect to even variables, we obtain precisely the BGW Virasoro constraints of Section \ref{s:BGW}.

\begin{remark}
Note that for the Bessel case the quadratic constraints ($H^2_k Z = 0$) start at $k \geq 0$ (which is essential to get an Airy structure), instead of $k \geq -1$ in the Airy case. In other words, we have ``less'' constraints in the Bessel case, which is consistent with the fact that the requirement for the quadratic loop equation is weaker, since \eqref{eq:qle} can have double poles.
\end{remark}

\subsubsection{Generating function for $\psi$ and $\kappa$ classes}

It is easy to generalize the result from the previous question to the case studied in Remark \ref{r:kappa}. Recall that the spectral curve is
\begin{equation}
\Sigma=\mathbb{C}, \qquad x = \frac{1}{2}z^2, \qquad \omega_{0,1} =\left( z^2 + \sum_{k=1}^\infty g_k z^{2k+2} \right) dz, \qquad \omega_{0,2} = \frac{dz_1 dz_2}{(z_1-z_2)^2}.
\end{equation}
Rewriting the loop equations as differential constraints for this curve, we get the same operators \eqref{eq:Ws} but now conjugated by 
\begin{equation}
T = e^{\frac{1}{\hbar} \sum_{k \in \mathbb{N}^*} F_{0,1}[k] \frac{J_k}{k}}= e^{\frac{1}{\hbar} \sum_{k \in \mathbb{N}} g_k \frac{J_{2k+3}}{2k+3}}
\end{equation}
where we set $g_0 = 1$. The resulting differential constraints are $H^1_k Z = 0$ for $k \geq 0$ and $H^2_k Z = 0$ for $k \geq -1$, with
\begin{align}
H^1_k =& \hbar J_{2k},\nonumber\\
H^2_k =&  \hbar \left( \sum_{n \in \mathbb{N}} g_n J_{2k+2n+3}\right)- \hbar^2 \left( \frac{1}{2}\sum_{m_1+m_2 =2 k} : J_{m_1} J_{m_2}: + \frac{1}{8}\delta_{k,0} \right).
\end{align}
This is not quite in the form of an Airy structure, but one can find (infinite) linear combinations of the $H^2_k$ that take the right form for the $O(\hbar)$ term. This means that the ideal is an Airy ideal. We see that we obtain the ``same'' Virasoro constraints again, but appropriately conjugated (which is where the geometry of the spectral curve comes in via $\omega_{0,1}$). This conjugation is essentially the action of the Givental translation operator on the Virasoro constraints satisfied by the partition function of the trivial cohomological field theory.

\subsubsection{The $(r,s)$ spectral curves}

If we go beyond simple ramification, a similar procedure can be carried out (see \cite{BBCCN18}). While the details are more complicated, the result is quite simple. For the $(r,s)$ spectral curve, topological recursion can be recast as a set of differential constraints that form a representation of a subalgebra of the $\mathcal{W}(\mathfrak{gl}_r)$-algebra at self-dual level. This representation is obtained from a $\mathbb{Z}_r$-twisted representation of the Heisenberg algebra of $r$ free bosons. The particular subalgebra and representation also depends on the local parameter $s$. 

In particular, for the $s=r+1$ curve, the resulting differential constraints match with the known $\mathcal{W}$-constraints for $r$-spin intersection numbers . For the case $s=r-1$, the resulting constraints provide $\mathcal{W}$-constraints for the intersection numbers defined in \cite{CGG22} .

\subsubsection{The general picture}

In general, for an admissible spectral curve, topological recursion can be recast as a set of differential constraints attached to each ramification point. If the ramifiction point is simple, those constraints form an appropriately conjugated representation of the Virasoro algebra, either of Kontsevich-Witten or BGW type. If the ramification is of higher order, then those constraints from a representation of a subalgebra of the $\mathcal{W}(\mathfrak{gl}_r)$-algebra at self-dual level; the particular representation depend on the $(r,s)$ type of the ramification point. The geometry of the spectral curve is encoded in the particular conjugations that need to be done to obtain the differential constraints.

As a result, we can think of topological recursion as a fancy geometric way of encoding $\mathcal{W}$-constraints attached to the ramification points of a spectral curve! 

\begin{remark}
An important consequence of the reformulation of topological recursion in terms of Airy structures is a proof of symmetry. Recall that the fact that the topological recursion formula \eqref{eq:TR} produces symmetric correlators is far from obvious. For simple admissible spectral curves, this was proven by direct calculation in \cite{EO07}. However, for arbitrary admissible spectral curves, such a direct calculation is rather technical and difficult.

Symmetry now directly follows from the reformulation. Indeed, once topological recursion has been shown to be equivalent to differential constraints that generate an Airy ideal, by Theorem \ref{t:airy} we know that there exists a unique partition function $Z$ that satisfies the constraints $\mathcal{I} Z = 0$. The coefficients $F_{g,n}[k_1,\ldots,k_n]$ in the partition function $Z$ are necessarily symmetric by construction, and hence the correlators reconstructed from these coefficients are also symmetric. As a result, it follows that for all spectral curves such that topological recursion can be reformulated as differential constraints that generate an Airy ideal, topological recursion produces symmetric correlators.

This is in fact the origin of the admissibility condition in Definition \ref{d:admissible}. It was shown in \cite{BBCCN18} that only when this admissibility condition is satisfied do the resulting differential constraints generate an Airy ideal. In particular, the local condition $r = \pm 1 \mod s$ was rather surprising and unexpected. But indeed, it can be shown that when this condition is not satisfied, topological recursion does not produce symmetric correlators.

\end{remark}

\section{A web of connections}
\label{s:connections}

So far in this note we focused on the details of topological recursion itself. We first studied when a partition function $Z$ can be uniquely reconstructed recursively by solving differential constraints, which led us to the notion of Airy ideals. We then independently studied a recursion that originated from matrix models, and is formulated in terms of residues on a spectral curve. We showed that, ultimately, both are are faces of the same coin; topological recursion can be recast as a set of differential constraints that generate an Airy ideal.

What we did not do so far is explore the plethora of connections between topological recursion, Airy ideals, and the rest of the world. In the end, this is what makes this research area fascinating. Topological recursion has deep connections with integrability, enumerative geometry, cohomological field theory, matrix models, quantum curves, WKB analysis and resurgence. Whenever topological recursion applies to a particular area of interest, all these connections come into play and shed new light on the problem at hand.

It is beyond the scope of these lecture notes to review this web of connections. Instead, I will briefly explore two connections: with enumerative geometry and quantum curves. This is only the tip of the iceberg however, there are many more interesting connections that unfortunately I cannot cover in only a few lectures on the subject.

\subsection{Enumerative geometry}

Ultimately, topological recursion is interesting if it calculates something interesting. 

We already saw in Remark \ref{r:kappa} that for any genus zero spectral curve of the form $x = \frac{1}{2}z^2$ and $\omega_{0,1}$ of Airy-type, the expansion of the correlators at the ramification point ($z=0$) produces intersection numbers of $\psi$ and $\kappa$ classes over $\overline{\mathcal{M}}_{g,n}$. In fact, it follows from the direct connection between topological recursion and cohomological field theory that a similar interpretation in terms of intersection numbers can be done for arbitrary simple spectral curves with only ramification points of Airy-type \cite{Ey11,DOSS12}. This is not surprising, as topological recursion in this case can be recast as a collection of (appropriately conjugated) Kontsevich-Witten Virasoro constraints.

For simple spectral curves that have some ramification points of both Airy- and Bessel-type, a similar interpretation in terms of intersection numbers involving the Norbury class is expected, since topological recursion can be understood as a collection of (appropriately conjugated) Kontsevich-Wittten and BGW  Virasoro constraints.

For spectral curves with ramification points of higher order, it is expected that one should be able to write similar expressions for the coefficients of the expansion of the correlators at the ramification points in terms of integrals over the moduli space of curves with $r$-spin structure. However, at the moment such formulae have not been obtained yet.

The morale of the story is that \emph{for any admissible spectral curve, the coefficients of the local expansion of the correlators at ramification points have an enumerative meaning in terms of the geometry of $\overline{\mathcal{M}}_{g,n}$.}

But what if we expand the correlators at other points on the spectral curve? Are the coefficients interesting as well?

 It turns out that for many spectral curves, expanding the correlators at another point (that is not a ramification point) is also very interesting and calculates interesting enumerative invariants. Furthermore, we can take full advantage of the fact that the correlators are globally defined over the whole spectral curve and relate the expansion of the correlators at this point to the expansion at the ramification points. This is a very powerful tool which can be used to relate various enumerative invariants to integrals over $\overline{\mathcal{M}}_{g,n}$, yielding interesting results in enumerative geometry.
 
Let me highlight this procedure with a few examples.
 
 \subsubsection{Hurwitz numbers}
 
 Hurwitz theory is a classical problem in geometry. Simple Hurwitz numbers count ramified coverings $ C \to \mathbb{P}^1$, where $C$ is a compact connected genus $g$ Riemann surface, with a prescribed ramification type at $\infty$ and simple ramification points elsewhere (we count with weight $1/|G|$, where $G$ is the automorphism group of the covering).  A good pedagogical introduction to Hurwitz theory is \cite{Ca16}.
 
 It turns out that simple Hurwitz numbers are computed by topological recursion. Consider the spectral curve $\mathcal{S} = (\Sigma, x, \omega_{0,1}, \omega_{0,2})$, with 
 \begin{equation}
 \mathcal{S} = \mathbb{C}, \qquad x = z e^{-z}, \qquad \omega_{0,1} = (1-z)\ dz, \qquad \omega_{0,2} = \frac{dz_1 dz_2}{(z_1-z_2)^2}.
 \end{equation}
 One may recognize $z= - W(-x)$, where $W$ is the Lambert $W$-function.
 
 There is a single ramification point at $z=1$, which is simple and of Airy type. Topological recursion proceeds as usual, and construct correlators $\omega_{g,n}$ whose expansion in a natural basis of one-forms at $z=1$ has coefficients that can be written as intersection numbers over $\overline{\mathcal{M}}_{g,n}$. 
 
However, if we expand the correlators near $x=0$ (since $x=0$ is not a branch point, this expansion makes sense), the result is
\begin{equation}
\omega_{g,n} = \sum_{k_1,\ldots,k_n \in \mathbb{N}^*}  \left( \prod_{i=1}^n k_i \right) H_{g,n}(k_1,\ldots,k_n) x_1^{k_1-1} \ldots, x_n^{k_n-1} dx_1 \cdots dx_n,
\end{equation}
where $H_{g,n}(k_1,\ldots,k_n)$ is the simple Hurwitz number counting ramified coverings $C \to \mathbb{P}^1$ with $C$ of genus $g$ and ramification profile at $\infty$ specified by the positive integers $(k_1,\ldots,k_n)$. Amazing! This was first conjectured in \cite{BM07} by studying a particular limit from topological string theory (see below), and proved in \cite{EMS09} by relating topological recursion to the cut-and-join equation satisfied by Hurwitz numbers.

What is particularly nice though is that we can use the fact that the correlators $\omega_{g,n}$ are globally defined object on the spectral curve to relate the two expansions and rewrite Hurwitz numbers as particular combinations of integrals over $\overline{\mathcal{M}}_{g,n}$. This gives a new (and quite simple) proof of a celebrated result in Hurwitz theory, the ELSV formula \cite{ELSV00}.

This particular example has now been generalized in a myriad of ways. One can for example study $r$-orbifold Hurwitz numbers \cite{DLN12,BHLM13,DLPS15}; the spectral curve is
  \begin{equation}
 \mathcal{S} = \mathbb{C}, \qquad x = z e^{-z^r}, \qquad \omega_{0,1} = z^{r-1}(1-r z^r)\ dz, \qquad \omega_{0,2} = \frac{dz_1 dz_2}{(z_1-z_2)^2}.
 \end{equation}
 Expanding at $x=0$ gives orbifold Hurwitz numbers, while expanding at the ramification points give integrals over $\overline{\mathcal{M}}_{g,n}$. The relation between the two expansions yield the Johnson-Pandharipande-Tseng formula \cite{JPT08}, which is the orbifold generalization of ELSV.
 
 One could study  $r$-spin Hurwitz numbers (also called $r$-completed cycles Hurwitz numbers) \cite{SSZ13,DKPS19}; the spectral curve is
   \begin{equation}
 \mathcal{S} = \mathbb{C}, \qquad x = z e^{-z^r}, \qquad \omega_{0,1} = (1-r z^r)\ dz, \qquad \omega_{0,2} = \frac{dz_1 dz_2}{(z_1-z_2)^2}.
 \end{equation}
 Relating the two expansions proves a formula (called the $r$-ELSV formula) relating $r$-spin Hurwitz numbers in terms of intersection numbers over the moduli space of curves with $r$-spin structures \cite{DKPS19} , which was originally conjectured by Zvonkine \cite{Zv06}.
 
  Once can also combine the last two examples to compute $q$-orbifold $r$-spin Hurwitz numbers, which results in a new $qr$-ELSV formula \cite{DKPS19}.
 
 \begin{remark}
 The $r$-spin example is in fact interesting as well because one can extend topological recursion to include contributions from the exponential singularity at $\infty$. In other words, if we consider instead the spectral curve 
    \begin{equation}
 \mathcal{S} = \mathbb{P}^1, \qquad x = z e^{-z^r}, \qquad \omega_{0,1} = (1-r z^r)\ dz, \qquad \omega_{0,2} = \frac{dz_1 dz_2}{(z_1-z_2)^2},
 \end{equation}
 with the only difference being that we include the exponential singularity in the domain of the spectral curve, then the suitably extended topological recursion produces different correlators. It turns out that expansion of these correlators at $x=0$ generates another type of Hurwitz numbers, known as Atlantes Hurwitz numbers \cite{BKW23}. Atlantes Hurwitz numbers were originally introduced in \cite{ALS16}.
 \end{remark}
 
 The connection between topological recursion and Hurwitz numbers can be extended much further and formulated in a very general setting. We refer the interested reader to \cite{ACEH20,BDKS20,BDKS20b}, which in particular highlights the relations with KP tau-functions.

\subsubsection{Gromov-Witten theory of $\mathbb{P}^1$}

Gromov-Witten theory is a natural generalization of intersection theory over $\overline{\mathcal{M}}_{g,n}$, where instead of working with the moduli space of curves, we work with the moduli space of stable maps to a given target space. It turns out that topological recursion also has a lot to say about Gromov-Witten theory!

Consider the spectral curve $\mathcal{S} = (\Sigma, x, \omega_{0,1}, \omega_{0,2})$, with 
 \begin{equation}
 \mathcal{S} = \mathbb{C}^*, \qquad x = z + \frac{1}{z}, \qquad \omega_{0,1} = \log(z) \frac{z^2-1}{z^2}\ dz, \qquad \omega_{0,2} = \frac{dz_1 dz_2}{(z_1-z_2)^2}.
 \end{equation}
 This curve has two simple ramification points at $z = \pm 1$, both of which are of Airy-type. As usual, the coefficients of the expansion of the correlators $\omega_{g,n}$ can be written as integrals over $\overline{\mathcal{M}}_{g,n}$.
 
However, it was shown that the coefficients of the expansion of the correlators at $x = \infty$ are stationary Gromov-Witten invariants of $\mathbb{P}^1$ \cite{NS11,DOSS12}. In this case, the relation between the two expansions essentially reproduces Givental's reconstruction for the semisimple cohomological field theory \cite{DOSS12}. 

This example is in fact a simple example in a large class of examples; as shown in \cite{DOSS12}, any semisimple cohomological field theory can be reformulated as topological recursion on simple spectral curves with ramification points of Airy type.

\subsubsection{ Gromov-Witten theory of toric Calabi-Yau threefolds}

Another large class of examples computed by topological recursion is Gromov-Witten theory of toric Calabi-Yau threefolds. In this case, topological recursion can be understood as an example of mirror symmetry. Without getting into the details of the statement and the proof, let me simply sketch the idea. I will necessarily sweep many details under the rug here.

 Let $X$ be a (non-compact) toric Calabi-Yau threefold. Gromov-Witten theory of $X$ is a mathematical formulation of A-model topological string theory on $X$. The mirror theory is B-model topological string theory on the mirror Calabi-Yau threefold $Y$. When $X$ is a toric Calabi-Yau threefold, the mirror is given by the equation
\begin{equation}
Y = \{ u v = P(x,y) \} \subset \mathbb{C}^2 \times (\mathbb{C}^*)^2,
\end{equation}
where $P(x,y)$ is a polynomial that can be read off from the toric data of $X$. $Y$ is a conic bundle over $(\mathbb{C}^*)^2$, where the conic fiber degenerates  to two lines over the curve
\begin{equation}
\{ P(x,y) = 0 \} \subset (\mathbb{C}^*)^2.
\end{equation}
It turns out that the geometry of the B-model basically boils down to the geometry of this discriminant curve, and this  is essentially the spectral curve of the problem. There is a nice way to understand the topology of this curve; one starts with the toric graph of $X$, and fattens it to get the mirror curve. This is exemplified for $X = \mathcal{O}_{\mathbb{P}^2}(-3) $ in Figure \ref{f:p2}.

\begin{center}
\begin{figure}[H]
\includegraphics[width=0.3\textwidth]{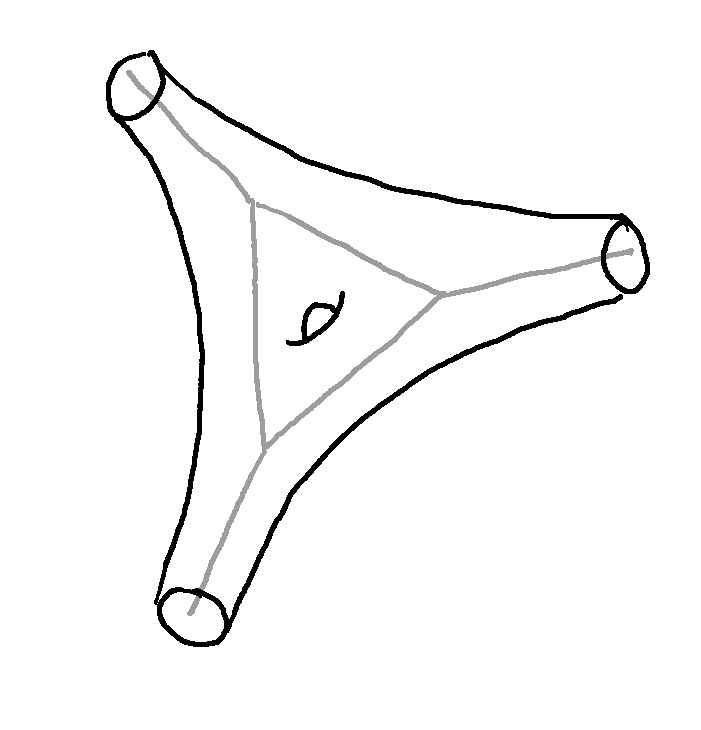}
\caption{The toric diagram and mirror curve for $X = \mathcal{O}_{\mathbb{P}^2}(-3) $.}
\label{f:p2}
\end{figure}
\end{center}

More precisely, since $x,y$ are $\mathbb{C}^*$-coordinates, from the point of view of topological recursion one may think of the curve as
\begin{equation}
C = \{ P(e^x,e^y) = 0 \} \subset (\mathbb{C})^2.
\end{equation}
We then parametrize the curve to get two functions $x$ and $y$ living on a Riemann surface. It turns out that the resulting spectral curves have simple ramification points of Airy-type, and we  proceed with topological recursion to construct correlators $\omega_{g,n}$. We can think of this process as the B-model mirror to A-model topological string theory (that is, Gromov-Witten theory) on $X$.

The idea then is to expand the correlators $\omega_{g,n}$ near a puncture of $C$ (which is a point that one needs to add to $C$ to compactify the curve). The coefficients of this expansion should be related by the so-called ``mirror map'' to open Gromov-Witten invariants of the mirror target space $X$ with a particular choice of Lagrangian brane (so-called ``Aganagic-Vafa branes''). In this context, open Gromov-Witten invariants can be understood as counts of maps from open Riemann surfaces to the target space with the boundaries mapped to the Lagrangian subspace (the ``brane''). Furthermore, the free energies $F_g = \omega_{g,0}$ should be mapped to the closed (that is, the usual) Gromov-Witten invariants of $X$. This claim was originally conjectured in \cite{BKMP07}, and proved in \cite{EO12,FLZ13,FLZ16}.

As the spectral curve $C$ has simple ramification points of Airy-type, as we know there is another expansion at these points whose coefficients are intersection numbers over $\overline{\mathcal{M}}_{g,n}$. What is the relation between these two expansions? It turns out that one can think of it as the B-model mirror to localization in Gromov-Witten theory. Indeed, for Gromov-Witten theory on a toric Calabi-Yau threefold, it turns out that the integrals over the moduli space of stable maps to the target space localize with respect to the torus action and can be related to integrals over $\overline{\mathcal{M}}_{g,n}$. This is what gives rise to the so-called ``topological vertex'' \cite{AKMV03,LLLZ04,Li08}. You will hear more about this in another lecture series at this school \cite{Li24}.

In the end, this gives rise to a very nice picture. On the A-model side, the torus fixed points are the vertices of the toric graph. There is a one-to-one map between these torus fixed points and the ramification points of the mirror curve. As a result, one can think of the local expansion of the correlators $\omega_{g,n}$ at the ramification points as the natural localization process mirror to localization of Gromov-Witten theory on $X$! This is essentially the idea that is made rigorous in the proofs of \cite{EO12,FLZ13,FLZ16} by comparing the graphical decomposition of topological recursion to the graphical reconstruction of Gromov-Witten invariants of $X$ via localization.

\subsection{Quantum curves}

Another interesting connection between topological recursion and the rest of the world is what I would call the ``topological recursion/quantum curve correspondence'' (or TR/QC correspondence). The intuition is that topological recursion should provide a procedure for quantizating the spectral curve. The origin of the statement comes from Hermitian matrix models, where it  is related to the relation between expectation values of products of traces of matrices and the expectation value of the determinant \cite{BE09}. As is common in topological recursion, the intuition comes from matrix models, but the statement is now formulated abstractly without any reference to matrix models.

The correspondence can be stated as follows. We will focus on spectral curves that are constructed from an algebraic curve (see the paragraph around \eqref{eq:alg}):
\begin{equation}
C=\{P(x,y) = 0 \} \subset \mathbb{C}^2.
\end{equation}
We will call these spectral curves ``algebraic''. Let $d$ be the degree of $P$ in $y$. Note however that the TR/QC correspondence could be stated more generally for spectral curves that do not come from algebraic curves.

Apply topological recursion to $C$ (with the usual fundamental bidifferential on the normalization of $C$) to construct correlators $\omega_{g,n}$. Out of those, one can construct the \emph{wave-function}:
\begin{equation}\label{eq:wf}
\psi(z) = \exp \left(\sum_{g \in \mathbb{N}, n \in \mathbb{N}^*} \frac{\hbar^{2g-2+n}}{n!} \left(\int^z_\alpha \cdots \int^z_\alpha \omega_{g,n} - \delta_{g,0} \delta_{n,2} \frac{dx(z_1) dx(z_2)}{(x(z_1)-x(z_2))^2} \right) \right),
\end{equation}
where $\alpha$ is a base point on the normalization of $C$ (that is not a ramification point of $x$) -- it is usually taken to be a pole of $x$. Here we are integrating the correlators $\omega_{g,n}$ in all variables from $\alpha$ to the same variable $z$. 

To state the TR/QC correspondence, we introduce the notion of a quantum curve.

\begin{definition}\label{d:QC}
A \emph{quantum curve} $\hat{P}$ of an algebraic spectral curve $C = \{P(x,y) = 0\} \subset \mathbb{C}^2$ is an order $d$ linear differential operator in $x$, such that, after normal ordering, it takes the form
\begin{equation}
\hat{P}\left(  x, \hbar \frac{d}{dx}; \hbar \right) = P\left(  x,\hbar \frac{d}{dx}\right) + \sum_{n \geq 1} \hbar^n P_n\left(x,\hbar \frac{d}{dx}\right),
\end{equation}
where the leading term $P$ is the original polynomial defining the spectral curve, and the $P_n$ are (normal-ordered) polynomials of degree $< d$. We usually impose that only finitely many correction terms $P_n$ are non-vanishing.
\end{definition}

This is a quantization of the spectral curve, as it amounts to replacing $(x,y) \mapsto \left( x, \hbar \frac{d}{dx} \right)$. But of course, this process is not unique, since the operators $x$ and $\hbar \frac{d}{dx}$ do not commute, and hence the quantization may include $\hbar$ corrections.

The claim of the TR/QC correspondence is the following:

\begin{claim}[TR/QC correspondence]
Let $C = \{P(x,y) = 0\} \subset \mathbb{C}^2$ be a spectral curve, and construct the wave-function $\psi(z)$ as in \eqref{eq:wf} from the correlators obtained by topological recursion. Then there exists a quantum curve $\hat{P}$ such that
\begin{equation}
\hat{P} \psi = 0.
\end{equation}
\end{claim}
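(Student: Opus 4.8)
The plan is to recast the statement as a WKB problem and solve it by combining the analytic structure of the correlators (poles only at ramification points, with order bounded by $6g-4+2n$) with the loop equations, which for an algebraic spectral curve $C=\{P(x,y)=0\}$ are precisely the ``shadow'' of the defining equation $P(x,y)=0$. Writing the regularized wave function as $\psi(z)=\exp\big(\sum_{g,n}\tfrac{\hbar^{2g-2+n}}{n!}\int^z\!\cdots\!\int^z\widehat\omega_{g,n}\big)$ with $\widehat\omega_{0,2}=\omega_{0,2}-\tfrac{dx\,dx}{(x-x)^2}$, one has the semiclassical form $\psi=\exp\big(\tfrac1\hbar\int^z y\,dx+O(\hbar^0)\big)$, so that $\hbar\tfrac{d}{dx}$ acts on $\psi$ by multiplication by $y+O(\hbar)$. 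The goal is then to produce an order-$d$ operator $\hat P(x,\hbar\tfrac{d}{dx};\hbar)$ annihilating $\psi$ whose principal symbol is $P$.

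First I would establish the ``shift formulas'': differentiating $\log\psi$ in $x$ (equivalently, inserting an extra point and integrating it) expresses $\tfrac{(\hbar\,d/dx)^k\psi}{\psi}$, for each $k$, as a universal polynomial in the integrated correlators evaluated at $z$, with coefficients built from $\omega_{0,1}$ and $\omega_{0,2}$. Let $z^{(1)}(x),\dots,z^{(d)}(x)$ be the $d$ local preimages of $x$ under the branched cover $x:\Sigma\to\mathbb P^1$, generically distinct, and set $\psi_j(x):=\psi(z^{(j)}(x))$. The key step is then to invoke the loop equations on $C$: they assert that the elementary symmetric functions of the quantities $Y_j:=\tfrac{\hbar\,d/dx\,\psi_j}{\psi_j}$ (which at leading order are the $d$ branches $y(z^{(j)})$, whose symmetric functions are, up to sign, the coefficients $p_k(x)$ of $P(x,y)=\sum_{k=0}^d p_k(x)y^k$) are rational functions of $x$ with controlled poles. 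Consequently $\mathrm{span}(\psi_1,\dots,\psi_d)$ is stable under $\hbar\tfrac{d}{dx}$ modulo the field $\mathbb C(x)[[\hbar]]$; since the $\psi_j$ have distinct WKB exponentials they are linearly independent, hence form a basis of solutions of a scalar order-$d$ linear ODE, and I would define $\hat P$ by $\hat P\psi=\mathrm{Wr}(\psi_1,\dots,\psi_d,\psi)/\mathrm{Wr}(\psi_1,\dots,\psi_d)$, so that $\hat P\psi_j=0$ for all $j$ and in particular $\hat P\psi=0$.

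It then remains to check that $\hat P$ is a quantum curve in the sense of Definition \ref{d:QC}. Taking the limit $\hbar\to0$ in $\hat P$ and using $\hbar\tfrac{d}{dx}\mapsto y$ identifies the leading symbol, after normal ordering the $x$'s to the left, with $P(x,y)$; the subleading terms $\hbar^nP_n(x,\hbar\tfrac{d}{dx})$ are then operators of order $<d$ with coefficients rational in $x$ by the previous step. To see that these coefficients are in fact polynomials in $x$ and that only finitely many $P_n$ survive, I would combine: (i) that the $\omega_{g,n}$ have poles only on $\Ra$ with order bounded by $6g-4+2n$, so only finitely many powers of $\hbar$ can produce singularities away from $\Ra$; (ii) the behavior of $\psi$ near the poles of $x$ (the base point $\alpha$ is taken there), where the WKB form forces at most polynomial growth of the coefficient functions; and (iii) the loop equations again, which remove the a priori poles of the coefficients of $\hat P$ at the branch points.

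The hard part will be the loop-equation step together with this pole bookkeeping: rigorously proving that the wave functions on the various sheets close into a single scalar ODE with rational coefficients requires careful use of the quadratic (and, for higher ramification, higher) loop equations, and showing that the $\hbar$-corrections are polynomial and terminate is genuinely delicate — this is exactly where admissibility and the precise pole orders enter, and also where subtleties appear for higher-genus curves, since then $\psi$ is only locally defined and carries nontrivial monodromy. This is why the statement is phrased as a claim rather than a theorem; for genus-zero algebraic curves the argument above goes through cleanly, and the main examples (Airy, Bessel, the $(r,s)$ curves, and the curves behind Hurwitz and Gromov--Witten theory) are of this type.
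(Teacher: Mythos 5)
First, a point of orientation: the paper does not actually prove this statement. It is deliberately presented as a \emph{Claim} rather than a theorem, qualified immediately afterwards by the choice of base point, the restriction to genus zero (otherwise the wave-function \eqref{eq:wf} must be replaced by its non-perturbative version with theta functions), and the remark that a first-order system is the more natural formulation; the cases in which it is actually known are delegated to \cite{BE17,EGMO21}, which cover a restricted class of genus-zero curves and arbitrary-genus curves with only simple ramification. So your proposal must be judged as an attempt at a proof that the paper itself does not supply. Your overall strategy --- pass to the $d$ sheets $\psi_1,\dots,\psi_d$, use loop equations to show that the fibre-wise data is rational in $x$, and extract $\hat P$ as a Wronskian quotient --- is indeed the strategy of the literature, and is essentially the determinantal, first-order-system approach alluded to in the third bullet point following the Claim.

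However, the central step as you state it has a genuine gap. The loop equations of Definition \ref{d:loop} are \emph{local} statements at each ramification point about two-point combinations $\omega_{g,n}(z,\dots)+\omega_{g,n}(\sigma_a(z),\dots)$; they do not assert that the elementary symmetric functions of $Y_j=\hbar\,\partial_x\log\psi_j$ are rational in $x$. To get that, one needs the global form of the loop equations over the full fibre $x^{-1}(x)$ (all $d$ symmetric combinations, not just the linear and quadratic ones), and, more seriously, the $e_k(Y_1,\dots,Y_d)$ expand into products of \emph{integrated} correlators evaluated on different sheets together with the diagonal regularization of $\omega_{0,2}$; proving that these combinations are rational with controlled poles is exactly the hard technical content of \cite{BE17,EGMO21} and does not follow from the local linear and quadratic loop equations quoted in the paper. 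Moreover, rationality of the $e_k(Y_j)$ alone does not give closure of $\mathrm{span}(\psi_1,\dots,\psi_d)$ under $\hbar\,d/dx$ into a scalar ODE with rational coefficients: for the Wronskian quotient to have rational coefficients you need rationality of symmetric functions of \emph{higher} derivatives of the $\psi_j$, which is an additional layer of the same analysis. Finally, the assertion that ``for genus-zero algebraic curves the argument goes through cleanly'' overstates what is known: even in genus zero, the requirement in Definition \ref{d:QC} that only finitely many corrections $P_n$ survive is established only for the restricted class of \cite{BE17} (genus-zero curves whose Newton polygon has no interior points and which are smooth as affine curves), and can fail outside it. So the proposal is a correct road map but not a proof; the steps you yourself flag as ``delicate'' are precisely where the statement ceases to be a theorem and remains, as the paper presents it, a conjecture.
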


This statement should be qualified in a number of ways:
\begin{itemize}
\item The construction of the wave-function depends on a choice of base point $\alpha$. Correspondingly, the quantum curve, if it  exists, should also depend on $\alpha$. Generally speaking, the claim is only expected to hold for $\alpha$ a pole of $x$ (that is not a ramification point, otherwise the integral of the $\omega_{g,n}$ may diverge), but other choices of base points may also lead to quantum curves \cite{BE17}.
\item As stated, the TR/QC correspondence is only expected to hold for genus 0 spectral curves. It can however be formulated for spectral curves of any genus; one simply needs to replace the wave-function \eqref{eq:wf} by its non-perturbative counterpart, which amounts to multiplying it by appropriate theta functions \cite{EM08,EGMO21}. 
\item It is probably better to state the correspondence in terms of a system of first-order linear equations (instead of a single higher-order differential equation), which is more natural from the point of view of integrability. For simplicity we decided not to go in this direction in these lecture notes, but we refer the reader to \cite{BE11,MO19,EGMO21} for more details. 
\end{itemize}

Going back to TR/QC correspondence, the statement was first proved in various papers for a number of specific spectral curves relevant to enumerative geometry (see for instance \cite{BE09,BHLM13,DDM14,DM13,DN14,DM14,DM13b,DMNPS13,LMS13,MSS13,MS12,IKT18,IKT18b,MO19} -- I most certainly forgot some here). It was also studied in the context of knot theory, as it provides a constructive approach to the AJ conjecture \cite{Ga03} -- see for instance  \cite{DFM10,BE12,AFGS12,GS11}. In the context of mirror symmetry for A-model topological string theory on toric Calabi-Yau threefolds, it should relate to the recent work of Mari\~no and collaborators (see for instance \cite{GM22,IM23} and many more papers); you will hear more about this in another lecture series at this school \cite{Ma24}. 

 Going beyond the case-by-case approach, the correspondence was proved in \cite{BE17} for a class of genus zero algebraic spectral curves with arbitrary ramification (the class corresponds to all genus zero spectral curves whose Newton polygon has no interior point and that are smooth as affine curves). More recently, it was proved in \cite{EGMO21} for all algebraic spectral curves (any genus) that only have simple ramification points. We can summarize these results in the following theorem:
\begin{theorem}[\cite{BE17,EGMO21}]
The TR/QC correspondence holds for all genus zero algebraic spectral curves in the class considered in \cite{BE17}, and for all algebraic spectral curves (any genus) that only have simple ramification points.
\end{theorem}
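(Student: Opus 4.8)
The strategy, following \cite{BE17,EGMO21}, is to promote the scalar wave-function $\psi(z)$ of \eqref{eq:wf} to a $d$-component object living over the $x$-line and to show that it satisfies a first-order linear ODE system whose ``characteristic polynomial'' is the quantum curve. Concretely, let $\Sigma$ be the normalization of $C$, let $d=\deg_y P$ be the degree of $x:\Sigma\to\mathbb{P}^1$, and for generic $x\in\mathbb{C}$ let $z^{(1)}(x),\ldots,z^{(d)}(x)$ be its preimages. Set $\psi_i(x)=\psi(z^{(i)}(x))$, with base point $\alpha$ chosen to be a pole of $x$, and assemble $\Psi=(\psi_1,\ldots,\psi_d)^{\top}$. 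The first step is to establish a relation $\hbar\,\tfrac{d}{dx}\Psi=M(x,\hbar)\,\Psi$, where $M$ is a $d\times d$ matrix whose entries are rational in $x$ with poles only at the branch points and at the poles of $x$, and which is a formal power series in $\hbar$. Eliminating $\psi_2,\ldots,\psi_d$ then produces $\hat P\,\psi_1=0$ for a scalar order-$d$ operator $\hat P$; matching the $\hbar^0$ part of $\hat P$ with the WKB leading behaviour $\exp(\tfrac1\hbar\int^z y\,dx)$ of $\psi_1$ forces the symbol of $\hat P$ to be $P(x,y)$, so that $\hat P$ is a quantum curve in the sense of Definition~\ref{d:QC}.

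\textbf{The core: loop equations.} The relation $\hbar\,\partial_x\Psi=M\Psi$ is extracted from the abstract loop equations of Section~\ref{s:loop}, or rather from their global Schwinger--Dyson form: the linear and quadratic loop equations imply that the symmetric functions of the correlators over the $d$ sheets (the sums $\sum_i\omega_{g,n}(z^{(i)},\ldots)$, the sums of products over pairs of sheets, and so on) are \emph{rational} in $x$, with poles only at $\Ra$ and at poles of $x$, of controlled order. Exponentiating and integrating these identities --- this is the Bergère--Eynard determinantal-formula computation --- converts them into functional relations between the values of $\psi$ on different sheets, and these relations furnish the entries of $M$ order by order in $\hbar$. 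One then runs an induction on the power of $\hbar$: at each order the loop equations pin down the new contribution to $M$ (equivalently, the new correction term $P_n$ of $\hat P$) in terms of the lower-order data, and a pole count using that $\omega_{g,n}$ has poles at $\Ra$ of order at most $6g-4+2n$ shows that each $P_n$ is a normal-ordered polynomial of degree $<d$ in $\hbar\,\tfrac{d}{dx}$ and that only finitely many $P_n$ are non-vanishing.

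\textbf{Genus and the main obstacle.} For genus-zero curves one has $H_1(\Sigma,\mathbb{Z})=0$, so $\int^z\omega_{g,n}$ is unambiguous and $\psi(z)$ is a genuine function on $\Sigma$; the only care needed is convergence of the integrals (hence $\alpha$ a pole of $x$) and, in the setting of \cite{BE17}, the hypothesis that the Newton polygon of $P$ has no interior lattice point, which is exactly what makes the curve rational and bounds the degrees of the differentials entering $M$ so that the induction terminates. For higher genus with only simple ramification, the naive $\psi$ acquires monodromy around $b$-cycles; following \cite{EGMO21} one replaces it by the non-perturbative wave-function (the product of $\psi$ with an appropriate ratio of Riemann theta functions), which is single-valued, and one checks that the same loop-equation induction still goes through because near each branch point the local model is the Airy spectral curve of Example~\ref{e:Airy}, so the local ``building block'' of the kernel is universal. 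I expect the genuinely hard step to be precisely this last point: proving \emph{finiteness} of the $\hbar$-expansion of $\hat P$ globally, and, in the higher-genus case, reconciling the theta-function corrections with the requirement that $\hat P$ be an honest differential operator with polynomial coefficients. It is here that the simple-ramification hypothesis does the essential work, and it is why the general higher-ramification, higher-genus case remains open.
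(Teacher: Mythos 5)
The paper does not actually prove this theorem: it is stated as a pointer to \cite{BE17,EGMO21}, and the surrounding text only records the construction of the wave-function \eqref{eq:wf}, the definition of a quantum curve (Definition~\ref{d:QC}), and the qualifications about the base point $\alpha$ and the non-perturbative (theta-function) completion in higher genus. So there is no internal argument to compare against; what can be judged is whether your sketch faithfully reconstructs the strategy of the cited works. Broadly it does: the essential inputs are indeed the globalized (sheet-symmetrized) loop equations, pole counting at $\Ra$ and at the poles of $x$, the choice of $\alpha$ at a pole of $x$, the Newton-polygon hypothesis as the mechanism that terminates the induction in \cite{BE17}, and the replacement of $\psi$ by its theta-function-corrected version in higher genus.

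There is, however, one step as written that would not go through. You assemble the sheet-values into a single column vector $\Psi=(\psi_1,\dots,\psi_d)^{\top}$, posit $\hbar\,\partial_x\Psi=M\Psi$, and then ``eliminate $\psi_2,\dots,\psi_d$'' to get a scalar order-$d$ operator annihilating $\psi_1$. A first-order system for one column vector does not determine $M$, and eliminating components of a column vector does not by itself produce a scalar ODE for the first component unless $M$ already has companion-type structure --- which is what needs to be proved. The actual mechanism in the cited works is different: the loop equations show that suitable \emph{symmetric} functions of the $d$ sheet-values --- in \cite{BE17} a tower of partially symmetrized quantities built from sums over $k$-element subsets of the fibre $x^{-1}(x)$, in \cite{EGMO21} the Wronskian-type minors of the $d\times d$ matrix formed from $d$ independent solutions --- are single-valued, hence rational, functions of $x$ with controlled poles. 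It is precisely the invariance of these combinations under the deck monodromy of $x:\Sigma\to\mathbb{P}^1$ that makes the coefficients of $\hat P$ honest rational functions of $x$; the scalar operator is then read off from these symmetric combinations rather than obtained by Gaussian elimination on a vector of sheet-values. Relatedly, the assertion in your ``core'' paragraph that the pole bound $6g-4+2n$ alone forces finiteness of the $\hbar$-expansion is too quick --- you correctly relocate the burden to the Newton-polygon condition (genus zero) and to the non-perturbative structure (higher genus) in your final paragraph, and that is where it belongs. With the symmetrization step repaired, your outline matches the proofs of \cite{BE17,EGMO21}.
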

As a generic spectral curve only has simple ramification points, and in principle spectral curves with higher ramification can be obtained as limit points in families of curves with only simple ramification (see \cite{BBCKS23}), the correspondence is expected to hold in full generality for all algebraic spectral curves.

To summarize, what this correspondence says is that topological recursion can be used to reconstruct the WKB asymptotic solution to a quantum differential operator; as such, it can be understood as a procedure to quantize the spectral curve. But this opens the way to more questions: once we have an asymptotic solution, we can ask whether we can ``resum'' the solution to reconstruct actual non-perturbative solutions to the differential equation. This is the idea of Borel resummation and resurgence, which is a very active research area. Can topological recursion be used to go beyond the perturbative, asymptotic solution to the quantum curve?

Some work in this direction appeared in \cite{IK21,IM23}. Very recently, \cite{EGGGL23} also obtained new results in this direction, by studying the large genus asymptotics of the solution of the quantum curves for the Airy and Bessel spectral curves (in other words, they are calculating large genus asymptotics for intersection numbers of $\psi$-classes with and without the Norbury class). They also obtain large genus asymptotics for the $r$-spin intersection numbers (the quantum curve for the $s=r+1$ spectral curve). It would be very interesting to continue in this direction and study large genus asymptotics and Borel resummations for quantum curves obtained via topological recursion in general.

{\setlength\emergencystretch{\hsize}\hbadness=10000}

\printbibliography

\end{document}